%&latex
\documentclass[11pt]{article}
\pdfoutput=1
\pdfminorversion=4
\usepackage{amsmath, amssymb, amsfonts, amsthm}
\usepackage{graphicx,psfrag,epsf}
\usepackage{enumerate}
\usepackage{natbib}
\usepackage{url} % not crucial - just used below for the URL 
\usepackage{fullpage}
\usepackage{xspace}
\usepackage{booktabs}
\usepackage{mathtools}
\usepackage{tabularx} % more modern
\usepackage{hyperref}
\usepackage{multirow}
\usepackage{algorithm}
\usepackage{algorithmic}
\usepackage{subcaption}
\usepackage{color}
\usepackage{booktabs}
\usepackage{bm, stmaryrd}
\usepackage{comment}
\usepackage{verbatim}

\newtheorem{definition}{Definition}
\newtheorem{theorem}{Theorem}

\newtheorem{assumption}{Assumption}
\newtheorem{lemma}{Lemma}

\newtheorem{proposition}{Proposition}
\newtheorem{remark}{Remark}
%\newtheorem{proof}{Proof}

%% Macros and newcommands
\newcommand{\sdp}[1]{\ensuremath{X_{#1}}}
\newcommand{\bkin}{\ensuremath{\beta_k^{(in)}}\xspace}
\newcommand{\bkout}{\ensuremath{\beta_k^{(out)}}\xspace}
\newcommand{\km}{k-means\xspace}

\newcommand{\bfone}{\textbf{1}}

\newcommand{\cF}{\mathcal{F}}

\newcommand{\cM}{\mathcal{M}}

\newcommand{\cS}{\mathcal{S}}
\newcommand{\Z}{Z}

\newcommand{\bE}{\mathbb{E}}
\newcommand{\bR}{\mathbb{R}}

\newcommand{\bC}{\mathbb{C}}

\newcommand{\lamn}{\lambda_n}
\newcommand{\lamz}{\lambda_0}
\newcommand{\eig}{\theta}

\newcommand{\er}{Erd\H{o}s-R\'{e}nyi}
\renewcommand{\dim}{d}

\newcommand{\tr}{\text{trace}}

\newcommand{\diag}{\text{diag}}
\newcommand{\xhat}{\hat{X}}
\newcommand{\xh}{\hat{X}}

\newcommand{\rd}{\color{red}}
\newcommand{\bk}{\color{black}}

     \DeclareMathOperator*{\argmax}{arg\,max}
     
\newcommand{\pmax}{p_{\max}}
\newcommand{\pmin}{p_{\min}}    
\newcommand{\mmin}{m_{\min}} 
\newcommand{\mmax}{m_{\max}}
\newcommand{\dmin}{d_{\min}}

\newcommand{\cl}{r}
\newcommand{\rom}[1]{%
  \textup{\uppercase\expandafter{\romannumeral#1}}%
}

\newcommand{\opnorm}[1]{\|#1\|_{\text{op}}}
\newcommand{\llnorm}[1]{\|#1\|_{\ell_{\infty}\to\ell_1}}
\newcommand{\innerprod}[2]{\langle {#1}, {#2}\rangle}

\newcommand{\vertiii}[1]{{\left\vert\kern-0.25ex\left\vert\kern-0.25ex\left\vert #1 
    \right\vert\kern-0.25ex\right\vert\kern-0.25ex\right\vert}}
    
\newcommand{\bas}[1]{\begin{align*}#1\end{align*}}
\newcommand{\ba}[1]{\begin{align}#1\end{align}}
\newcommand{\beq}[1]{\begin{equation}#1\end{equation}}
\newcommand{\bsplt}[1]{\begin{split}#1\end{split}}

\definecolor{alizarin}{rgb}{0.82, 0.1, 0.26}

\begin{document}
\title{\bf  Covariate Regularized Community Detection \\
in Sparse Graphs}
  \author{Bowei Yan and 
    Purnamrita Sarkar \\
    University of Texas at Austin}
    \date{}
  \maketitle

\begin{abstract}
 In this paper, we investigate community detection in networks in the presence of node covariates.  In many instances, covariates and networks individually only give a partial view of the cluster structure. One needs to jointly infer the full cluster structure by considering both. In statistics, an emerging body of work has been focused on  combining information from both the edges in the network and the node covariates to infer community memberships. However, so far the theoretical guarantees have been established in the dense regime, where the network can lead to perfect clustering under a broad parameter regime,  and hence the role of covariates is often not clear.   In this paper, we examine sparse networks in conjunction with finite dimensional sub-gaussian mixtures as covariates under moderate separation conditions.  In this setting each individual source can only cluster a non-vanishing fraction of nodes correctly.  We propose a simple optimization framework which provably improves clustering accuracy when the two sources carry partial information about the cluster memberships, and hence perform poorly on their own. Our optimization problem can be solved using scalable convex optimization algorithms. Using a variety of simulated and real data examples, we show that the proposed method outperforms other existing methodology.
\end{abstract}

\noindent%
{\it Keywords:}  stochastic block models, kernel method, semidefinite programming,  sub-gaussian mixture, asymptotic analysis

\section{Introduction}
% Introduction
\label{sec:intro}
Community detection in networks is a fundamental problem in machine learning and statistics. A variety of important practical problems like analyzing socio-political ties among leading politicians~\citep{gil1996political}, understanding brain graphs arising from diffusion MRI data~\citep{binkiewicz2014covariate}, investigating ecological relationships between different tiers of the food chain~\citep{jacob2011role}  can be framed as community detection problems. Much attention has been focused on developing models and methodology to recover latent community memberships. Among generative models, the stochastic block model \citep{holland1983stochastic} and its variants (\cite{airoldi2009mixed} etc.) have attracted a lot of attention, since their simplicity facilitates efficient algorithms and asymptotic analysis~\citep{rohe2011spectral,amini2013pseudo,chen2014statistical}.

Although most real world network datasets come with covariate information associated with nodes, existing approaches are primarily focused on using the network for inferring the hidden community memberships or labels. 
\begin{comment}Recently there have been some works towards joint inference which aim at combining the information from the network with that from the covariates \citep{binkiewicz2014covariate, gunnemann2013spectral, zhang2015community}. Theoretically the main challenge is to separate the effect of the covariates from that of the network. As we will describe in detail, in a broad regime of parameters, many methods which only use networks can yield asymptotically consistent estimates of the unknown labels. In this respect, the balance is rather delicate because, either the network has enough signal to infer the hidden labels, or the covariates do. To our knowledge, existing theoretical works do not make a clear distinction between the information from the two sources. In contrast, we show that when the signals from the two sources are in some sense orthogonal to each other, under a broad parameter regime, the two sources can be combined to yield good clustering accuracy despite the individual failure of either method. 
\end{comment}
 Take for example the Mexican political elites network (described in detail in Section \ref{sec:exp}). This dataset comprises of  35 politicians  (military or civilian)  and their connections. The associated covariate for each politician is the year when one came into power. After the military coup in 1913, the political arena was dominated by the military. In 1946, the first civilian president since the coup was elected. Hence those who came into power later are more likely to be civilians. Politicians who have similar number of connections to the military and civilian groups are hard to classify from the network alone. Here the temporal covariate is crucial in resolving which group they belong to. On the other hand, politicians who came into power around 1940's, are ambiguous to classify using covariates. Hence the number of connections to the two groups in the network helps in classifying these nodes. Our method can successfully classify these politicians and has higher classification accuracy than existing methods~\citep{binkiewicz2014covariate,  zhang2015community}.

In Statistics literature, there has been some interesting work on combining covariates and dense networks (average degree growing faster than logarithm of the number of nodes). %The notable ones include~\citep{binkiewicz2014covariate} and~\citep{zhang2015community}, where the authors show consistency under strong separability conditions.
In~\cite{binkiewicz2014covariate}, the authors present assortative covariate-assisted spectral clustering (ACASC) where one does Spectral Clustering on the the gram matrix of the covariates plus the regularized graph Laplacian weighted by a tuning parameter. A joint criterion for community detection (JCDC) with covariates is proposed by~\cite{zhang2015community}, which could be seen as a covariate reweighted Newman-Girvan modularity. This approach enables learning different influence on each covariate. In concurrent work~\citet{weng2016community} provide a variational approach for community detection.

All of the above works are carried out in the dense regime with strong separability conditions on the linkage probabilities. ACASC also requires the number of dimensions of covariates to grow with the number of nodes for establishing consistency.

In contrast, we prove our result for sparse graphs where the average degree is constant and the the covariates are finite dimensional sub-gaussian mixtures with moderate separability conditions. In our setting, neither source can yield consistent clustering in the limit. We show that combining the two sources leads to improved upper bounds on clustering accuracy under weaker conditions on separability on each individual source.

Leveraging information from multiple sources have been long studied in Machine learning and Data mining under the general envelop of multi-view clustering methods.  %The authors show that combining the two requires weaker separation conditions. However, the views do not have any network data, only node covariates, and in order to ensure perfect clustering the separation condition requires the separation to grow logarithmically with $n$.%~\citet{Abernethy2010} provide a classification framework for detecting spam websites from both data from webpages and hyperlink connectivity structure by enforcing the labels to be smooth along the network, i.e. nodes connected to each other are more likely to have the same label. However, no theoretical guaranteed are provided.
~\citet{kumar2011co} use a regularization framework so that the clustering adheres to the dissimilarity of clustering from each view.
~\citet{liu2013multi} optimize the nonnegative matrix factorization loss function on each view, plus a regularization forcing the factors from each view to be close to each other.  The only provable method is by~\cite{chaudhuri2009multi}, where the authors obtain guarantees where the two views are mixtures of Log-concave distributions. This algorithm does not apply to networks. 

In this paper, we propose a penalized optimization framework for community detection when node covariates are present. We take the sparse degree regime of Stochastic Blockmodels, where one can only correctly cluster a non-vanishing fraction of nodes. Similarly, for covariates, we assume that the covariates are generated from a finite dimensional sub-gaussian mixture with moderate separability conditions. 
We prove that our method leads to an improved clustering accuracy under weaker conditions on the separation between clusters from each source. As byproducts of our theoretical analysis we obtain new asymptotic results for sparse networks under weak separability conditions and kernel clustering of finite dimensional mixture of sub-gaussians.  Using a variety of real world and simulated data examples, we show that our method often outperforms existing methods. Using simulations, we also illustrate that when the two sources only have partial and in some sense orthogonal information about the clusterings, combining them leads to better clustering than using the individual sources. 

In Section~\ref{sec:setup}, we introduce relevant notation and present our optimization framework.  In Section~\ref{sec:kernel},  we present our main results, followed by experimental results on simulations and real world networks in Section~\ref{sec:exp}. Majority of the proofs are presented in the appendix, with details deferred to the supplementary.

\section{Problem Setup}
\label{sec:setup}

In this section, we introduce our model and set up the convex relaxation framework. For clarity, we list all definitions and notations that will be used later in Table~\ref{tab:notations}.

Assume $(C_1,\cdots, C_{\cl})$ represent a $\cl$-partition for $n$ nodes $\{1,\cdots,n\}$. Let $m_i=|C_i|$ be the size of cluster $i$, and let  $m_{\min}$ and $m_{\max}$ be the minimum and maximum cluster sizes respectively. We use $\pi_i:=\frac{m_i}{n}$, $\pi_{\min}=\frac{m_{\min}}{n}$ and $\alpha=\mmax/\mmin$.
We denote by $A$ the $n\times n$ binary adjacency matrix and by $Y$ the $n\times \dim$ matrix of $\dim$ dimensional covariates. The generation of $A$ and $Y$ share the true and unknown membership matrix $\Z=\{0,1\}^{n\times \cl}$. We define the graph model as: % for $i\neq j$, %$ a,b\in [\cl]$,
\begin{align}
\label{eq:data-model}
\mbox{(Graph Model)}\qquad P(A_{ij}=1|Z) = Z_i^TBZ_j  \quad\mbox{For $i\ne j$}
\end{align}
 $B$ is a $\cl \times \cl$ matrix of within and across cluster connection probabilities. Furthermore $A_{ii}=0,\forall i\in[n]$.  We consider the sparse regime where $n\max_{k\ell} B_{k\ell}$ is a constant and hence average expected degree is also a constant w.r.t $n$.% in the sparse regime, and $\Omega  (\log n/n)$ in the dense regime. 
~\citet{amini2014semidefinite} define two different classes of block models in terms of separability properties of $B$. We state this below. 
\begin{definition}
	A stochastic block model is called \textit{strongly assortative} if $\min_k B_{kk}>\max_{k\ne \ell} B_{k\ell}$.  
	It is called \textit{weakly assortative} if $\forall k\ne \ell,\ B_{kk}>B_{k\ell}.$ 
\end{definition}
This distinction is important because the weakly assortative class of blockmodels is a superset of strongly assortative models, and most of the analysis are done in the stronger setting. 
To our knowledge, there has not been any work on weakly assortative blockmodels in the sparse setting. 
For the covariates, we define,
\begin{equation}
\mbox{(Covariate Model)}\qquad Y_i = \sum_{a=1}^r Z_{ia} \mu_{a}+  W_i
\label{eq:data-model-high}
\end{equation}
$W_i$ are mean zero $\dim$ dimensional sub-gaussian vectors with spherical covariance matrices $\sigma_k^2 I_\dim$  and sub-gaussian norm $\psi_k$ (for $i\in C_k$). 
Standard definitions of sub-gaussian random variables (for more detail see~\citet{vershynin2010introduction}) are provided in the Supplementary material.
We define
the distance between clusters $C_k$ and $C_\ell$ as
$ d_{k\ell}=\|\mu_k-\mu_\ell\|$ and the separation as $\dmin=\min_{k\ne \ell} d_{k\ell}.$

\begin{table}[H]
\centering
\footnotesize
\begin{tabular}{|l|l|l|}
\hline
Notation & Mathematical Definition & Explanation\\
\hline
$n,d$ 	& 	& Number of nodes, dimensionality of covariates\\
$I_d$	&	& identity matrix of size $d\times d$\\
$\mbox{diag}(v_1,\dots, v_k)\in \bR^{k\times k}$&& Diagonal matrix with diagonal $(v_1,\dots, v_k)$\\
$\cl$ & $\Theta(1)$ & Number of clusters \\
$B\in [0,1]^{\cl\times\cl}$ &$\Theta(1/n)$&Symmetric Probability matrix in SBM\\
$Z\in \{0,1\}^{n\times \cl}$ & & Latent class memberships\\
$m_i$&$\sum_j Z(j,i)$&Number of points in $i$th cluster\\
$\pi_i$&$\frac{m_i}{n}$& Proportion of points in $i$th cluster\\
$m_{\max}$&$\max_k m_k$, $\Theta(n)$ & Largest cluster size\\
$m_{\min}$&$\min_k m_k$, $\Theta(n)$ & Smallest cluster size\\
$\alpha$&$m_{\max}/\mmin$, $\Theta(1)$ & Ratio between largest and smallest clusters\\
$C_k$ &$\{j:Z(j,i)=1\}$& Point set for $k$th cluster\\
$X_0\in \bR^{n\times n}$ & $Z\mbox{diag}(1/m_1,\dots,1/m_\cl)Z^T$& Ground truth clustering matrix\\
$a_k,b_k=\Theta(1)$		&$a_k=nB_{kk}, b_k=n\max_{\ell\ne k}B_{k\ell}$	& Rescaled probabilities\\
$g\in \bR$& $\frac{2}{n-1}\sum_{i<j} Var(A_{ij})$, $\Theta(1)$\bk & Average variance of Graph edges\\
%$K_G$ & $\le 1.783$ & Grothendieck's constant\\
$\mu_k, \sigma_kI_d$& & Mean, covariance matrix for $Y_i$ if $i\in C_k$\\
$\psi_k$& & subgaussian norm for $Y_i$ if $i\in C_k$\\
%$A\in \{0,1\}^{n\times n}$ & $A_{ij}|i\in C_k,j\in C_\ell \sim Ber(B_{k\ell})$ & Adjacency matrix (Symmetric)\\
%$Y_i\in \bR^d$&&Covariate observation for $i$th point\\
%$f(x):\bR_+\rightarrow [0,1] $&$\exp(-\eta x)$ & Kernel function\\
%$K\in [0,1]^{n\times n}$ & $K(i,j)=f(\|Y_i-Y_j\|_2^2)$ & Kernel matrix, symmetric and positive definite\\
$d_{k\ell}$ & $\|\mu_k-\mu_{\ell}\|$& Distance between cluster centers for the covariates\\
$K_I$ &Eq.~\eqref{eq:K_I}& Reference matrix for the kernel \\
$\nu_k$ &Eq.~\eqref{eq:kernel-lodim-sep}& Separation in $K_I$\\
%\multirow{2}{*}{$\mathcal{F}$ }& $\{X\succeq 0,\ \ 0\leq X\leq \frac{1}{m_{\min}},$ & \multirow{2}{*}{Feasible set of the SDP}\\
%&$ X\bfone_n=\bfone_n, \ \ \tr(X)=\cl\}$&\\
%$\sdp{M}$ & $\arg\max_{X}\innerprod{M}{X} \  \ s.t.\ \ X\in \mathcal{F}$ & Solution matrix of the SDP\\
%$\lambda,\ell$&$\ell=n\lambda$, $\ell=\Theta(1)$&tuning parameter between graph and covariates\\
$\gamma$& $\min_k (a_k-b_k+\lambda \nu_k)$, $\Theta(1)$ & Separation  of $ZBZ^T+\lambda K$\\
\hline
\end{tabular}
\caption{Population quantities used in the paper}
\label{tab:popnotations}
\end{table}
\paragraph{Notation}
For a matrix $M\in \bR^{n\times n}$, we use $\|M\|_F$ and $\|M\|$ to denote the Frobenius and operator norms of $M$ respectively. The $\ell_\infty$ norm is defined as:
$\|M\|_\infty=\max_{i,j} |M_{ij}|$. 
For two matrices $M, Q \in \bC^{m\times n}$, their inner product is $\innerprod{M}{Q}=\tr(M^TQ)$. %We will use $\|M\|$ to denote the operator norm of $M$.
The $\ell_\infty\to \ell_1$ norm of a matrix $M$ is defined as
$\|M\|_{\ell_\infty\to \ell_1} = \max_{\|s\|_{\infty}\le 1} \|Ms\|_1$. 
From now on we use $I_{n}$ to denote the identity matrix of size $n$, $\bfone_n$ to represent the all one $n$-vector and $E_n,E_{n,k}$ to represent the all one matrix with size $n\times n$ and $n \times k$ respectively. We use standard order notations $O,o,\Omega,\omega$, etc. For example, we use $t(n)=\Theta(1/n)$ to denote that $t(n)\times n$ is a constant w.r.t $n$. We also use $\tilde{O}$ notation to exclude multiplicative terms that are logarithmic in $n$. 
\begin{table}
	\centering
	\footnotesize
	\begin{tabular}{|l|l|l|}
		\hline
		Notation & Mathematical Definition & Explanation\\
		\hline
		%	$n$ & & Number of nodes\\
		%	$d$ & & Number of dimensions of covariates\\
		%$\bfone_n$& &all one vector of length $n$\\
		%	$E_n$&&all ones matrix of size $n\times n$\\
		%	$I_d$&& identity matrix of size $d\times d$\\
		%	$\cl$ & $\Theta(1)$ & Number of clusters \\
		%	$B\in [0,1]^{\cl\times\cl}$ &$\Theta(1/n)$&Symmetric Probability matrix in SBM\\
		%	$Z\in \{0,1\}^{n\times \cl}$ & & Latent class memberships\\
		%	$m_i$&$\sum_j Z(j,i)$&Number of points in $i$th cluster\\
		%	$\pi_i$&$\frac{m_i}{n}$& Proportion of points in $i$th cluster\\
		%	$m_{\max}$&$\max_k m_k$, $\Theta(n)$ & Largest cluster size\\
		%	$m_{\min}$&$\min_k m_k$, $\Theta(n)$ & Smallest cluster size\\
		%	$\alpha$&$m_{\max}/\mmin$, $\Theta(1)$ & Ratio between largest and smallest clusters\\
		%	$C_k$ &$\{j:Z(j,i)=1\}$& Point set for $k$th cluster\\
		%	$X_0\in \bR^{n\times n}$ & $ZZ^T$& Ground truth clustering matrix\\
		%	%$p_k,q_k$&$p_k=, q_k=\max_{\ell\ne k}B_{k\ell}$& intra and inter cluster probabilities for $k$th cluster\\
		%	$a_k,b_k$&$a_k=nB_{kk}, b_k=n\max_{\ell\ne k}B_{k\ell}$, $\Theta(1)$& Rescaled probabilities\\
		%	$g\in \bR$& $\frac{2}{n-1}\sum_{i<j} Var(A_{ij})$, $\Theta(1)$\bk & Average variance of Graph edges\\
		%	$K_G$ & $\le 1.783$ & Grothendieck's constant\\
		%	$\mu_k, \sigma_kI_d,\psi_k$& & mean, variance and subgaussian norm for $Y_i$ if $i\in C_k$\\
		$A\in \{0,1\}^{n\times n}$ & $A_{ij}|i\in C_k,j\in C_\ell \sim Ber(B_{k\ell})$ & Adjacency matrix (Symmetric)\\
		$Y_i\in \bR^d$&&Covariate observation for $i$th point\\
		%	$f(x):\bR_+\rightarrow [0,1] $&$\exp(-\eta x)$ & Kernel function\\
		$K\in [0,1]^{n\times n}$ & $K(i,j)=f(\|Y_i-Y_j\|_2^2)$ & Kernel matrix, symmetric and positive definite\\
		\hline
	\end{tabular}
	\caption{Random variables used in the paper}
	\label{tab:notations}
\end{table}

\begin{table}[t]
	\centering
	\footnotesize
	\begin{tabular}{|l|l|l|}
		\hline
		Notation & Mathematical Definition & Explanation\\
		\hline
		$\bfone_n$				& 					&All one vector of length $n$\\
		$E_n$					& $\bfone_n\bfone_n^T$	&All ones matrix of size $n\times n$\\
		$I_d$					&					& Identity matrix of size $d\times d$\\
		$K_G$ 					& $\le 1.783$ 			& Grothendieck's constant\\
		$f(x):\bR_+\rightarrow [0,1] $	&$\exp(-\eta x)$ 		& Kernel function\\
		\multirow{2}{*}{$\mathcal{F}$ }	& $\{X\succeq 0,\ \ 0\leq X\leq \frac{1}{m_{\min}},$ 	& \multirow{2}{*}{Feasible set of the SDP}\\
								&$ X\bfone_n=\bfone_n, \ \ \tr(X)=\cl\}$			&\\
		$\sdp{M}$ 				& $\arg\max_{X}\innerprod{M}{X} \  \ s.t.\ \ X\in \mathcal{F}$ 		& Solution matrix of the SDP\\
		$\rd \eig_i{(M)}\bk$ 				& 		& $i$-th eigenvalue of $M$\\
		$\lamn, \lamz$		&$\lamn=\lamz/n$, $\lamz = \Theta(1)$			&Tuning parameter between graph and covariates\\
		\hline
	\end{tabular}
	\caption{Useful notations and definitions}
	\label{tab:notations}
\end{table}
\vspace{-1em}
\subsection{Optimization Framework}
We now present our optimization framework.
%Now we derive the joint likelihood under the simplest setting.
There are many available semidefinite programming (SDP) relaxations for clustering blockmodels~\citep{amini2014semidefinite, cai2015robust, chen2014statistical}. The common element in all of these is maximizing the inner product between $A$ and $X$,
for a positive semidefinite matrix $X$. %i.e.   that for planted partition models with $B=(p-q)I_\cl+qE_\cl$, the log likelihood of the blockmodel can be relaxed using the following semidefinite program:
% Since $ZZ^T\bm{1}_n=\frac{n}{\cl} \bm{1}_n$, for equal sized clusters,~\citet{amini2014semidefinite} use:
%\begin{equation}\begin{split}
%\max_X &\quad \innerprod{A}{X},
%\ \ s.t. \ \ X\succeq 0,\ \    0\leq X\leq 1, \ \ X\bfone_n=\frac{n}{r}\bfone_n, \ \  \diag(X)=\bfone_n
%\label{eq:sdpold}
%\end{split}\tag{SDP}\end{equation} 
Here $X$ is a stand-in for the clustering matrix $ZZ^T$.  
Unequal-sized clusters is usually tackled with an extra regularization term added to the objective function (see \cite{hajek2016achieving, perry2015semidefinite, cai2015robust} among others). %Furthermore, \citet{guedon2014community} show that in the sparse regime one can use this method to obtain an error rate which is a constant w.r.t $n$ that depends on the gap between the within and across cluster probabilities. 
\begin{comment}
be written as 
\begin{align*}
&\sum_{i\neq j}\sum_{a,b} Z_{ia}Z_{jb} \left(A_{ij} \log \frac{B_{ab}}{1-B_{ab}}+\log(1-B_{ab})\right)\\
=&\log \frac{p(1-q)}{q(1-p)} \tr(\Z^TA\Z)+\log \frac{1-p}{1-q}\tr(\Z^TE_n\Z)+const
\end{align*}

For equal sized clusters,  the maximum likelihood problem can be written as an optimization as follows:
\begin{align*}
\hat{\Z}=\argmax_{Z\in \{0,1\}^{n\times \cl},Z^T\bm{1}_n=\frac{n}{\cl} \bm{1}_\cl} \tr(\Z^TA\Z)
\end{align*}
The above problem is NP-hard due to the intractability of the constrained set. Much work has been done on maximizing the objective under various relaxations.  Non-convex approximations naturally lead to the spectral method, where consistency is analyzed by \cite{lei2015consistency, oliveira2009concentration, rohe2011spectral} under the dense regime. In the sparse regime, one needs to apply regularization to ensure better classification accuracy than the random predictor \cite{amini2013pseudo, le2015sparse}. 
 \end{comment}
% Convex relaxation 
  While the above consistency results are for dense graphs,  \citet{guedon2014community,Montanari:2016} show that in the sparse regime one can use this method to obtain an error rate which is a constant w.r.t $n$ and depends on the gap between the within and across cluster probabilities.

SDPs are not only limited to network clustering.  Several convex relaxations for k-means type loss are proposed in the literature (see \citet{peng2007approximating,mixon2016clustering,yan2016robustness} for more references). In particular in these settings one maximizes $\innerprod{W}{X}$, for some positive semidefinite matrix $X$, where $W$ is a matrix of similarities between pairwise data points. For classical $k$-means $W_{ij}$ can be $Y_i^TY_j$ whereas for $k$-means in the kernel space one uses a suitably defined kernel similarity function between the $i$th and $j$th covariates. We analyze the widely-used Gaussian kernel  to allow for non-linear boundaries between clusters. Let $K$ be the $n\times n$ kernel matrix whose $(i,j)$-th entry is $K(i,j)=f(\|Y_i-Y_j\|_2^2)$, where $f(\cdot)$, where $f(x)=\exp(-\eta x)$ for $x\geq 0$.
This kernel function is upper bounded by 1 and is Lipschitz continuous w.r.t. the distance between two observations. Furthermore, in contrast to network based SDPs, the above uses $X$ as a stand in for the normalized variant of the clustering matrix $ZZ^T$, i.e.  the desired solution is $(X_0)_{ij} = \frac{1(k=\ell)}{m_k}$, if $i\in C_k,j\in C_\ell$.
%$X_0=UU^T$, where  $U=Z \diag(\frac{1}{\sqrt{m_1}},\cdots,\frac{1}{\sqrt{m_{\cl}}})$; 
It can be seen that $\|X_0\|_F^2=\cl$.

In our optimization framework, we propose to add a $k$-means type regularization term to the network objective, which enforces that the estimated clusters are consistent with the latent memberships in the covariate space. 
 \ba{
 	X =\arg\max_{X}\innerprod{A+\lamn K}{X} \  \ s.t.\ \ X\in \mathcal{F},
 	\label{eq:sdpcomb}
 }
 where
$\lamn$ is a tuning parameter (possibly depending on $n$) and the constraint set $\mathcal{F}=\{X\succeq 0,\ \ 0\leq X\leq \frac{1}{m_{\min}},\ \ X\bfone_n=\bfone_n, \ \ \tr(X)=\cl\}$ is similar  to~\cite{peng2007approximating}. The $m_{\min}$ in the constraint can be replaced by any lower bound on the smallest cluster size, and is mainly of convenience for the analysis. In the implementation, it suffices to enforce the elementwise positivity constraints, and other linear constraints. For ease of exposition, we define \ba{
	\sdp{M} =\arg\max_{X}\innerprod{M}{X} \  \ s.t.\ \ X\in \mathcal{F},
	\label{eq:sdpgen}
}

When $K(i,j)=Y_i^TY_j$, then the non-convex variant of the objective function naturally assumes a form similar to the work of ACASC (modulo normalization of $A$). 

%In next section, we present our main results.

\section{Main Results}
\label{sec:kernel}

Typically in existing SDP literature for sparse networks or subgaussian mixtures~\citep{guedon2014community,mixon2016clustering}, one obtains a relative error bound of the deviation of $X_M$ (the solution of the SDP ) from the ideal clustering matrix $X_0$. This relative error is typically proportional to the ratio of the observed matrix with a suitably defined reference matrix, and some quantity which measures the separation between the different clusters. 
Our theoretical result shows that the relative error of the solution to the combined SDP is  proportional to the ratio of the observed $A+\lamn K$ matrix to a suitably defined reference matrix to a quantity which measures separation between clusters. This quantity is a non-linear combination of the separations stemming from the two sources.
We first present an informal version of the main result. 
	{\it
		Main theorem (informal): 
		Let $X_{A+\lamn K}$ be the solution of SDP~\eqref{eq:sdpgen}. Let $s_G^{k}$ and $s_C^{k}$ be constants denoting the separations of cluster $k$ from the other clusters defined in terms of the model parameters of the network and the covariates respectively. If the tuning parameter $\lamn = \lamz/n$ for some constant $\lamz$, then
		$$\|X_{A+\lamn K}-X_0\|_F^2 \le \frac{c_G+\lamz c_C}{\min_k \left(s_G^{k}+\ell s_C^{k}\right)},$$ where $c_G$ and $c_C$ are constants representing the error corresponding to the graph and the covariates.
	}

Note that in SBM, the separation is well-defined, i.e. when $M=A$, a natural choice of the reference matrix is $E[A|Z]$ which is blockwise constant. In this case, the separation is given by $\min_k (B_{kk}-\max_{\ell}B_{k\ell})$, and leads to a result on weakly assortative sparse block models which we present in more details in Section~\ref{sec:graph}. However, for the kernel matrix $K$, the main difficulty is that one cannot achieve element-wise or operator norm concentration of $K$ (also discussed in~\cite{von2008consistency}). This makes the choice of the reference matrix difficult. 
%the kernel matrix for covariates does not concentrate, which makes the definition of separation on the combined matrix unintuitive.
To better understand the role of the separation parameter, we first present a key technical lemma bounding $\|\sdp{M}-X_0\|_F$. The main goal of this lemma is to establish an upper bound on the frobenius norm difference between the solution to an SDP with input matrix $M$ to the ideal clustering matrix. 
\bk
\begin{lemma}
	Let \sdp{M} be defined by Eq~\eqref{eq:sdpgen} for some input matrix $M$. Also let $Q$ be a reference matrix where $Q_{ij}=\bkin,\forall i,j\in C_k$, and $\bkout\ge Q_{ij} \ge 0,\forall i\in C_k,j\in C_\ell,k\ne \ell$. If $\min_k(\bkin-\bkout)\ge 0$, then 
	\ba{\label{eq:genlem}
		\| \sdp{M} -X_0\|_F^2 \le 2\frac{\innerprod{M-Q}{ \sdp{M} -X_0}}{\mmin \min_k(\bkin-\bkout)}  
	}
	\label{lem:fro_to_innerprod}
\end{lemma}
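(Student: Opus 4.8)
The plan is to combine the optimality of $\sdp{M}$ with the block structure of the reference matrix $Q$, and then to control the Frobenius distance purely through the constraint set $\mathcal{F}$. Write $\delta := \sdp{M}-X_0$. Since $X_0\in\mathcal{F}$ (it is symmetric PSD, has entries in $[0,1/\mmin]$, unit row sums, and $\tr(X_0)=\cl$), optimality of $\sdp{M}$ gives $\innerprod{M}{\sdp{M}}\ge\innerprod{M}{X_0}$, i.e.\ $\innerprod{M}{\delta}\ge0$. Splitting $M=(M-Q)+Q$ then yields $\innerprod{M-Q}{\delta}\ge -\innerprod{Q}{\delta}=\innerprod{Q}{X_0-\sdp{M}}$, so it suffices to lower bound the right-hand side by a multiple of $\|\delta\|_F^2$.

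For that lower bound I would exploit the blockwise form of $Q$. Let $S_{k\ell}=\sum_{i\in C_k,j\in C_\ell}\delta_{ij}$. Because both $\sdp{M}$ and $X_0$ are row-stochastic, $\delta\bfone_n=0$, hence $\sum_\ell S_{k\ell}=0$ and $S_{kk}=-\sum_{\ell\ne k}S_{k\ell}$ for each $k$. On off-diagonal blocks $\delta_{ij}=(\sdp{M})_{ij}\ge0$, so $S_{k\ell}\ge0$ and $T:=\sum_{k\ne\ell}S_{k\ell}\ge0$ is exactly the total off-block mass of $\sdp{M}$. Using $Q_{ij}=\bkin$ within block $k$ and $0\le Q_{ij}\le\bkout$ across blocks, a term-by-term accounting (substituting $S_{kk}=-\sum_{\ell\ne k}S_{k\ell}$ in the diagonal-block sum and bounding each off-block term by $\bkout S_{k\ell}$) gives $-\innerprod{Q}{\delta}\ge\sum_{k\ne\ell}(\bkin-\bkout)S_{k\ell}\ge\min_k(\bkin-\bkout)\,T$, where the last step uses $S_{k\ell}\ge0$ and $\min_k(\bkin-\bkout)\ge0$.

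The remaining and most delicate step is to show $\|\delta\|_F^2\le 2T/\mmin$, which I would obtain through the identity $\|\delta\|_F^2=\|\sdp{M}\|_F^2-2\innerprod{\sdp{M}}{X_0}+\|X_0\|_F^2$. Here $\|X_0\|_F^2=\cl=\tr(\sdp{M})$, and the cross term expands as $\innerprod{\sdp{M}}{X_0}=\sum_k \tfrac{1}{m_k}\sum_{i,j\in C_k}(\sdp{M})_{ij}=\cl-\sum_k R_k/m_k\ge \cl-T/\mmin$, where $R_k=\sum_{\ell\ne k}S_{k\ell}$ and $\sum_k R_k=T$. The key observation is that $\sdp{M}$ is symmetric, PSD, and doubly stochastic (nonnegative with unit row and column sums), so all its eigenvalues lie in $[0,1]$; consequently $\|\sdp{M}\|_F^2=\sum_i\eig_i(\sdp{M})^2\le\sum_i\eig_i(\sdp{M})=\tr(\sdp{M})=\cl$. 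Substituting these three bounds gives $\|\delta\|_F^2\le \cl-2(\cl-T/\mmin)+\cl=2T/\mmin$.

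Combining the two displays, $\innerprod{M-Q}{\delta}\ge\min_k(\bkin-\bkout)\,T\ge \tfrac{\mmin}{2}\min_k(\bkin-\bkout)\|\delta\|_F^2$, which rearranges to the claimed inequality. I expect the main obstacle to be the Frobenius step: the naive entrywise or $\ell_1$ bound on the within-block part of $\delta$ fails for unequal cluster sizes (off-diagonal within-block entries can deviate from $1/m_k$ without any off-block leakage), and the clean resolution instead requires recognizing that the trace constraint together with positive semidefiniteness and double stochasticity forces $\|\sdp{M}\|_F^2\le\tr(\sdp{M})$. The off-block bookkeeping in the second step also needs care to keep the coefficients $\bkin-\bkout$ matched to the nonnegative masses $S_{k\ell}$, so that the minimum separation can be factored out.
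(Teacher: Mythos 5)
Your proposal is correct and follows essentially the same route as the paper's proof: optimality of $\sdp{M}$ over $\mathcal{F}$ to reduce to $\innerprod{Q}{X_0-\sdp{M}}$, the blockwise accounting that extracts $\min_k(\bkin-\bkout)$ times the total off-block mass of $\sdp{M}$, and the bound $\|\sdp{M}\|_F^2\le\tr(\sdp{M})$ (the paper's Lemma~\ref{lem:fro_less_trace}) combined with the cross-term identity to control $\|\sdp{M}-X_0\|_F^2$ by that same off-block mass. The only cosmetic difference is that you organize the bookkeeping via block sums $S_{k\ell}$ while the paper works row by row; the substance is identical.
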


%\vspace{-1em}
\begin{remark}
The key to the above lemma is to find a suitable reference matrix $Q$ which satisfies some separation conditions between the blocks. The deviation between $X_M$ and $X_0$ is small if $M-Q$ is small, and large if the separation between blocks in $Q$ is small.	While the proof technique is inspired  by~\citet{guedon2014community}, the details are different because of our use of different constraints and because our reference matrix $Q$ does not have to be blockwise constant and can be weakly assortative instead of strongly assortative.
	\end{remark}
The results on networks, covariates and the combination of the two essentially reduces to identifying good reference matrices ($Q$) for the input matrices $A$, $K$, and $A+\lambda K$, which 
\begin{enumerate}
	\item Satisfies the properties of $Q$ in the above lemma.
	\item Has a large separation $\min_k(\bkin-\bkout)$  increasing the denominator of Eq.~\eqref{eq:genlem}.
	\item Has a small deviation from $M$, thereby reducing the numerator of Eq~\eqref{eq:genlem}.
\end{enumerate}  

%For the network, i.e. when $M=A$, a natural choice of the reference matrix is $E[A|Z]$ which is blockwise constant. In this case the separation is given by $\min_k (p_k-q_k)$, where $p_k:=B_{kk}$, $q_k:=\max_{\ell\neq k}B_{k\ell}$ leading to a result on weakly assortative sparse blockmodels which we present in more detail in Section~\ref{sec:graph}. The numerator is then bounded by using Grothendieck's inequality.
%
% However, when $M$ is the kernel matrix $K$, the main difficulty is that one cannot achieve element-wise or operator norm concentration of $K$ (also discussed in~\cite{von2008consistency}), thus making the choice of the reference matrix difficult. 
Now the main work is to choose the reference matrix $Q$ for $A+\lambda K$. As pointed out before, a common choice for reference matrix of $A$ is $\bE[A|Z]$.
For the covariates, we divide the nodes into ``good'' nodes $\cS_k:=\{i\in C_k:\|Y_i-\mu_k\|\leq \Delta_k\}$ and the rest. Also define $\cS=\cup_{k=1}^\cl \cS_k$.  $\Delta_k$ will be defined such that the kernel matrix induced by the rows and columns in $\cS$ is weakly assortative, and $3\Delta_k+\Delta_\ell\leq d_{k\ell}$. Define
\begin{align}
\label{eq:kernel-lodim-sep}
r_k:= f(2\Delta_k),\quad s_k:=\max_{\ell\neq k }f(d_{k\ell}-\Delta_k-\Delta_\ell),\qquad \nu_k =r_k-s_k
\end{align}
A simple use of triangle inequality gives $\min_{i,j\in \cS_k} K_{ij}\ge r_k$ and $\max_{i\in \cS_k, j\in S_\ell, \ell\neq k} K_{ij} \le s_k$. %where $r_k$ and $s_k$ are defined in Eq~\ref{eq:kernel-lodim-sep}. 
Hence the separation for cluster $k$ is $\nu_k:=r_k-s_k$.
We define the reference matrix $K_I$ as:
\beq{
	(K_I)_{ij}=\left\{ \begin{matrix} f(2\Delta_k), & \mbox{ if $i,j\in C_k$}\\ \min\{f(d_{k\ell}-\Delta_k-\Delta_\ell),K_{ij}\}, & \mbox{ if $i\in C_k, j\in C_\ell, k\ne \ell$}\end{matrix}\right. 
	\label{eq:K_I}
}
The choice of $\Delta_k$ is crucial. A large $\Delta_k$ makes the size of non-separable nodes $\cS^c$ small, but drives down the separation $ \nu_k$. 

We are now ready to present our main result.
As we will show in the proof, the new separation is $\gamma=\min_k \frac{(a_k-b_k)+\lamz \nu_k}{n}$. Typically, in the general case with unequal sub-gaussian norms, one should benefit from using different $\Delta_k$'s for different clusters. For example for a cluster with a large $a_k-b_k$, we can afford to have a small $\nu_k$. To think in terms of $\Delta_k$, for this cluster one can have a large $\Delta_k$, which will make $|\cS_k|$ larger than before, but will not affect the separation $(a_k-b_k)+\lamz \nu_k$ of cluster $k$  very detrimentally.  We now present our first main theorem. 

\begin{theorem}
	Let $a_k=nB_{kk}, b_k=n\max_{\ell\ne k}B_{k\ell}$, $g:=\frac{2}{n-1}\sum_{i<j}\text{Var}(a_{ij})\ge 9$. Take $\lamn=\lamz/n$, $m_k=n\pi_k$, $\mmin=n\pi_{\min}$, and $\pi_0:=\sum_k (m_k\exp(-\Delta_k^2/5\psi_k^2)+ \sqrt{m_k\log m_k/2})/n$. Let $\sdp{A+\lamn K}$ be defined as in Eq~\eqref{eq:sdpgen}. If $\pi_{\min}=\Theta(1)$ and  $\min_k(a_k-b_k+\lamz \nu_k)>0$, then, with probability tending to one,
	\bas{
		\| \sdp{A+\lambda K} -X_0\|_F^2 \le 2K_G\frac{ 6\sqrt{g}+\lamz \left( 2 \pi_0+\sum_{k}\pi_k^2(1-f(2\Delta_k)) \right) }{\pi_{\min}^2 \min_k(a_k-b_k+\lamz \nu_k)},
	}
	where
	$\nu_k=f(2\Delta_k)-\max_{\ell\ne k}f(d_{k\ell}-\Delta_k-\Delta_\ell)$ for some $\Delta_k,\Delta_\ell\geq 0$ and $\max(\Delta_k,\Delta_\ell)\leq d_{k\ell}/4$.
	\label{th:sparse_low}
\end{theorem}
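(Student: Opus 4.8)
The plan is to instantiate Lemma~\ref{lem:fro_to_innerprod} with $M = A + \lamn K$ and the composite reference matrix $Q = \bE[A\mid Z] + \lamn K_I$, where $K_I$ is the matrix in Eq~\eqref{eq:K_I}. First I would verify that $Q$ meets the hypotheses of the lemma. Its within-block entries are constant, $\bkin = B_{kk} + \lamn f(2\Delta_k) = (a_k + \lamz r_k)/n$, while across block $k,\ell$ one has $0\le Q_{ij} = B_{k\ell} + \lamn (K_I)_{ij} \le b_k/n + \lamn s_k =: \bkout$ (the harmless deterministic discrepancy on the diagonal, where $A_{ii}=0$ but $B_{kk}\ne 0$, is absorbed into the fluctuation term). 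Hence $\bkin - \bkout = \frac{1}{n}((a_k-b_k) + \lamz\nu_k)$, which is nonnegative exactly under the stated assumption $\min_k(a_k-b_k+\lamz\nu_k)>0$. With $\mmin = n\pi_{\min}$, the denominator in Eq~\eqref{eq:genlem} becomes $\mmin\min_k(\bkin-\bkout) = \pi_{\min}\min_k(a_k-b_k+\lamz\nu_k)$, so it remains only to upper bound $\innerprod{M-Q}{\sdp{A+\lamn K}-X_0} = \innerprod{A-\bE[A\mid Z]}{\sdp{A+\lamn K}-X_0} + \lamn\innerprod{K-K_I}{\sdp{A+\lamn K}-X_0}$.

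For the graph term I would follow the Grothendieck-inequality route of~\citet{guedon2014community}. Since $\sdp{A+\lamn K}$ and $X_0$ are each positive semidefinite with entries in $[0,1/\mmin]$, applying Grothendieck's inequality to each separately gives $\innerprod{A-\bE[A\mid Z]}{\sdp{A+\lamn K}-X_0}\le \frac{2K_G}{\mmin}\llnorm{A-\bE[A\mid Z]}$. It then suffices to control $\llnorm{A-\bE[A\mid Z]}$: a Bernstein bound on $s^\top(A-\bE A)t$ for fixed sign vectors, whose variance proxy is of order $ng$, combined with a union bound over the $4^n$ sign patterns, yields $\llnorm{A-\bE[A\mid Z]}\le 3n\sqrt g$ with probability tending to one, where the hypothesis $g\ge 9$ is precisely what makes the Bernstein exponent dominate the $n\log 4$ entropy. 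Substituting gives a graph contribution of $6K_G\sqrt g/\pi_{\min}$ to the numerator, i.e.\ the $2K_G\cdot 6\sqrt g$ piece of the bound after dividing by $\pi_{\min}\min_k(\cdot)$ and accounting for the factor $2$ in Eq~\eqref{eq:genlem}.

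The covariate term is where the real work lies, and I expect it to be the main obstacle, precisely because (as the text notes, cf.~\citet{von2008consistency}) $K$ admits no operator- or entrywise-norm concentration, so the engineered reference $K_I$ must carry the argument. The key structural fact is that for two good nodes $i\in\cS_k, j\in\cS_\ell$ with $k\ne\ell$, the triangle inequality gives $\|Y_i-Y_j\|\ge d_{k\ell}-\Delta_k-\Delta_\ell$, hence $K_{ij}\le f(d_{k\ell}-\Delta_k-\Delta_\ell)$ and therefore $(K_I)_{ij}=K_{ij}$, so $(K-K_I)_{ij}=0$. Consequently $K-K_I$ is supported only on (i) within-block good pairs, where $0\le (K-K_I)_{ij}\le 1-f(2\Delta_k)$, and (ii) any pair incident to a ``bad'' node in $\cS^c$, where $|(K-K_I)_{ij}|\le 1$. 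Bounding $\innerprod{K-K_I}{\sdp{A+\lamn K}-X_0}$ entrywise via $0\le \sdp{A+\lamn K},X_0\le 1/\mmin$, group (i) contributes at most $\frac{1}{\mmin}\sum_k m_k^2(1-f(2\Delta_k))$, producing the $\sum_k\pi_k^2(1-f(2\Delta_k))$ term, while group (ii) contributes at most $2n|\cS^c|/\mmin$. The final step is the high-probability bound $|\cS^c|\le n\pi_0$: the sub-gaussian tail $\bP(\|W_i\|>\Delta_k)\le\exp(-\Delta_k^2/5\psi_k^2)$ controls the expected number of bad nodes in $C_k$, and a Hoeffding bound with slack $\sqrt{m_k\log m_k/2}$ controls the fluctuation, giving the $2\pi_0$ term after a union bound over $k$. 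Combining the two high-probability events, dividing by $\pi_{\min}\min_k(\cdot)$, and inserting the harmless factor $K_G\ge 1$ to present a single $2K_G$ prefactor yields the stated bound.
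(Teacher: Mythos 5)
Your proposal matches the paper's proof in all essentials: the same reference matrix $Q=\bE[A\mid Z]+\lamn K_I$ fed into Lemma~\ref{lem:fro_to_innerprod}, the same Grothendieck-inequality treatment of $\innerprod{A-\bE[A\mid Z]}{\cdot}$ with the $\llnorm{A-\bE[A\mid Z]}\le 3n\sqrt g$ bound (which the paper imports from \citet{guedon2014community} rather than re-deriving), and the same good/bad-node decomposition of $K-K_I$ with the sub-gaussian tail plus Hoeffding control of $|\cS^c|$ yielding $\pi_0$. The only deviation is that you bound the kernel term $\innerprod{K-K_I}{\sdp{A+\lamn K}-X_0}$ entrywise rather than via Grothendieck applied to $\llnorm{K-K_I}$, which gives the same quantity (the paper's $\ell_\infty\to\ell_1$ bound is itself entrywise) and is a harmless simplification.
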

Here $K_G$ is the Grothendieck's constant. The best value of $K_G$ is still unknown, and the best known bound is $K_G \le 1.783$~\citep{braverman2013grothendieck}.
First note that in the sparse case, we take $\lamn=\lamz/n$ for some constant $\lamz$.   In general the upper bound depends on several parameters such as $\lamn$ and the scale parameter $\eta$ in the gaussian kernel. We provide procedures for tuning $\lamn$ and $\eta$ in Section~\ref{sec:exp}. The $\Delta_k$'s show up in the numerator as well as the denominator.  Finding the optimal $\Delta_k$ is cumbersome in the general case with unequal $\psi_k$'s. In Section~\ref{sec:cov} we derive an upper bound for equal $\Delta_k$'s for concreteness.

%{\color{blue} Do we really need to make it this explicit in the paper? The reviewers didn't ask us to compare with Bayes optimal.}
%\rd
%\begin{remark}
%	Ideally one would want to show that the upper bound obtained in the above theorem is smaller than the Bayes Error rate for clustering with either source alone. Specifically, it is easy to see that for Gaussian Mixture with the same isotropic covariance matrix, the Linear discriminant analysis achieves the Bayes error rate. 
%	However, for general Gaussian mixture with unequal covariance matrices the rate is complicated.  For Blockmodels,~\citep{mossel2016local} derive the Bayes error rate for a special case of two class, equal sized planted partition model ($B_{11}=B_{22}$) and a side information of a one-dimensional mixture of Bernoullis. They also point out that finding a polynomial time algorithm which achieves the Bayes error rate is still an open problem~\cite{mossel2016local} \textbf{Can we make sure that Abbe/Zhou has not done this }.  %They present a conjecture that their belief propagation type method achieves this Bayes error. Hence it is not easy to establish if the combined method achieves the Bayes error rate.
%\end{remark}
%\bk
Now we present two natural byproducts of our analysis, namely the result on graphs, i.e.  bounds on $\|X_0-\sdp{A}\|_F$ and the result on covariate clustering i.e. bounds on $\|X_0-\sdp{K}\|_F$.

\subsection{Result on Sparse Graph}
\label{sec:graph}
While most dense network-based community detection schemes give perfect clustering in the limit~\citep{amini2013pseudo,amini2014semidefinite,cai2015robust,chen2014statistical,yan2017exact}, in the sparse case no algorithm is consistent; however semidefinite relaxations (among others) can achieve an error rate governed by the within and across cluster probabilities~\citep{guedon2014community,Montanari:2016}. The sparse network analysis is done under strongly assortative settings.  

\begin{proposition}[Analysis for graph]
	Let $a_k, b_k$ defined as in Theorem~\ref{th:sparse_low} are positive constants and $g\geq 9$. Then with probability tending to 1, 
	\bas{
		\frac{\|\sdp{A}-X_0\|_F}{\|X_0\|_F} \le \epsilon,
	}
	if  $\min_k (a_k-b_k)\geq \frac{ 23 \alpha^2\cl \sqrt{g}}{\epsilon^2}$
	where $\alpha:=m_{\max}/m_{\min}$.
	\label{prop:main_graph_sparse}
\end{proposition}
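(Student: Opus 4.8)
The plan is to specialize the machinery already assembled for Theorem~\ref{th:sparse_low} to the graph-only input $M=A$ (equivalently, to set $\lamz=0$) and then convert the resulting absolute Frobenius bound into the relative bound claimed here. I would run the argument directly rather than quote Theorem~\ref{th:sparse_low} verbatim, since the graph case is self-contained and isolates the one genuinely probabilistic estimate. First I would invoke Lemma~\ref{lem:fro_to_innerprod} with the blockwise-constant reference matrix $Q$ given by $Q_{ij}=B_{kk}$ for $i,j\in C_k$ and $Q_{ij}=B_{k\ell}$ for $i\in C_k,\,j\in C_\ell$ with $k\ne\ell$. This is simply $\bE[A\mid Z]$ with its diagonal filled in, so $\bkin=B_{kk}$ and $\bkout=\max_{\ell\ne k}B_{k\ell}$, and the hypothesis $\min_k(\bkin-\bkout)=\tfrac1n\min_k(a_k-b_k)\ge0$ is exactly the (weak) assortativity guaranteed by $a_k>b_k$. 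Lemma~\ref{lem:fro_to_innerprod} then yields
\beq{
\|\sdp{A}-X_0\|_F^2\le \frac{2n\,\innerprod{A-Q}{\sdp{A}-X_0}}{\mmin\,\min_k(a_k-b_k)}.
\label{eq:propstep}
}

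Next I would control the numerator with Grothendieck's inequality. Both $\mmin\sdp{A}$ and $\mmin X_0$ are positive semidefinite with diagonal entries at most $1$ (since $0\le X\le 1/\mmin$ on $\mathcal F$), hence each is a Gram matrix of unit vectors; Grothendieck therefore gives $\innerprod{A-Q}{\mmin\sdp{A}}\le K_G\llnorm{A-Q}$ and, applied to $-(A-Q)$, $\innerprod{A-Q}{-\mmin X_0}\le K_G\llnorm{A-Q}$. Adding these produces $\innerprod{A-Q}{\sdp{A}-X_0}\le \tfrac{2K_G}{\mmin}\llnorm{A-Q}$, which is the step that trades the intractable inner product for a combinatorial norm.

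The remaining estimate, which I expect to be the main obstacle, is the concentration bound $\llnorm{A-Q}\le 3n\sqrt g$ with probability tending to one. Writing $R=A-\bE[A\mid Z]$ (the diagonal discrepancy between $Q$ and $\bE[A\mid Z]$ contributes only $\sum_k m_kB_{kk}\le\max_k a_k=O(1)$ and is absorbed), I would use $\llnorm{R}=\max_{s,t\in\{\pm1\}^n}s^TRt$ and note that for fixed $(s,t)$ the quantity $s^TRt=\sum_{i<j}R_{ij}(s_it_j+s_jt_i)$ is a sum of independent, mean-zero terms bounded by $2$ with total variance at most $4\sum_{i<j}\var(A_{ij})=2(n-1)g$. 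Bernstein's inequality at level $\tau=3n\sqrt g$ gives a tail of order $\exp\!\big(-\tfrac{9n\sqrt g}{4(\sqrt g+1)}\big)$, whose exponent exceeds $n\log 4$ exactly when $g\ge 9$ (there $\tfrac{9\sqrt g}{4(\sqrt g+1)}=\tfrac{27}{16}>\log 4$), so a union bound over the $4^n$ sign pairs drives $\bP(\llnorm{R}>3n\sqrt g)\to 0$. This is precisely where $g\ge 9$ enters, and it is what produces the constant $6\sqrt g$ (after the factor $2$) seen in Theorem~\ref{th:sparse_low}.

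Finally I would assemble the pieces. Substituting $\llnorm{A-Q}\le 3n\sqrt g$ and $\mmin=n\pi_{\min}$ into \eqref{eq:propstep} gives $\|\sdp{A}-X_0\|_F^2\le \tfrac{12K_G\sqrt g}{\pi_{\min}^2\min_k(a_k-b_k)}$. Dividing by $\|X_0\|_F^2=\cl$ and using $\pi_{\min}\ge 1/(\cl\alpha)$, which follows from $n=\sum_k m_k\le \cl\,\mmax=\cl\alpha\mmin$, yields $\|\sdp{A}-X_0\|_F^2/\|X_0\|_F^2\le 12K_G\cl\alpha^2\sqrt g/\min_k(a_k-b_k)$. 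Since $12K_G\le 12\times 1.783<23$, imposing $\min_k(a_k-b_k)\ge 23\alpha^2\cl\sqrt g/\epsilon^2$ forces this ratio below $\epsilon^2$, which is the stated claim.
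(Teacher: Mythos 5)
Your proposal is correct and follows essentially the same route as the paper's proof: apply Lemma~\ref{lem:fro_to_innerprod} with $Q=\bE[A\mid Z]$ (handling the $O(1)$ diagonal discrepancy separately), convert the inner product to an $\ell_\infty\to\ell_1$ norm via Grothendieck's inequality using that $\mmin\sdp{A}$ and $\mmin X_0$ lie in $\cM_G^+$, bound $\llnorm{A-P}\le 3n\sqrt{g}$, and finish with $\pi_{\min}\ge 1/(\cl\alpha)$ and $12K_G<23$. The only difference is cosmetic: you re-derive the concentration bound $\llnorm{A-P}\le 3\bar p^{1/2}n^{3/2}$ from Bernstein plus a union bound over sign vectors (correctly identifying $g\ge 9$ as the threshold making the exponent beat $\log 4$), whereas the paper imports this as Lemma~\ref{lem:gro_ineq} from \citet{guedon2014community}.
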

Note that in the above result, in order to have the error rate $\epsilon$ to go to zero, one would require $a_k-b_k$ to go to infinity, whereas by definition $a_k,b_k$ are constants. Therefore one can only hope for a small albeit constant $\epsilon$. In addition, both number of clusters $\cl$ and the ratio between largest and smallest cluster sizes $\alpha$ needs to be constant order w.r.t $n$ in order to guarantee the error rate does not increase when the network grows.
%in order to have a small $\epsilon$, one needs both number of clusters $\cl$ and the ratio between largest and smallest cluster sizes $\alpha$ to be constants w.r.t $n$.
\begin{remark}[Comparison with prior work]
	In contrast to having $\min_k a_k-\max_k b_k$ (strong assortativity) in the denominator like~\citet{guedon2014community}, we have $\min_k (a_k-b_k)$  (weak assortativity), which allows for a much broader parameter regime. %Our proof essentially uses Lemma~\ref{lem:fro_to_innerprod} with $Q$ as $E[A|Z]$. 
\end{remark}

\subsection{Result on Covariates}
\label{sec:cov}
We present a result for covariates analogous to the sparse graph setting, which establishes that, while SDP with covariates is not consistent with finite signal-to-noise ratio, it achieves a small error rate if the cluster centers are further apart. But before delving into our analysis, we provide a brief overview of existing work.

For covariate clustering, it is common to make distributional assumptions; usually a mixture model with well-separated centers suffices to show consistency. 
The most well-studied model is Gaussian mixture models, which can be inferred by Expectation-Maximization algorithm, for which recently there has been some local convergence results~\citep{balakrishnan2014statistical,yan2017convergence} and its variants~\citep{dasgupta2007probabilistic}. The condition required for provable recovery on the separation is usually 
 %use a Gaussian mixture model, and propose a variant of EM algorithm that provably recovers the center of each Gaussian when 
 the minimum distance between clusters is greater than some multiple of the square root of dimension (or effective dimension). 

Another popular technique is based on SDP relaxations. For example, \citet{peng2007approximating,mixon2016clustering} propose a SDP relaxation for \km type clustering. 
To make the analysis concrete, for Proposition~\ref{prop:lowdim_kernel}, we use $\Delta_k=\Delta$.

\begin{proposition}[Analysis for Covariates]
Let $K$ be the kernel matrix generated from kernel function $f$. Denote $\nu_k$ as in Eq~\eqref{eq:kernel-lodim-sep}. 
If $\frac{\dmin}{\psi_{\max}}>\max\left\{\sqrt{d},\frac{180}{\sqrt{d}} \right\}$, then 
	with properly chosen $\eta$, with probability at least $1-\sum_k \frac{1}{m_k}$,
	\bas{
		\frac{\| \sdp{K} -X_0\|_F^2}{\|X_0\|_F^2} \le& C\alpha^2 d\frac{\psi_{\max}^2}{\dmin^2} \max\left\{\log\left(\frac{\dmin}{\psi_{\max}\sqrt{d}}\right),r\right\}
	}
	\label{prop:lowdim_kernel}
\end{proposition}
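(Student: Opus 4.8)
The plan is to instantiate Lemma~\ref{lem:fro_to_innerprod} with input matrix $M=K$ and reference matrix $Q=K_I$ from Eq.~\eqref{eq:K_I}. By construction $K_I$ is constant ($=f(2\Delta_k)=r_k$) on each diagonal block and has off-diagonal entries at most $s_k=\max_{\ell\ne k}f(d_{k\ell}-\Delta_k-\Delta_\ell)$, so $\bkin-\bkout=r_k-s_k=\nu_k$ and the hypothesis $\min_k(\bkin-\bkout)\ge 0$ holds once $3\Delta_k+\Delta_\ell\le d_{k\ell}$. Lemma~\ref{lem:fro_to_innerprod} then gives $\|\sdp{K}-X_0\|_F^2\le \tfrac{2}{\mmin\,\nu}\innerprod{K-K_I}{\sdp{K}-X_0}$ with $\nu:=\min_k\nu_k$, reducing everything to an upper bound on the inner product and a lower bound on $\nu$.

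Since $\sdp{K}$ and $X_0$ both lie in $\mathcal{F}$, they are positive semidefinite with diagonal entries at most $1/\mmin$. Applying Grothendieck's inequality to $\sdp{K}$ and, with the matrix negated, to $X_0$ yields $\innerprod{K-K_I}{\sdp{K}-X_0}\le \tfrac{2K_G}{\mmin}\,\llnorm{K-K_I}$. This is the step that replaces the (unavailable) entrywise or spectral concentration of $K$ by control of the $\ell_\infty\!\to\!\ell_1$ norm of the deviation $K-K_I$, and it is what produces the Grothendieck constant $K_G$.

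The main estimate is then $\llnorm{K-K_I}\le\sum_{ij}|(K-K_I)_{ij}|$, which I would bound by splitting on the good set $\cS=\cup_k\cS_k$. For $i,j\in\cS_k$ the triangle inequality gives $\|Y_i-Y_j\|\le 2\Delta_k$, hence $0\le K_{ij}-r_k\le 1-f(2\Delta_k)$; for good cross-cluster pairs $K_{ij}\le s_k$ forces $(K-K_I)_{ij}=0$; and every entry meeting $\cS^c$ is at most $1$ in absolute value. Counting gives $\llnorm{K-K_I}\le\sum_k m_k^2(1-f(2\Delta_k))+2n|\cS^c|$. The only randomness is $|\cS^c|$: the sub-gaussian tail $\bP(\|W_i\|>\Delta_k)\le e^{-\Delta_k^2/5\psi_k^2}$ bounds $\bE|\cS_k^c|$, and Hoeffding applied to the indicators $\indi{\|W_i\|>\Delta_k}$ with deviation $\sqrt{m_k\log m_k/2}$ controls the fluctuation with failure probability $1/m_k$; a union bound gives $|\cS^c|\le n\pi_0$ on an event of probability at least $1-\sum_k 1/m_k$, the probability in the statement. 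Combining the last estimates reproduces the covariate numerator $2\pi_0+\sum_k\pi_k^2(1-f(2\Delta_k))$ and, after dividing by $\|X_0\|_F^2=\cl$, the bound $\tfrac{\|\sdp{K}-X_0\|_F^2}{\cl}\le \tfrac{4K_G\,(2\pi_0+\sum_k\pi_k^2(1-f(2\Delta)))}{\cl\,\pi_{\min}^2\,\nu}$ with $\Delta_k=\Delta$.

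It remains to choose $(\Delta,\eta)$. I would take $\Delta$ of order $\psi_{\max}\sqrt d$ (up to a $\sqrt{\log}$ correction) and bandwidth $\eta$ of order $\dmin^{-2}\cdot\max\{\log(\dmin/(\psi_{\max}\sqrt d)),\cl\}$. The first separation hypothesis $\dmin/\psi_{\max}>\sqrt d$ makes $\cS^c$ a vanishing fraction and is consistent with the constraint $\Delta\le\dmin/4$ that forces $\nu\ge0$; the scaling of $\eta$ drives $s=f(d_{k\ell}-2\Delta)$ a constant factor below $r=f(2\Delta)$, so that $\nu=\Theta(1)$, while simultaneously $1-f(2\Delta)\asymp\eta\Delta^2\asymp d\,\tfrac{\psi_{\max}^2}{\dmin^2}\max\{\log(\dmin/(\psi_{\max}\sqrt d)),\cl\}$. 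Bounding $\sum_k\pi_k^2\le\cl\,\pi_{\max}^2$ converts $\pi_{\max}^2/\pi_{\min}^2$ into $\alpha^2$, and the second hypothesis $\dmin/\psi_{\max}>180/\sqrt d$ ensures the $\pi_0$-term is of the same order; collecting terms gives the claimed rate. I expect the main difficulty to be exactly this joint tuning: $\Delta$ trades the bad-node mass $\pi_0$ against the within-cluster distortion $1-f(2\Delta)$, whereas $\eta$ simultaneously fixes the separation $\nu$ and rescales that distortion, and it is only under both hypotheses on $\dmin/\psi_{\max}$ that all of $\pi_0$, $1-f(2\Delta)$ and $\nu$ can be held at the target order at once.
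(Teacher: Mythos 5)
Your proposal follows essentially the same route as the paper's proof: Lemma~\ref{lem:fro_to_innerprod} with reference matrix $K_I$, Grothendieck's inequality applied to both $\mmin\sdp{K}$ and $\mmin X_0$ to reduce to $\llnorm{K-K_I}$, the good/bad-node split with the sub-gaussian tail plus Hoeffding giving the $\sum_k 1/m_k$ failure probability, and a joint tuning of $\Delta$ and $\eta$. The only point where your sketch drifts from the paper is the stated order of $\Delta$: the paper takes $\Delta=\sqrt{5}c_0\psi_{\max}\sqrt{d}$ with $c_0^2=\log\bigl(\dmin^2/(\psi_{\max}^2 d)\bigr)/d$, i.e.\ $\Delta\asymp\psi_{\max}\sqrt{\log(\dmin/(\psi_{\max}\sqrt d))}$ rather than $\psi_{\max}\sqrt d$, precisely so that the tail mass $e^{-\Delta^2/5\psi_{\max}^2}$ equals $\psi_{\max}^2 d/\dmin^2$ and the $\pi_0$ term lands at the target order (a literal $\Delta\asymp\psi_{\max}\sqrt d$ would leave a bad-node mass $\asymp e^{-d}$ that dominates for small $d$) --- but this is exactly the trade-off you identify as the delicate step, so the plan is sound.
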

\begin{remark}[Comparison with prior work]
 In recent work, \citet{mixon2016clustering} show the effectiveness of SDP relaxation with \km clustering for sub-gaussian mixtures, provided the minimum distance between centers is greater than the standard deviation of the sub-gaussian times the  number of clusters $\cl$. We provide a dimensionality reduction scheme, which also shows that the separation condition requires that  $\dmin=\Omega( \sqrt{\min(\cl,\dim)})$. Our proof technique is new and involves carefully constructing a reference matrix for Lemma~\ref{lem:fro_to_innerprod}.
\end{remark}

\subsection{Analysis of Covariate Clustering when $\dim \gg \cl$}
\label{sec:dim_red}

In high dimensional statistical problems, the signal is often assumed to lie in a low dimensional subspace or manifold. This is why much of Gaussian Mixture modeling literature first computes some projection of the data onto a low dimensional subspace~\citep{vempala2004spectral}. To reduce the dimensionality of the raw data, one could do a feature selection for the covariates (e.g.~\citet{jin2017phase, verzelen2017detection}).  
In contrast, here we propose a much simpler dimensionality reduction step, which does not distort the pairwise distances between cluster means too much. 
 The intuition is that, for clustering a subgaussian mixture,  if $\dim \gg \cl$, the effective dimensionality of the data is $\cl$ since the cluster means lie in an at most  $\cl$-dimensional subspace.

Hence we  propose the following simple dimensionality reduction algorithm when $\dim\gg\cl$ in a spirit similar to~\citet{chaudhuri2009multi}. 
We split up the sample into two random subsets $P_1$ and $P_2$ of sizes $n_1$ and $n-n_1$ and compute the top $\cl-1$ eigenvectors $U_{\cl-1}$ of the matrix $\hat{S}=\frac{\sum_{i\in P_1} (Y_i-\bar{Y}) (Y_i-\bar{Y})^T}{n_1}\in \mathbb{R}^{\dim\times \dim}$, where $\bar{Y}=\frac{\sum_{i\in P_1} Y_i}{n_1}$. Now we project the covariates from subset $P_2$ onto this lower dimensional subspace as $Y_i'=U_{\cl-1}^T Y_i$ to get the low dimensional projections. We take $n_1=n/\log n$.

\begin{lemma}
	\label{lem:dimreduce}
	Let $M:=\sum_k \pi_k \mu_k\mu_k^T$. If  $\sum_k\pi_k\mu_k=0$,  and the smallest eigenvalue of $M$ satisfies $\eig_{\cl-1}(M) \geq 5\psi_{\max}^2+C\sqrt{\frac{d\log^2n}{n}}$ for some constant $C$,  the projected $Y'_i$ are  also independent data points generated from an isotropic sub-gaussian mixture in $\cl-1$ dimensions. Furthermore the minimum distance between the means in the $\cl -1$ dimensional space is at least $d_{\min}/2$ with probability at least $1-\tilde{O}(\cl^2n^{-d})$, where  $d_{\min}$ is the separation in the original space.%, i.e. $d_{\min}=\min_{k\neq \ell}\|\mu_k-\mu_\ell\|$.
\end{lemma}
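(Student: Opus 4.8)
The plan is to exploit the fact that, after centering, the population second-moment matrix of the covariates is a rank-$(\cl-1)$ signal plus an isotropic noise floor, so that the signal subspace is exactly the span of the cluster means; the sample matrix $\hat{S}$ concentrates around this population matrix, and a Davis--Kahan argument then shows that projecting onto its top $\cl-1$ eigenvectors barely rotates the mean subspace. First I would set up the population picture. Since $\bE[Y_i]=\sum_k\pi_k\mu_k=0$, the population second-moment matrix is $S:=\bE[Y_iY_i^T]=\sum_k\pi_k(\mu_k\mu_k^T+\sigma_k^2 I_\dim)=M+\bar{\sigma}^2 I_\dim$ with $\bar{\sigma}^2:=\sum_k\pi_k\sigma_k^2$. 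Because the $\cl$ means satisfy the single linear constraint $\sum_k\pi_k\mu_k=0$, they span a subspace $V$ of dimension at most $\cl-1$, so $\rank(M)\le\cl-1$; the hypothesis $\eig_{\cl-1}(M)>0$ forces equality, and $S$ and $M$ share eigenvectors with $V$ equal to the top-$(\cl-1)$ eigenspace of $S$. Crucially, every difference $\mu_k-\mu_\ell$ lies in $V$, and the eigengap of $S$ between retained and discarded directions is $\eig_{\cl-1}(S)-\eig_{\cl}(S)=\eig_{\cl-1}(M)$.

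Next I would prove the concentration $\|\hat{S}-S\|\le C\sqrt{\dim\log^2 n/n}$ on an event of probability $1-\tilde{O}(\cl^2 n^{-\dim})$. Writing $\hat{S}=\frac{1}{n_1}\sum_{i\in P_1}Y_iY_i^T-\bar{Y}\bar{Y}^T$, I would split $Y_iY_i^T$ into a noise part $\frac{1}{n_1}\sum_{i\in P_1}W_iW_i^T$, a signal part $\sum_k\hat{\pi}_k\mu_k\mu_k^T$, and the cross terms. The noise part concentrates to $\bar{\sigma}^2 I_\dim$ at rate $\psi_{\max}^2(\sqrt{\dim/n_1}+\dim/n_1)$ by standard sub-gaussian covariance concentration; taking the tail parameter $t\asymp\dim\log n$ yields both the $n^{-\dim}$ probability and, via $n_1=n/\log n$, the $\sqrt{\dim\log^2 n/n}$ rate. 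The signal term deviates from $M$ only through the empirical proportions $\hat{\pi}_k-\pi_k=O(\sqrt{\log n/n_1})$, and the cross terms together with $\bar{Y}\bar{Y}^T=O(\|\bar{Y}\|^2)$ are controlled by concentration of $\bar{Y}$ around $0$; a union bound over the $\cl$ clusters produces the $\cl^2$ factor.

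Finally I would apply the Davis--Kahan $\sin\Theta$ theorem with gap $\eig_{\cl-1}(M)$ to the top-$(\cl-1)$ eigenspaces of $\hat{S}$ and $S$, giving $\|\sin\Theta(U_{\cl-1},V_{\cl-1})\|\le 2\|\hat{S}-S\|/\eig_{\cl-1}(M)$. The hypothesis $\eig_{\cl-1}(M)\ge 5\psi_{\max}^2+C\sqrt{\dim\log^2 n/n}$ is exactly what is needed to cancel the $\psi_{\max}^2$-scaling of the noise concentration together with the signal-fluctuation term, forcing $\|\sin\Theta\|\le 1/2$. Since each $\mu_k-\mu_\ell\in V$, the projection loses little: $\|U_{\cl-1}^T(\mu_k-\mu_\ell)\|\ge\|\mu_k-\mu_\ell\|\sqrt{1-\|\sin\Theta\|^2}\ge\tfrac12\|\mu_k-\mu_\ell\|\ge\dmin/2$, which is the stated separation. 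Isotropy and conditional independence of $Y_i'=U_{\cl-1}^TY_i$ are then immediate: $U_{\cl-1}$ has orthonormal columns, so the projected noise has covariance $\sigma_k^2 I_{\cl-1}$, and $U_{\cl-1}$ depends only on $P_1$, so conditioning on $P_1$ leaves the points of $P_2$ independent.

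The main obstacle is the concentration step. The covariates form a mixture rather than a single sub-gaussian vector, $\dim$ grows with $n$, and I must simultaneously obtain the sharp $\sqrt{\dim\log^2 n/n}$ rate and the very strong $n^{-\dim}$ confidence; this requires carefully handling the high-probability tail $t\asymp\dim\log n$, the random cluster counts appearing in the signal part, the cross terms, and the empirical-mean correction $\bar{Y}\bar{Y}^T$, and matching the $\psi_{\max}^2$ scaling so that the $5\psi_{\max}^2$ term in the hypothesis precisely neutralizes it inside the Davis--Kahan ratio.
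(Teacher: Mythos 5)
Your proposal is correct, and it reaches the same conclusion by a genuinely different route at the key step. The paper also starts from $S=M+\bar{\sigma}^2 I_\dim$ and the concentration $\|\hat{S}-S\|\le C\sqrt{d\log^2 n/n}$ (Lemma~\ref{lem:cov-conc}, quoted from sub-gaussian covariance estimation rather than re-derived via your signal/noise/cross-term decomposition), but it does \emph{not} use Davis--Kahan. Instead, its Lemma~\ref{lem:lt_after_proj} argues by contradiction for each fixed $v$ in the span of the means: if $\|U_{\cl-1}^T v\|<\|v\|/2$, there is a unit vector $u$ in the discarded eigenspace with $|u^Tv|>\|v\|/2$, whence $\|\hat{S}u\|\ge(\lambda-3(\sigma_{\max}^2+\epsilon))/2$, contradicting the Weyl bound $\|\hat{S}u\|\le\sigma_{\max}^2+\epsilon$ when $\lambda>5(\sigma_{\max}^2+\epsilon)$; this is then applied to each difference $\mu_k-\mu_\ell$ with a union bound over the $\cl^2$ pairs (which is where the paper's $\cl^2$ comes from, whereas yours comes from the concentration step). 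Your $\sin\Theta$ argument buys two things: it controls the whole subspace at once, so no union bound over pairs is needed once the concentration event holds, and it makes transparent that the isotropic shift $\bar{\sigma}^2 I_\dim$ leaves the eigenvectors and the gap $\eig_{\cl-1}(M)$ untouched, so the $5\psi_{\max}^2$ term in the hypothesis is only needed to dominate the sampling error inside the Davis--Kahan ratio rather than entering the geometry as it does in the paper's contradiction. The paper's version is more elementary (only Weyl's inequality) and tracks the constants explicitly. One caveat on your write-up: to conclude $\|\sin\Theta\|$ is small you need $\eig_{\cl-1}(M)\gtrsim\|\hat{S}-S\|$ with an explicit constant, and you should say a word about why the hypothesis delivers this (the concentration error carries a hidden $\psi_{\max}^2$ prefactor and vanishes as $n\to\infty$); this is the same level of constant-bookkeeping the paper itself leaves to the unspecified $C$, so it is not a gap, but it deserves a sentence.
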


The proof of this lemma is deferred to the supplementary material. We believe the proof can be generalized to non-spherical cases as long as the largest eigenvalue of covariance matrix for each cluster is bounded.
Typically $\eig_{\cl-1}(M)$ signifies the amount of signal. For example, for the simple case of mixture of two gaussians with $\pi_1=1/2$, and $\mu_2=-\mu_1$, $\eig_{\cl-1}(M)=\|\mu_1\|^2$, which is essentially $d_{\min}^2/4$. Hence the condition on $\eig_{\cl-1}(M)$ essentially translates to a lower bound on the signal to noise ratio, i.e. $d_{\min}^2\geq 48\psi_{\max}^2+C'\sqrt{\frac{d\log^2 n}{n}}$ for some constant $C'$. When $d>\cl$, if one applies Lemma~\ref{lem:dimreduce} on the $\cl -1$ dimensional space, then as long as $d_{\min}^2 = \Omega(\psi_{\max}^2\cl)$, the separation in the low dimensional space also satisfies the separation condition in Proposition~\ref{prop:lowdim_kernel}. Thus the dimensionality reduction  brings down the separation condition in Proposition~\ref{prop:lowdim_kernel} from $\Omega(\psi_{\max}\sqrt{\dim})$ to $\Omega(\psi_{\max}\sqrt{\min(\cl,\dim)})$. 
\bk
	
	The sample splitting is merely for theoretical convenience which ensures that the projection matrix and the projected data are independent, resulting in the fact that the final projection is also an independent sample from a sub-gaussian mixture.  To be concrete, the labels of $P_1$ do not matter asymptotically, since they incur a relative error in $\|X_0-\sdp{K}\|_F/\|X_0\|_F$ less than $\sqrt{n^2/(m_{\min}^2\log n)}/\sqrt{\cl}\leq \sqrt{\alpha^2\cl/\log n}$, where $\alpha$ and $r$ are both constants.
	In our setting, the relative error in Proposition~\ref{prop:lowdim_kernel} is a small but non-vanishing constant, and so this additional vanishing error term does not affect it.  However this sample splitting step  is not necessary in practice~\citep{chaudhuri2009multi}, and so we do not pursue this further. %Also note that, the sample splitting yields two clusterings of the two splits. \rd One can merge these clusters by merging the clusters with the closest pairs of estimated means from different splits. 
%	\bowei{by merging estimated pairs of means from the two splits which are closest to each other. REPHRASE?}
%	However, this is not the main message of this work, we leave further discussion on this for future work.
	\bk

We now present the tuning procedure, and experimental results.

% Experiments
\section{Experiments}
\label{sec:exp}

\begin{figure}[t!]
\centering
\begin{tabular}{ccc}
\hspace{-15pt}\includegraphics[width=.3\textwidth]{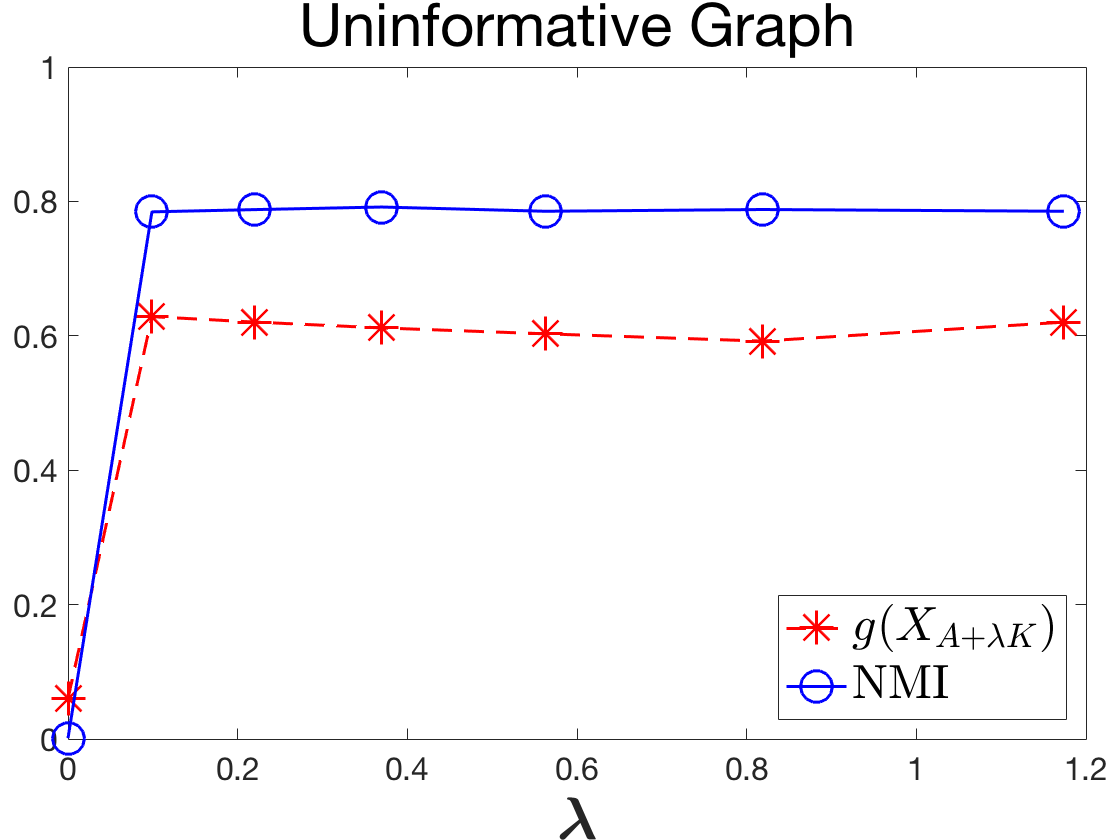} &
\hspace{-15pt}\includegraphics[width=.3\textwidth]{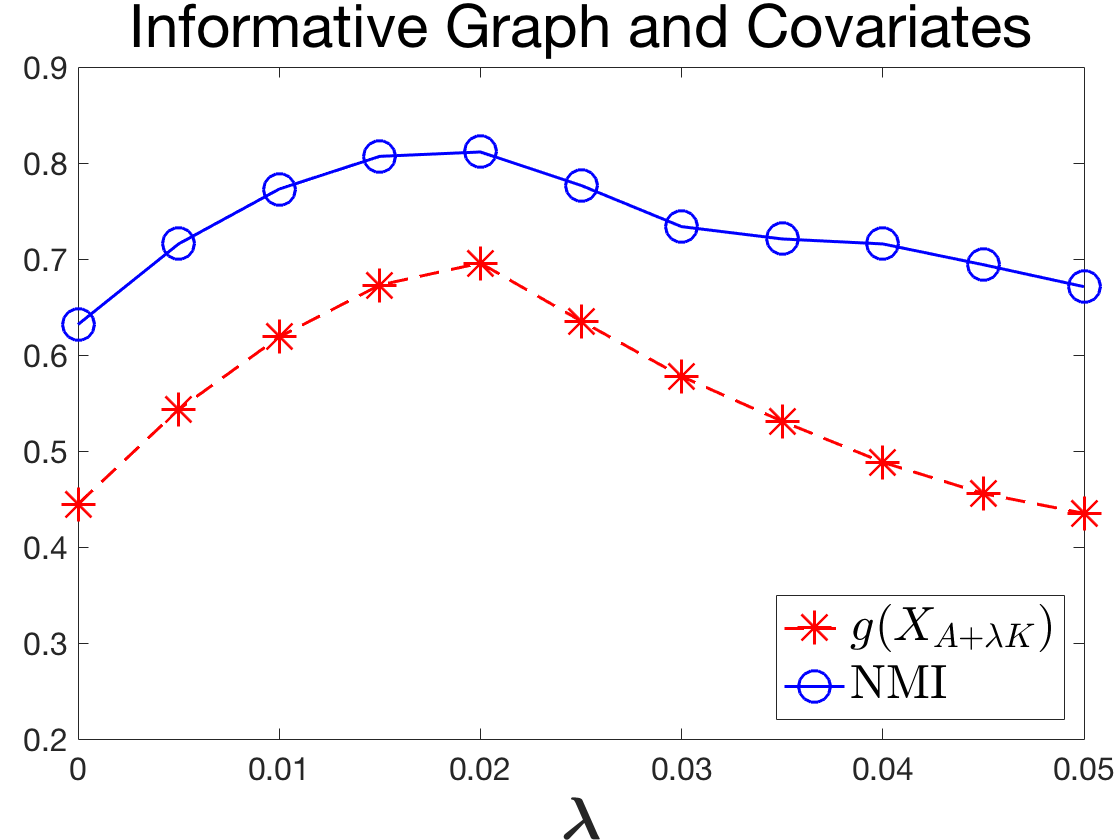} &
\hspace{-15pt}\includegraphics[width=.3\textwidth]{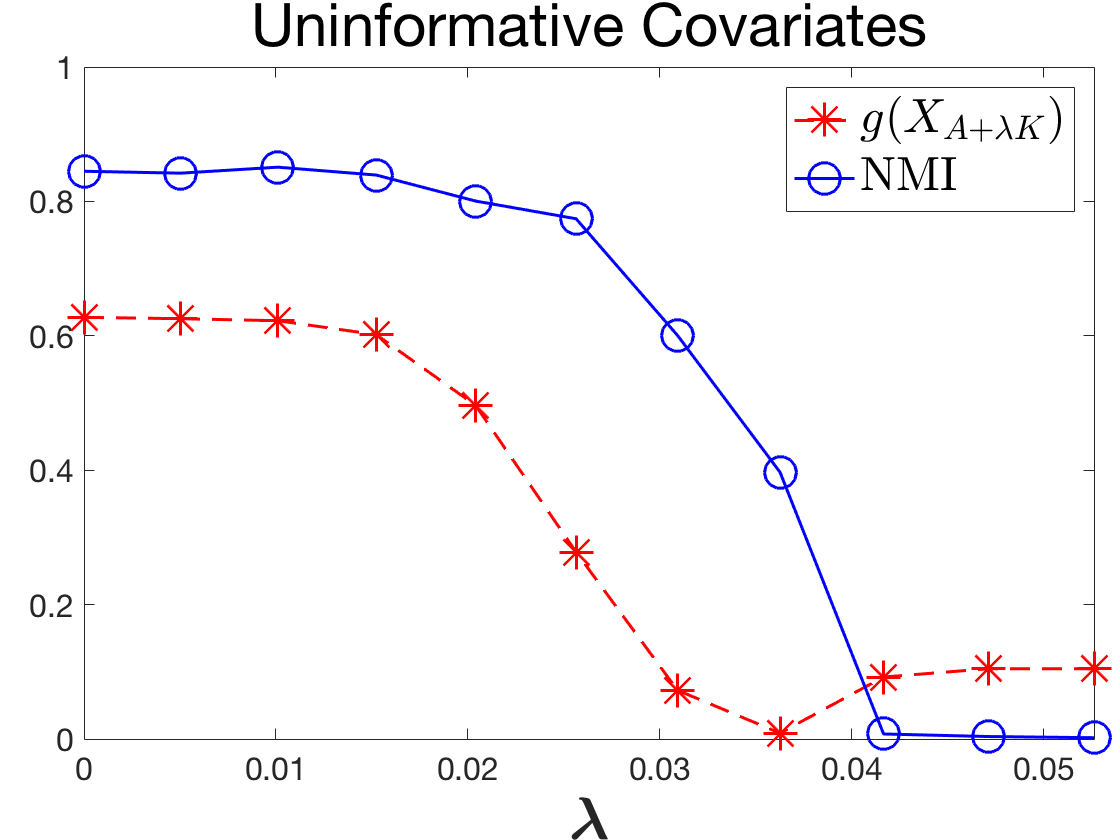}\\
%&\hspace{-15pt} \includegraphics[width=.26\textwidth]{fig/jasa_mmin.png}\\
(a)  & (b) & (c) %& (d)
\end{tabular}
\caption{Tuning: 
(a) $B = 0.005E_3, n=1000$, $\dim=6,\dmin=15\sigma$;
(b) $\dim=6, \dmin=1.3, \sigma=(1,1,5), B=\diag(0.004,0.024,0.024)+0.004E_3$;
(c) $\dim=6, \dmin=0, B = 0.0144I_3+0.0016E_3$.
%(d) NMI as specified $\frac{\mmin}{n}$ changes; the red dashed line shows the true $\frac{\mmin}{n}$.
}
\label{fig:tuning}
\end{figure}
In this section, we present results on real and simulated data. %Additional experimental results are deferred to supplementary material.
The cluster labels in our method are obtained by spectral clustering of the solution matrix returned by the SDP.
We will use SDP-comb, SDP-net, SDP-cov to represent the labels estimated from \sdp{A+\lamn K}, \sdp{A} and \sdp{K} respectively. 
Performance of the clustering is measured by normalized mutual information (NMI), which is defined as the mutual information of the two distributions divided by square root of the product of their entropies. 
We have also calculated classification accuracy and they show similar trends, so only NMI is reported in this section. 
For real and simulated data, we compare: 
(1) Covariate-assisted spectral clustering (ACASC) \citep{binkiewicz2014covariate}; (2) JCDC \citep{zhang2015community}, (3) SDP-comb, (4) SDP-net and (5) SDP-cov. The last two are used as references of graph-only and covariate-only clustering respectively.

\subsection{Implementation and computational cost}
Solving semidefinite programming with linear and non-linear constraints has been a challenging problems in numerical optimization community. Many SDPs proposed in statistical literature~\citep{cai2015robust, chen2014statistical, amini2014semidefinite} are solved by the alternating descent method of multipliers (ADMM) algorithm~\citep{boyd2011distributed}. Although ADMM is tractable for middle-sized problems and reasonable numerical behavior, whether it convergences in presence of non-negative constraints, which is prevalent in network literatures, remains an open problem. Recently, \citet{yang2015sdpnal} propose a majorized semismooth Newton-CG augmented Lagrangian method, called SDPNAL+, which is provably convergent. We solve the SDP using the matlab package of SDPNAL+ in all our experiments\footnote{The code used for the experiment can be found at \url{https://github.com/boweiYan/SDP_SBM_unbalanced_size}.}. The package provides an efficient implementation of the algorithm. Solving the SDP for matrix of size $1000\times 1000$ takes less than a minute on a Macbook with a 1.1 GHz Intel Core M processor.

\subsection{Choice of Tuning Parameters}
\label{sec:tuning}

As we pointed out earlier, the elementwise upper bound $\frac{1}{\mmin}$ is only for convenience of theoretical analysis. In the implementation, we do not enforce this constraint. So the main tuning parameters would be the scale parameter in the kernel matrix $\eta$ and the tradeoff parameter between graph and covariates $\lamn$. In most of our experiments the number of clusters is assumed known. In this section, we also provide a practical way to choose among candidates of $\cl$ when it is not given.
\paragraph{Choice of $\eta$}

We use the method proposed in \citet{shi2009data} to select the scale parameter. The intuition is to keep enough (say $10\%$) of the data points in the ``range" of the kernel for most (say $95\%$) data points.
Given the covariates, we first compute the pairwise distance matrix. Then for each data point $Y_i$, compute $q_i$ as $10\%$ quantile of $d(Y_i, Y_j), \forall j\in [n]$.
The bandwidth is defined as 
\bas{
w = \frac{95\%\mbox{ quantile of $q_i$}}{\sqrt{95\%\mbox{ quantile of }\chi^2_d}}
}
and scale parameter $\eta = \frac{1}{2w^2}$.

Note when the data is high-dimensional, we will first conduct dimensionality reduction as in Section~\ref{sec:dim_red}, then use the intrinsic dimension to tune the scale parameter.

\paragraph{Choice of $\lamn$}
As $\lamn$ increases, the resulting \sdp{A+\lamn K} clustering gradually changes from \sdp{A} clustering to \sdp{K} clustering. Our theoretical results show that, with the right $\lamn$, $ \sdp{A+\lamn K}$ and $X_0$ should be close, and hence also have similar eigenvalues. 
Let $\eig_i(M)$ be the $i$-th eigenvalue of matrix $M$.
Define the eigen gap function for clustering matrices $g(X):=(\eig_\cl(X)-\eig_{\cl+1}(X))/\eig_\cl(X)$. 
Using Weyl's inequality and the fact that  $\opnorm{ \sdp{A+\lamn K} -X_0} \le \| \sdp{A+\lamn K} -X_0\|_F $, we have: $\eig_\cl(X_0)-\| \sdp{A+\lamn K} -X_0\|_F \le \eig_\cl( \sdp{A+\lamn K})\le\eig_\cl(X_0)+\|  \sdp{A+\lamn K} -X_0\|_F$.
Since $g(X_0)=1$, we pick the $\lamn$ maximizing $g( \sdp{A+\lamn K})$. In Figure \ref{fig:tuning} (a)-(c), figures from left to right represent the situation where graph is uninformative (\er), both are informative and covariates are uninformative. We plot  $g( \sdp{A+\lamn K})$ and NMI of the clustering from $ \sdp{A+\lamn K}$ with the true labels against $\lamn$. 
Figure \ref{fig:tuning} shows that $g( \sdp{A+\lamn K})$ and NMI of the predicted clustering have a similar trend, justifying the effectiveness of the tuning procedure.

%In Figure~\ref{fig:tuning}-(d), we generate two clusters with {\footnotesize $B = \begin{bmatrix}0.024   & 0.0003\\   0.0003   &0.024\end{bmatrix}$}, and sizes being 160 and 240 nodes respectively. We alter the specified $\mmin$ from 40 to 200, and show the median and interquartile range over 20 runs. We can see that the solution of the SDP is robust to the choice of $\mmin$.In all the experiments below, we use $m_{\min}$ as the true minimum of the cluster sizes. 

\paragraph{Unknown number of clusters}
In many real world settings, it is generally hard to possess the knowledge of number of clusters. Methods are proposed for selecting number of blocks under sparse stochastic block models~\citep{le2015estimating}, but most of these methods are designed specific for graph adjacency matrix and cannot be generalized to continuous matrix scenarios. We observe that the eigen gap acts as an informative indicator for picking the number of clusters. So when the number of clusters is unknown, we run the SDP over a grid of ${\lamn, k}$, and choose the pair that maximizes the eigen gap.
As we show in Figure~\ref{fig:unknownr}, we construct two settings and test the performance of using eigen gap to select $r$. In the first setting, the true model has 3 clusterings with proportion $3:4:5$, the probability matrix is {\footnotesize $B=0.01*\begin{bmatrix}1.6 &1.2 &0.16\\
        1.2& 1.6& 0.02\\ 
        0.16& 0.02& 1.2\end{bmatrix}$}. And the covariates are high dimensional Gaussian centered at $\mu_1=(0,2,0\cdots, 0)$, $\mu_2=( -1,-0.8,0\cdots, 0)$, $\mu_3=( 1,-0.8,0\cdots, 0)$. %The effect of dimensionality reduction on the pairwise distance matrix is deferred to the Supplementary material. 
        We sample $n=800$ data points, and run SDP on top of it with different choice of $\lamn$ and specified number of clusters $k$. For each pair of parameter, we compute the NMI and eigengap and plot them on the upper and lower panel of Figure~\ref{fig:unknownr}(a). As we can see, the eigen gap presents a similar trend as the NMI, hence picking the pair that optimizes eigen gap will have a relatively high NMI as well. Note here the mis-specified $k=2$ has a higher NMI than that of the true value of $r$. This tells us even the number of clusters is mis-specified, the SDP is still able to find structure that correlates with the underlying model. This phenomenon is also observed in several other works \citep{yan2017exact, perry2015semidefinite}.
    
In the second scenario, we generate a planted partition model with 10 equal-sized clusters, where $B=0.046I_{10}+0.004E_{10}$, along with Gaussian covariates centered at $[3*I_{10} \ | \ \bm{0}_{3,90}]$. We conduct the same type of experiment as above and plot the NMI and eigengap. In this case, the eigen gap succussfully recovered the true number of clusters. 
    
\begin{figure}
\centering
\begin{tabular}{cc}
\includegraphics[width=.45\textwidth, height=6cm]{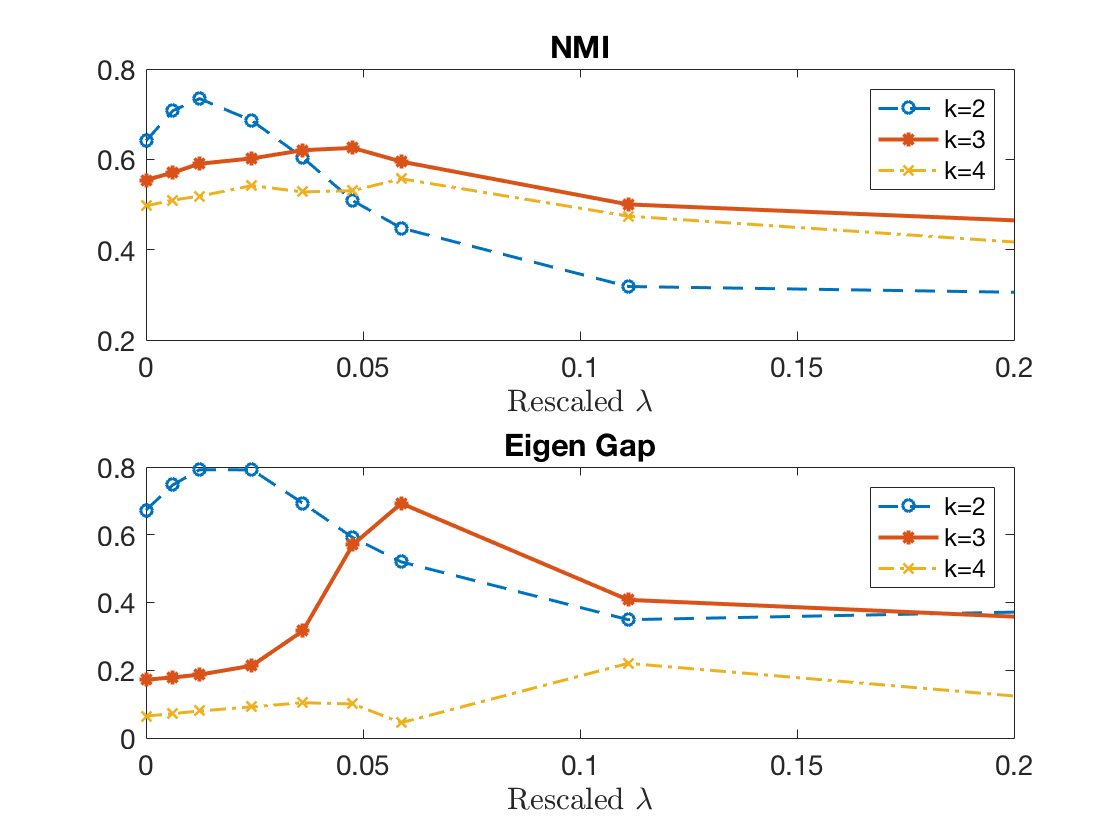}&
\includegraphics[width=.45\textwidth, height=6cm]{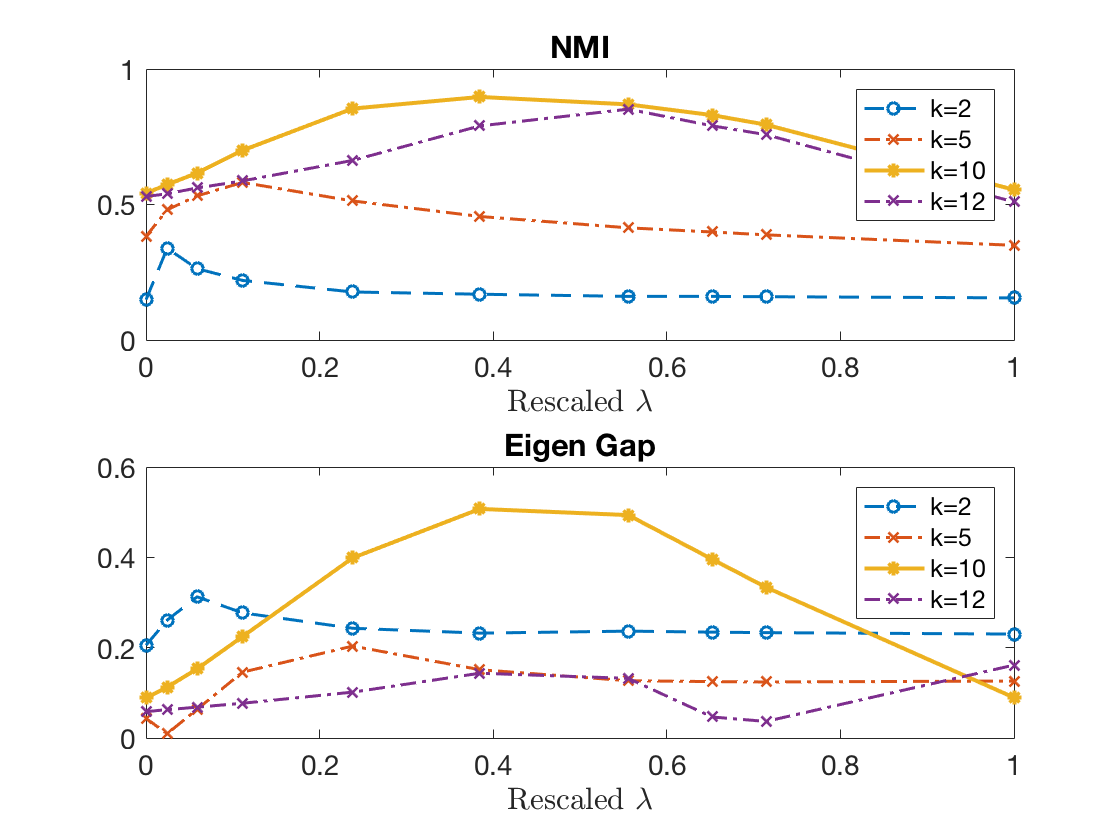}\\
(a) true $r=3$ & (b) true $r=10$.
\end{tabular}
\caption{NMI and eigen gap for various choice of $\cl$.}
\label{fig:unknownr}
\end{figure}

\subsection{Simulation Studies}
\label{sec:simulation}
In this part we consider two simulation settings. In the first setting, we generate three clusters with sizes 3:4:5, with $n=800$. The probability matrix is {\footnotesize $B=0.01*\begin{bmatrix}1.6 &1.2 &0.16\\
        1.2& 1.6& 0.02\\ 
        0.16& 0.02& 1.2\end{bmatrix}$}, and the covariates for each cluster are generated with $100$ dimensional unit variance isotropic Gaussians, whose centers are only non-zero on the first two dimensions with $\mu_1=(0,2,0\cdots, 0)$, $\mu_2=( -1,-0.8,0\cdots, 0)$, $\mu_3=( 1,-0.8,0\cdots, 0)$. This is the same setting as in the first simulation for unknown $r$.
         In this example, the network cannot separate out clusters one and two well, whereas the covariates can. On the other hand, clusters two and three are not well separated in the covariate space, while they are well separated using the network parameters.
 The experiments are repeated on 10 independently generated samples and the box plot for NMI is shown as in Figure~\ref{fig:simu1}(c).
\begin{figure}[!ht]
\centering
\begin{tabular}{@{\hspace{-2em}}ccc}
\includegraphics[width=0.35\textwidth]{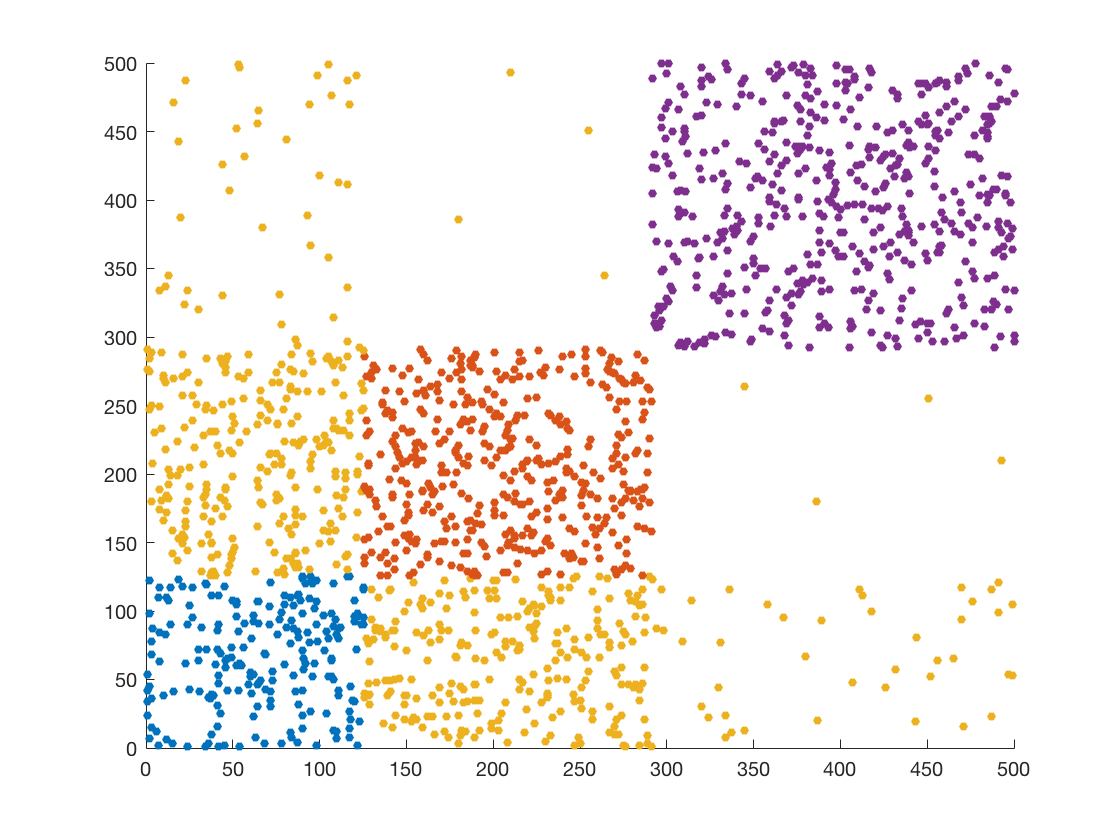}&
\includegraphics[width=0.35\textwidth]{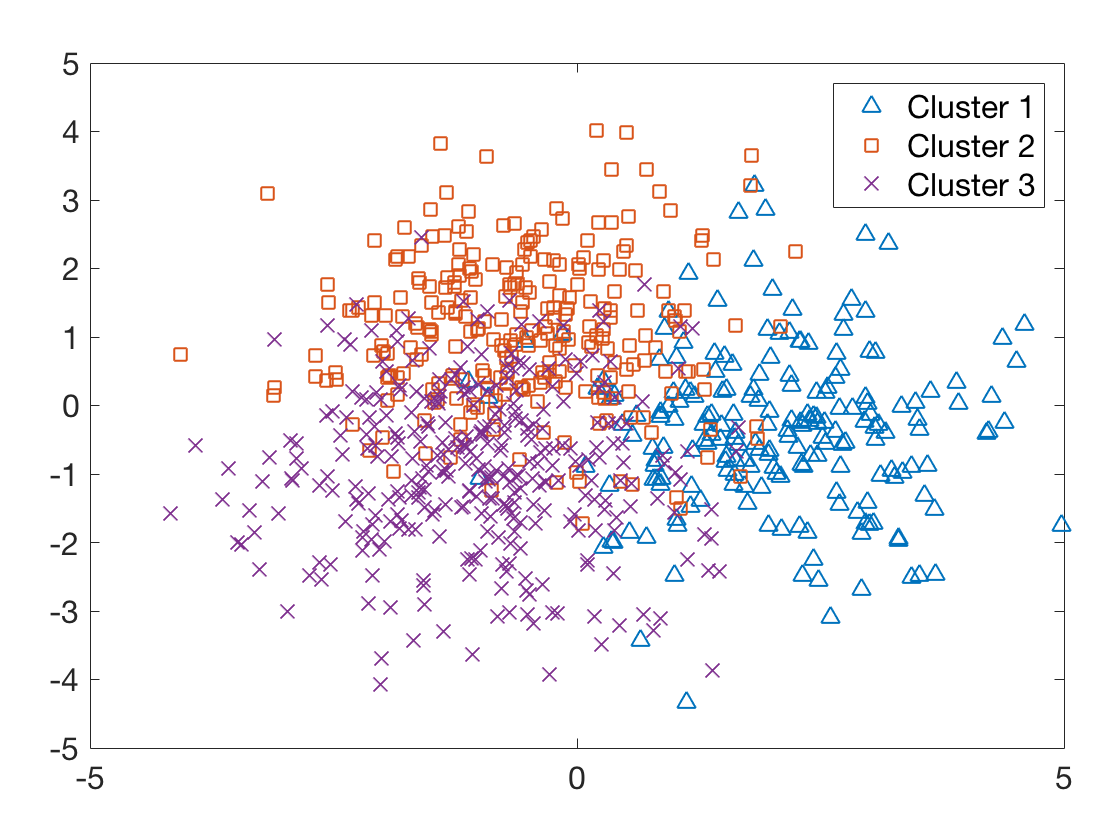}&
\includegraphics[width=0.35\textwidth]{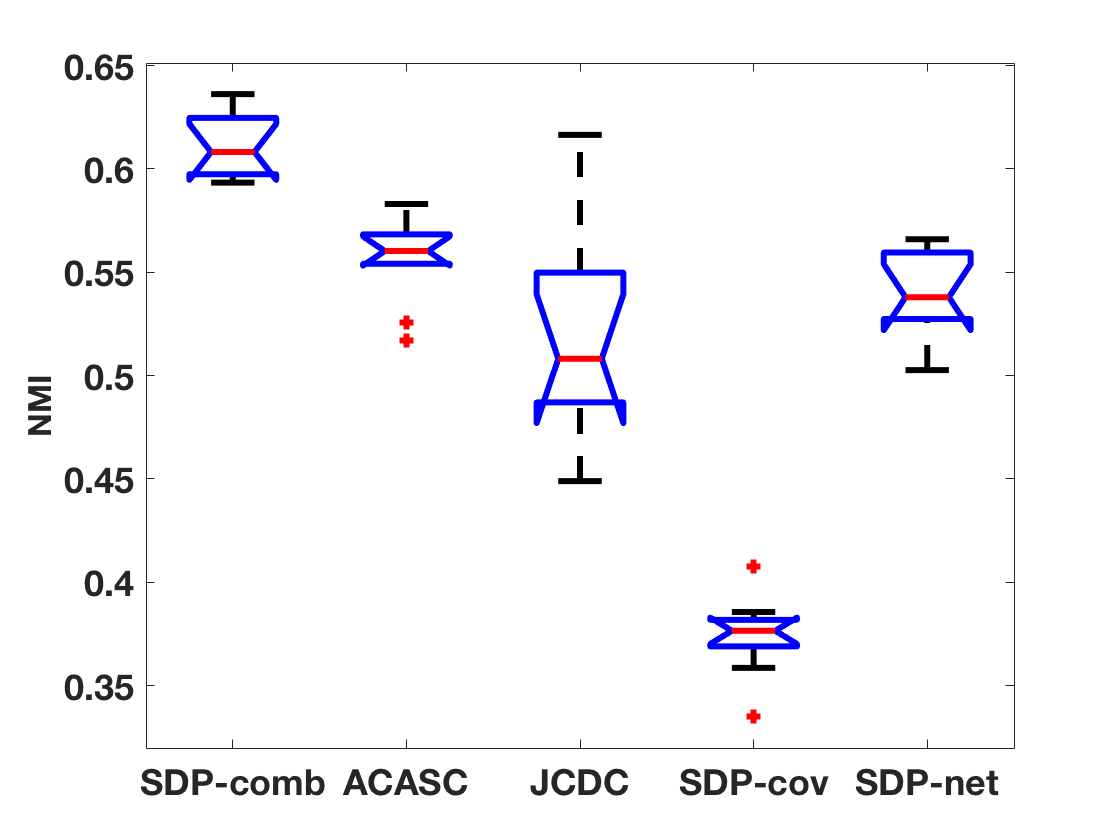}\\
Simulation 1: (a) Graph & (b) Covariates - 1 & (c) NMI - 1\\
\includegraphics[width=0.35\textwidth]{orthogonal_scatter_color.png}&
\includegraphics[width=0.35\textwidth]{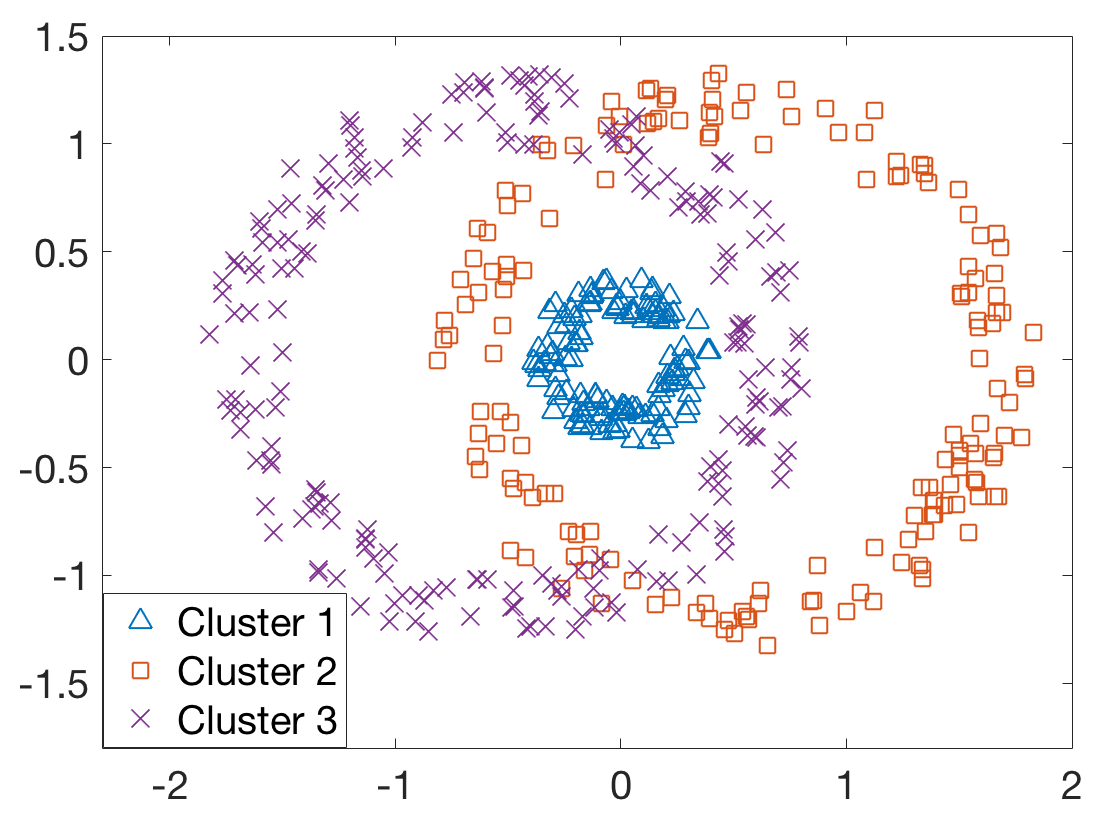}&
\includegraphics[width=0.35\textwidth]{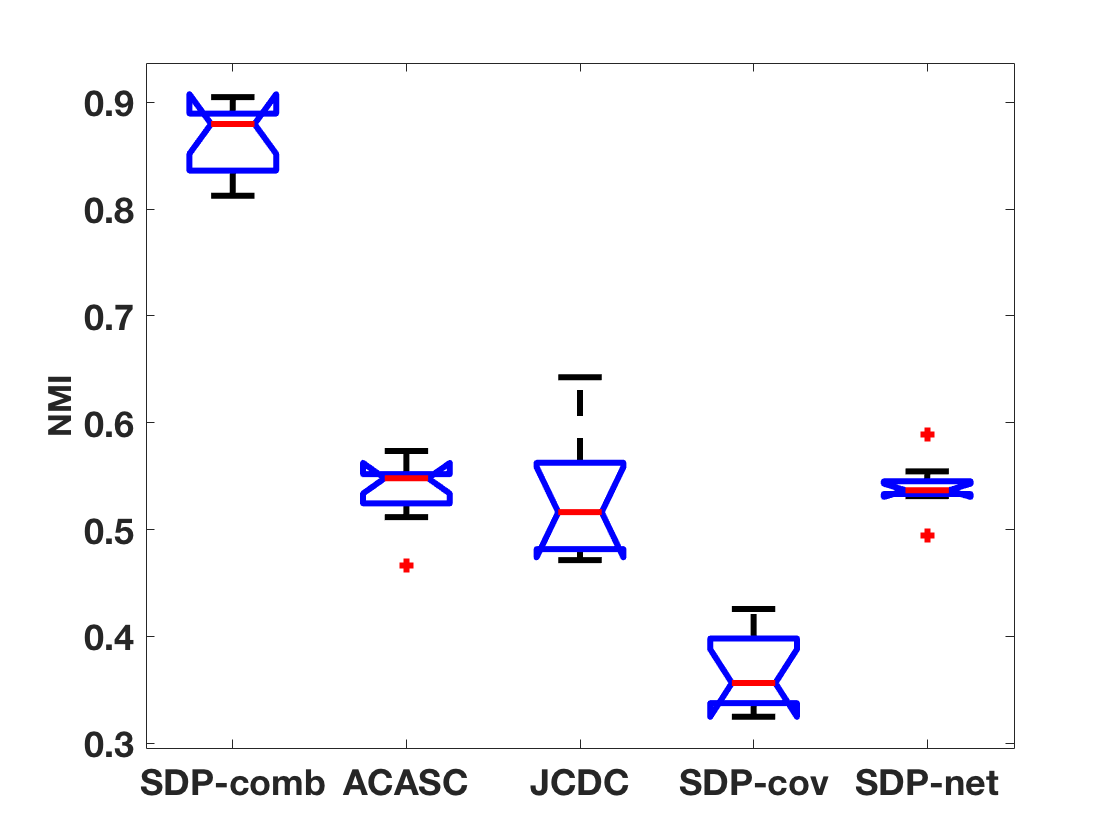}\\
Simulation 2: (d) Graph & (e) Covariates - 2 & (f) NMI - 2
\end{tabular}
\caption{The first and second rows have results for isotropic Gaussian covariates and covariates lies on a nonlinear manifold respectively.  We plot the adjacency matrix $A$ in (a) and (b), where blue, red and purple points represent within cluster edges for 3 ground truth clusters respectively and yellow points represent inter-cluster edges. In (b) and (e) we plot covariates ; different shapes and colors imply different clusters. (c) and (f) show the box plots for NMI. }
\label{fig:simu1}
\end{figure}
%\begin{figure}[h]
%\centering
%\begin{tabular}{ccc}
%\includegraphics[width=0.3\textwidth]{fig/orthogonal_scatter_color.png}&
%\includegraphics[width=0.3\textwidth]{fig/non_linear_scatter.png}&
%%\includegraphics[width=0.3\textwidth]{fig/simulation_linear_box.png}
%\includegraphics[width=0.3\textwidth]{fig/synthetic_box_nonlinear.png}\\
%(a) Graph & (b) Covariates & (c) Box plot
%\end{tabular}
%\caption{}
%\label{fig:simu2}
%\end{figure}
In the second row of Figure~\ref{fig:simu1}, we examine covariates with nonlinear cluster boundaries. The graph used here is the same as above, and the covariates are 2-dimensional, whose scatter plot is shown in Figure~\ref{fig:simu1}(e). In this case, the kernel matrix is able to pick up local similarities hence performs better than combination via inner product similarity as used in ACASC. In both simulations, SDP-comb outperforms others.

\subsection{Real World Networks}
Now we present results on a real world social network and an ecological network. %We compare:  (1) Covariate-assisted spectral clustering (ACASC) \citep{binkiewicz2014covariate}; (2) JCDC \citep{zhang2015community} and (3) our \ref{sdp2}. As references of graph-only and covariate-only clustering, we use~\ref{eq:sdp1} and~\ref{eq:sdp3} respectively.
The performance of clustering is evaluated by NMI with the ground truth labels. 
\begin{figure}[!ht]
\centering
\begin{tabular}{ccc}
\hspace{-2pt}\includegraphics[width=.29\textwidth]{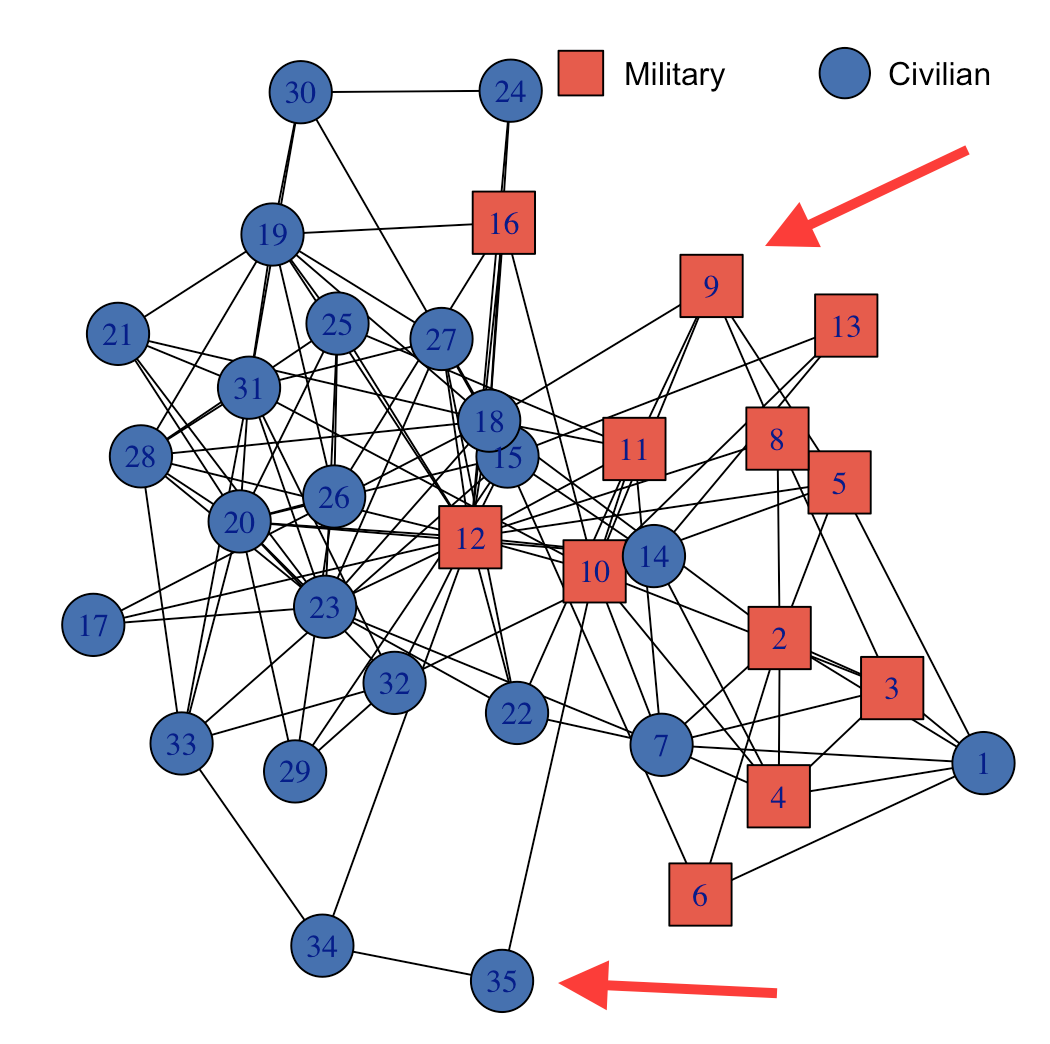} &\hspace{-5pt}
\includegraphics[width=.35\textwidth]{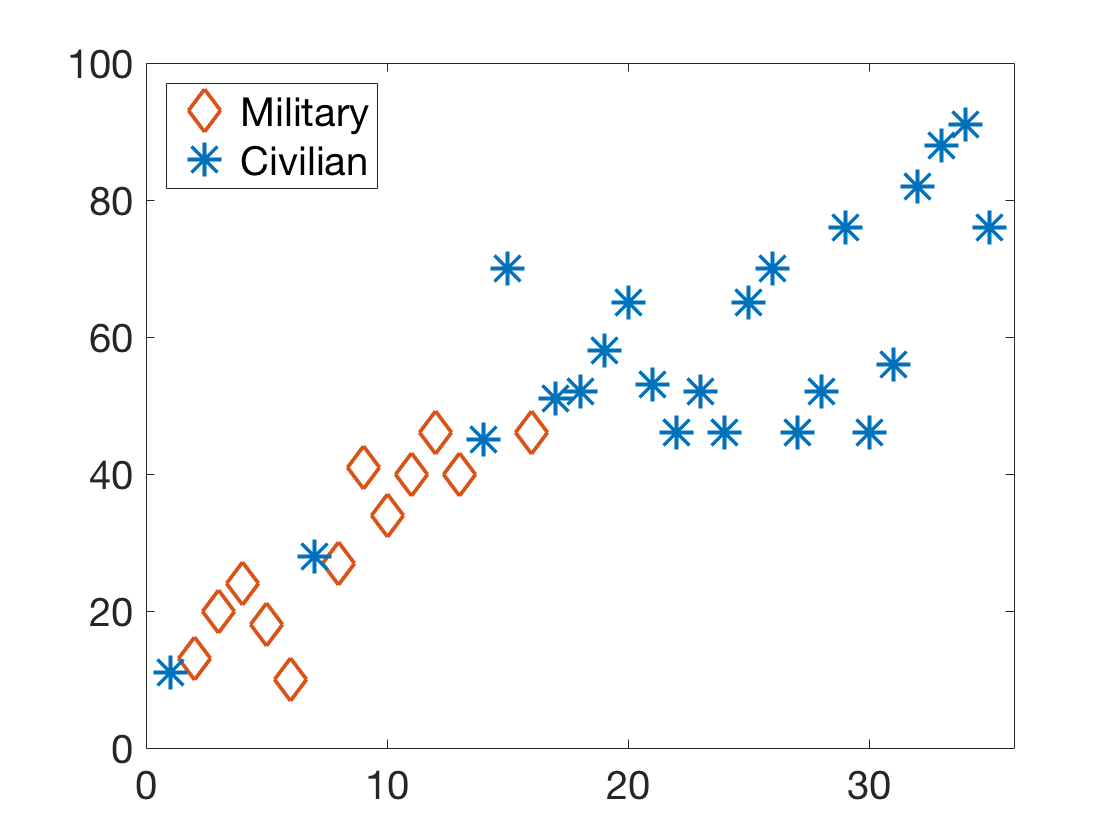} &\hspace{-4pt}
\hspace{-2pt}\includegraphics[width=.29\textwidth]{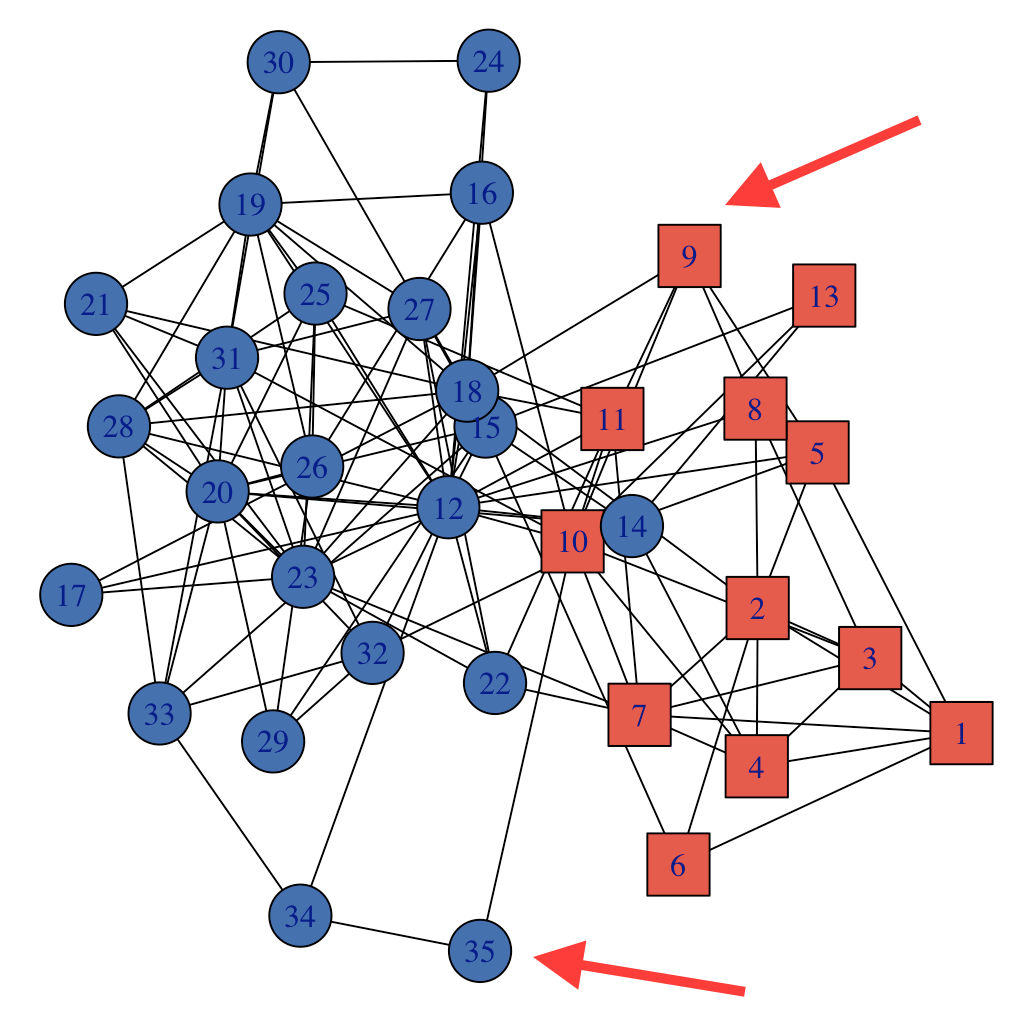}\\
(a) Ground truth & (b) Node feature &\hspace{-1pt} (c) Predicted by SDP-comb
\end{tabular}
\caption{Mexican political network.
}
\label{fig:mex_year}
\end{figure}

\paragraph{Mexican political elites}
As discussed before, this network \citep{gil1996political} depicts the political, kinship, or business interactions between 35 Mexican presidents and close collaborators, etc. The two ground truth clusters consist of the military and the civilians, indicating the background of the politician. The year in which a politician first held a significant governmental position, is used as a covariate. Figure \ref{fig:mex_year}(b) shows that the covariate gives a good indication of the labels. This is because the military dominated the political arena after the revolution in the beginning of the twentieth century, and were succeeded by the civilians.

Table \ref{tab:real} shows the NMI of all methods, where our method outperforms other covariate-assisted approaches. From Figure \ref{fig:mex_year}(a, c), for example, node 35 has exactly one connection to each of the military and civilian groups, but seized power in the 90s, which strongly indicates  a civilian background. On the other hand, node 9 took power in 1940, a year when civilian and military had almost equal presence in politics, making it hard to detect node 9's political affiliation. However, this node has more edges to the military group than the civilian group. By taking the graph structure into consideration, we can correctly assign the military label to it.
\begin{figure}[!h]
\centering
\begin{tabular}{ccc}
\hspace{-15pt} 
\includegraphics[width=.35\textwidth, height=3.8cm]{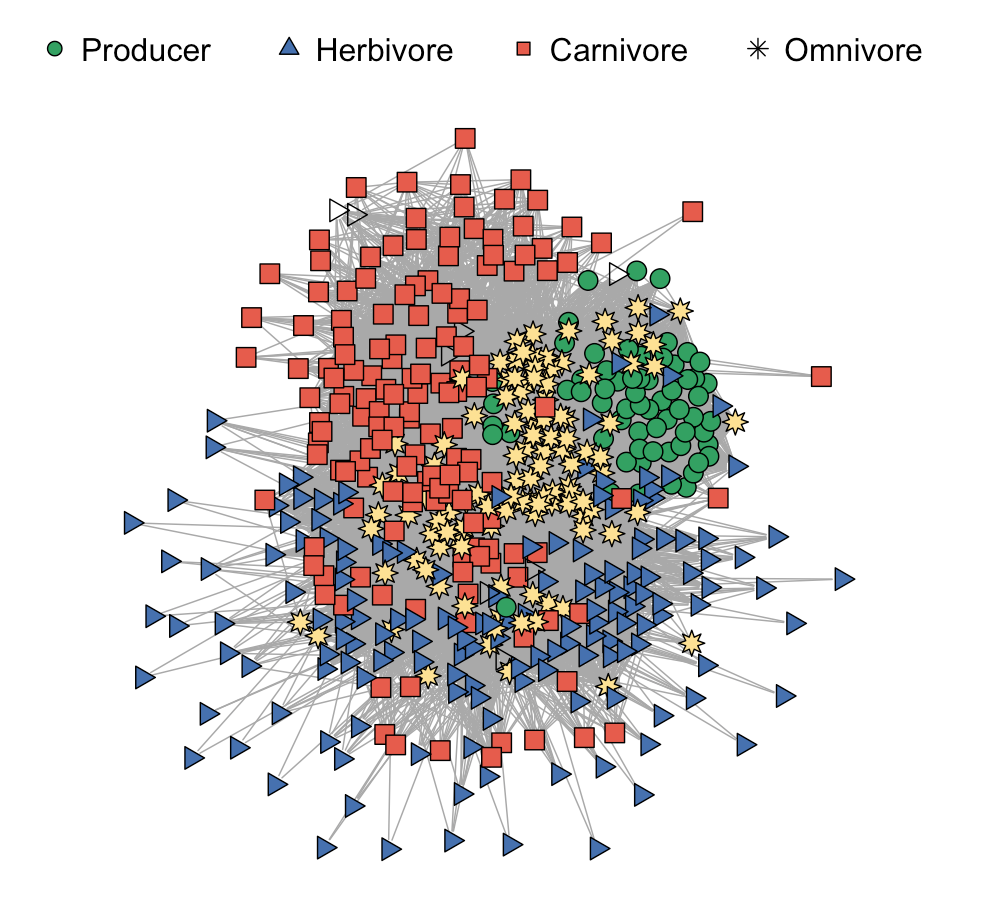}& 
\includegraphics[width=.32\textwidth, height=3.7cm]{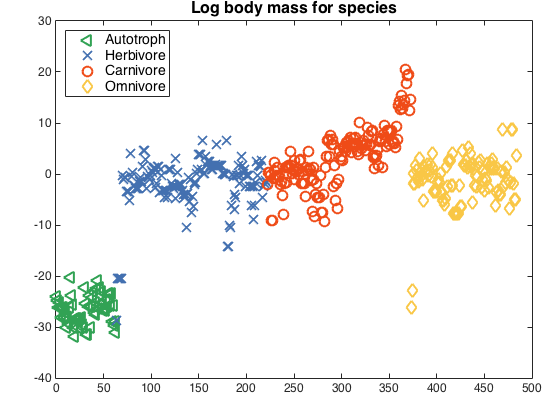} &
\hspace{0pt}\includegraphics[width=.3\textwidth, height=3.7 cm]{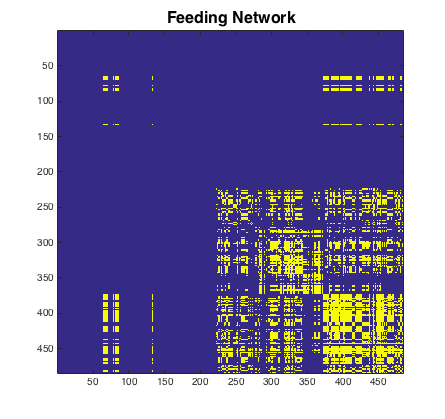} \\
(a)&(b)&(c)\\
\includegraphics[width=.31\textwidth, height=3.7 cm]{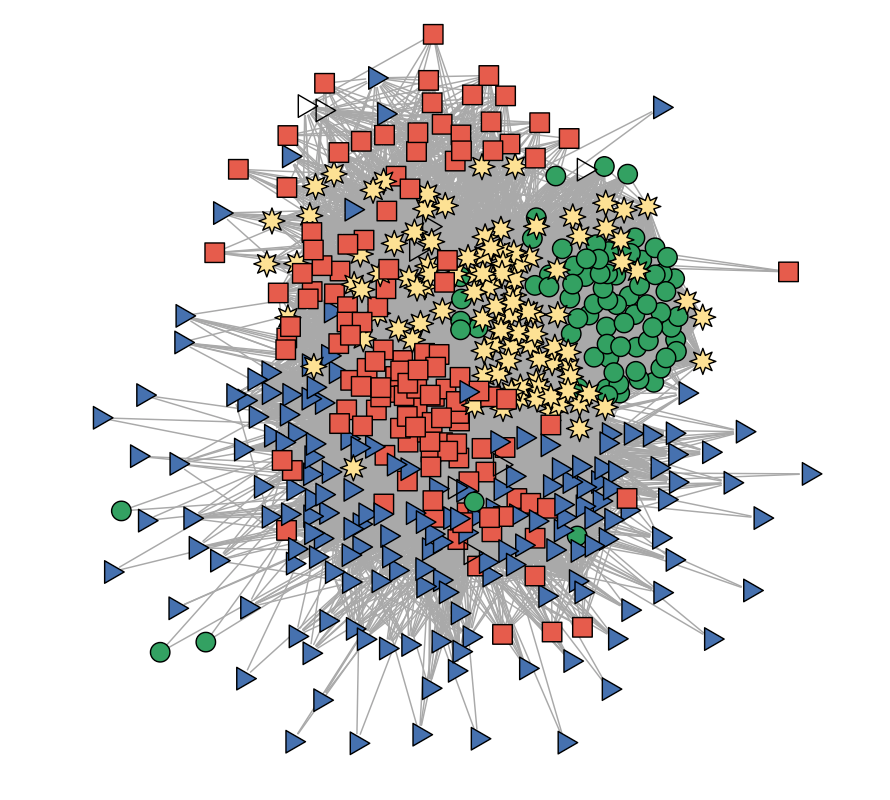} & \hspace{-15 pt}
\includegraphics[width=.31\textwidth, height=3.7 cm]{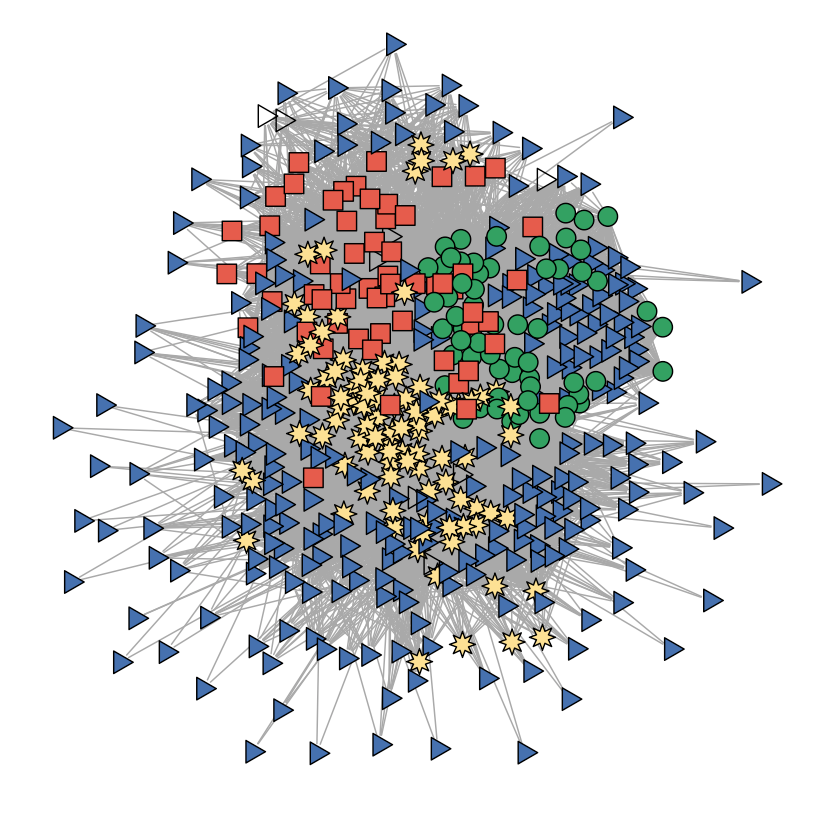}& \hspace{-15 pt}
\includegraphics[width=.31\textwidth, height=3.7 cm]{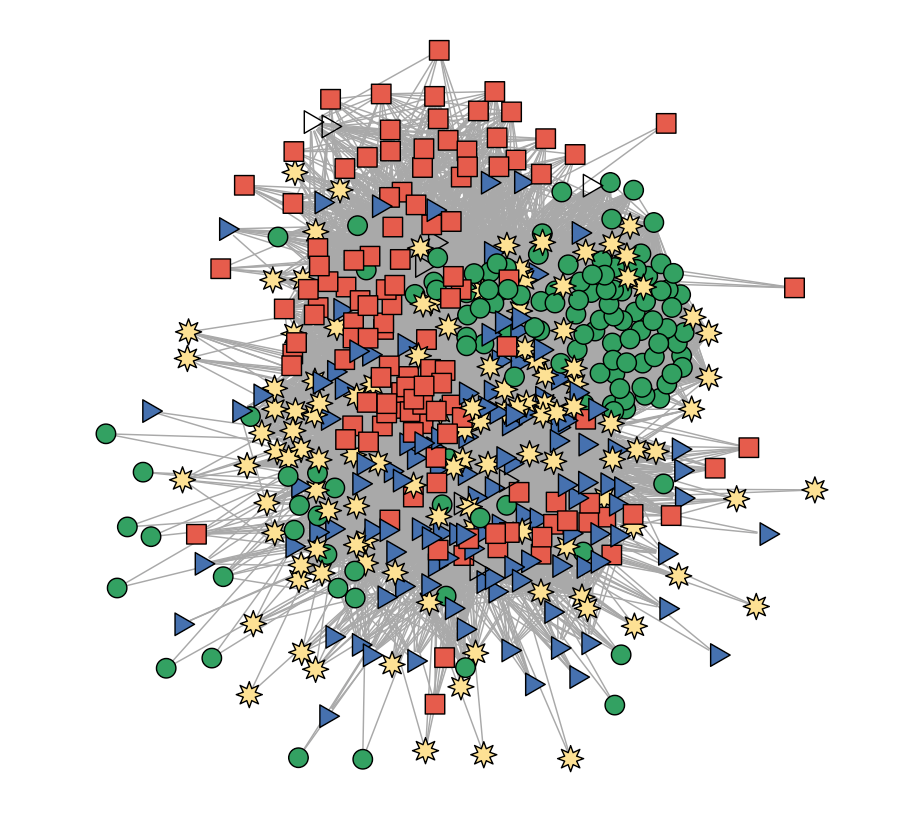}\\
(d) & (e) & (f) 
\end{tabular}
\caption{Weddell sea network: (a) True labels;  (b) Log body mass; (c) Constructed adjacency matrix $A_\tau$; we show labels from (d) SDP-comb; (e) SDP-net; (f) SDP-cov.}
\label{fig:log_body_mass}
\end{figure}

\paragraph{Weddell sea trophic dataset}
%\begin{figure}[!ht]
%\centering
%\begin{tabular}{cccc}
%\hspace{-20pt}
%\includegraphics[width=.28\textwidth, height=2.8cm]{fig/wed_gt.png} &\hspace{-15pt}
%\includegraphics[width=.26\textwidth, height=2.5cm]{fig/wed_sdp2.png} & \hspace{-15 pt}
%\includegraphics[width=.26\textwidth, height=2.5cm]{fig/wed_g.png}& \hspace{-15 pt}
%\includegraphics[width=.26\textwidth, height=2.5cm]{fig/wed_f.png}\\
%(a) & (b) & (c) & (d)
%\end{tabular}
%\caption{Weddell sea network: (a) Ground truth label;  (b) Labeled by \ref{sdp2}; (c) Labeled by \ref{eq:sdp1}; (d) Labeled by \ref{eq:sdp3}.}
%\label{fig:ig_weddell}
%\end{figure}

The next example we consider is an ecological network collected by \cite{jacob2011role} describing the marine ecosystem of Weddell Sea, a large bay off the coast of Antarctica. The dataset lists 489 marine species and their directed predator-prey interactions, as well as the average adult body mass for each of the species. We use a thresholded symmetrization of the directed graph as the adjacency matrix. Let $G$ be the directed graph, the $(i,j)^{th}$ entry of $GG^T$ captures the number of other species which $i$ and $j$ both feed on. We create binary matrices $A_\tau=1(GG^T\geq \tau)$. Choosing different $\tau$'s between 1 to 10 gives similar clustering. We use $\tau=5$.

All species are labeled into four categories based on their prey types. Autotrophs (e.g. plants) do not feed on anything. Herbivores feed on autotrophs. Carnivores feed on animals that are not autotrophs, and the remaining are omnivores, which feed both on autotrophs and other animals (herbivore, carnivore, or omnivores). Since body masses of species vary largely from nanograms to tons, we work with the normalized logarithm of mass following the convention in~\citet{newman2015structure}. Figure \ref{fig:log_body_mass}(b) illustrates the log body mass for species. Without loss of generality, we order the nodes as autotrophs, herbivores, carnivores and omnivores.

In Figures \ref{fig:log_body_mass}(c), we plot $A_\tau$. Since the autotrophs do not feed on other species in this dataset, and since herbivores do not have too much overlap in the autotrophs they feed on, the upper left corner of the input network is extremely sparse. On the other side, the body sizes for autotrophs are much smaller than those of other prey types. Therefore the kernel matrix clearly separates them out. 

\begin{table}[]
\centering
\begin{tabular}{l|ccccc}
Dataset   & SDP-net  & SDP-cov & SDP-comb   &  ACASC  & JCDC  \\
\hline
Mexican politicians & 0.37 & 0.43 & \bf{0.46} & 0.37 & 0.25\\
\hline
Weddell Sea & 0.36 & 0.22 & \bf{0.51}	& 0.32 & 0.42\\
\hline
\end{tabular}
\caption{\label{tab:real} NMI with ground truth for various methods}
\end{table}
We see that SDP-net (Figure \ref{fig:log_body_mass}(e)) heavily misclusters the autotrophs since it only replies on the network. SDP-net (Figure \ref{fig:log_body_mass}(f)) only takes the covariates into account and cannot distinguish herbivores from omnivores, since they possess similar body masses. However, SDP-comb (Figure \ref{fig:log_body_mass}(d)) achieves a significantly better NMI by combining both sources. 
Table \ref{tab:real} shows the NMI  between predicted labels and the ground truth from SDP-comb, JCDC and ACASC. While JCDC and ACASC can only get as good as the the best of graph or covariates, our method achieves a higher NMI.

\section{Discussion}
% Discussion
%\rd
%future work: extension to multiple views. extend proof to nonlinear case and nonisotropic mixtures. different scale parameters for different features. 
 In this paper, we propose a regularized convex optimization framework to infer community memberships jointly from sparse networks and finite dimensional covariates. We theoretically show that our framework can improve clustering accuracy of either source under weaker separation conditions. In particular, when each source only has partial information about the clustering, our methodology can lead to high clustering accuracy, when either source fails.
 We demonstrate the performance of our methodology on simulated and real networks, and show that it in general performs better than other state-of-the-art methods. While for ease of exposition we limit ourselves to two sources, our method can be easily generalized to multiple views or sources. Empirically, we demonstrate that our method works for covariates with nonlinear cluster  boundaries; we intend to extend our theoretical analysis to this setting and non-isotropic covariates as well.

\section*{Acknowledgements}
We thank Arash Amini and Yuan Zhang for generously sharing their code. We are grateful to Soumendu Mukherjee, Peter J. Bickel, Dave Choi and Harry Zhou for interesting discussions on our paper.

%{\small
\bibliographystyle{agsm}
\bibliography{Bibliography}

\appendix

\section{Background materials on sub-gaussian random vectors}
\label{sec:proof_kernel}
%In our analysis, we make use of some useful properties of sub-gaussian random vectors. 
In this section, we present some properties of sub-gaussian random variables. A sub-gaussian random variable is defined by the following equivalent properties. More discussions on this topic can be found in \cite{vershynin2010introduction}.

\begin{lemma}[\cite{vershynin2010introduction}]
The sub-gaussian norm of $X$ is denoted by $\|X\|_{\psi_2}=\sup_{p\ge 1} p^{-1/2}(\bE |X|^p)^{1/p}$. A random vector $X\in \bR^n$ is defined to be sub-gaussian if the one-dimensional marginals $\innerprod{X}{x}$ are sub-gaussian random variables for all $x\in \bR^n$ with sub-gaussian norm $\|X\|_{\psi_2} = \sup_{x\in S^{n-1}}\|\innerprod{X}{x}\|_{\psi_2}$. 

Every sub-gaussian random variable $X$ satisfies:
\begin{itemize}
\item[(1)] $P(|X|>t) \le \exp(1-ct^2/\|X\|^2_{\psi_2})$ for all $t\ge 0$;
\item[(2)] (Rotation invariance) Consider a finite number of independent centered sub-gaussian random variables $X_i$. Then $\sum_i X_i$ is also a centered sub-gaussian random variable. Moreover, $\|\sum_i X_i\|_{\psi_2}^2\le C\sum_i\|X_i\|_{\psi_2}^2 $.
%$(\bE |X|^p)^{1/p}\le \|X\|_{\psi_2}\sqrt{p}$ for all $p\ge 1$. In particular, $Var(X)\le 2\|X\|_{\psi_2}^2 $.
\item[(3)] Let $X_1,\cdots, X_n$ be independent centered sub-gaussian random variables. Then $X=(X_1,X_2,\cdots, X_n)$ is a centered sub-gaussian random vector in $\bR^n$ and $\|X\|_{\psi_2} \le C\max_i \|X_i\|_{\psi_2}$.
\end{itemize}
\label{lem:subgaussian}
\end{lemma}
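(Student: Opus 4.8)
The plan is to treat Lemma~\ref{lem:subgaussian} as the standard set of equivalences between the moment, tail, and moment-generating-function (MGF) characterizations of scalar sub-gaussian random variables, and then to lift the scalar statements to random vectors. The single engine behind all three parts is the ability to pass between (i) the moment bound $(\bE|X|^p)^{1/p}\le \|X\|_{\psi_2}\sqrt{p}$, which is immediate from the definition of $\|\cdot\|_{\psi_2}$, (ii) the sub-gaussian tail in part~(1), and (iii) an MGF bound of the form $\bE\exp(\lambda X)\le \exp(C\lambda^2\|X\|_{\psi_2}^2)$ valid for centered $X$. I would first prove part~(1) directly from the moments, then establish the MGF bound as an intermediate lemma, and finally read off parts~(2) and~(3) from the multiplicativity of the MGF under independence.

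For part~(1): by definition of the $\psi_2$ norm, for every $p\ge 1$ we have $\bE|X|^p\le \|X\|_{\psi_2}^p\, p^{p/2}$. Applying Markov's inequality to $|X|^p$ gives $\bP(|X|>t)\le t^{-p}\bE|X|^p\le \left(\|X\|_{\psi_2}\sqrt{p}/t\right)^p$ for all $t>0$. I would then optimize the right-hand side over $p\ge 1$; choosing $p=t^2/(e\|X\|_{\psi_2}^2)$ (when this is at least $1$) yields a bound of the form $\exp(-ct^2/\|X\|_{\psi_2}^2)$, and absorbing the small-$t$ regime into the prefactor produces exactly the stated $\exp(1-ct^2/\|X\|_{\psi_2}^2)$ with an absolute constant $c$.

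For parts~(2) and~(3): the key intermediate step is the MGF bound for a centered sub-gaussian $X$. I would obtain it by expanding $\bE\exp(\lambda X)=1+\sum_{p\ge 2}\lambda^p\bE X^p/p!$, bounding $|\bE X^p|\le \bE|X|^p\le \|X\|_{\psi_2}^p p^{p/2}$, and using a Stirling-type estimate on $p^{p/2}/p!$ to sum the series and conclude $\bE\exp(\lambda X)\le \exp(C\lambda^2\|X\|_{\psi_2}^2)$; the centering is what kills the $p=1$ term and makes the quadratic-in-$\lambda$ bound possible near $\lambda=0$. Conversely, an MGF bound of this form implies a moment bound and hence control of $\|\cdot\|_{\psi_2}$ up to an absolute constant. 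Given this, part~(2) is multiplicativity: for independent centered $X_i$, $\bE\exp(\lambda\sum_i X_i)=\prod_i\bE\exp(\lambda X_i)\le \exp\big(C\lambda^2\sum_i\|X_i\|_{\psi_2}^2\big)$, and converting back to the moment norm gives $\|\sum_i X_i\|_{\psi_2}^2\le C\sum_i\|X_i\|_{\psi_2}^2$. Part~(3) then follows by applying the vector definition: for any $x\in S^{n-1}$, the marginal $\innerprod{X}{x}=\sum_i x_iX_i$ is a sum of independent centered sub-gaussians with $\|x_iX_i\|_{\psi_2}=|x_i|\,\|X_i\|_{\psi_2}$, so part~(2) gives $\|\innerprod{X}{x}\|_{\psi_2}^2\le C\sum_i x_i^2\|X_i\|_{\psi_2}^2\le C\max_i\|X_i\|_{\psi_2}^2$; taking the supremum over $x\in S^{n-1}$ (using $\sum_i x_i^2=1$) yields $\|X\|_{\psi_2}\le \sqrt{C}\max_i\|X_i\|_{\psi_2}$.

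I expect the main obstacle to be bookkeeping of absolute constants as one cycles through the moment, tail, and MGF characterizations --- in particular verifying that each implication loses only a universal multiplicative factor, so that the constant $C$ in parts~(2)--(3) and $c$ in part~(1) can be taken independent of $n$, the $X_i$, and their distributions. A secondary subtlety is that the clean quadratic MGF bound genuinely requires the centering hypothesis (a non-centered sub-gaussian contributes a linear term $\lambda\,\bE X$), which is why parts~(2) and~(3) are stated for centered variables; I would flag explicitly where centering is used and confirm that the Stirling-type series estimate converges for all real $\lambda$ with the claimed quadratic exponent.
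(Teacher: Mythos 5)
The paper never proves this lemma: it is stated verbatim as background imported from \cite{vershynin2010introduction}, so there is no internal proof to compare against, and your write-up is essentially the standard argument from that cited reference (Markov plus optimization over $p$ for the tail, a centered-MGF bound, multiplicativity under independence, and the unit-sphere reduction for vectors), which is correct in outline. The one genuine subtlety is the point you flagged yourself: the Stirling/series estimate yields the quadratic exponent $\exp(C\lambda^2\|X\|_{\psi_2}^2)$ directly only in the regime $|\lambda|\,\|X\|_{\psi_2}\lesssim 1$; for large $|\lambda|$ the clean way to close the bound is Young's inequality, $\lambda X\le \lambda^2\|X\|_{\psi_2}^2+X^2/(4\|X\|_{\psi_2}^2)$, combined with the exponential square-moment property of sub-gaussians (which follows from the sub-exponentiality of $X^2$, cf.\ Lemma~\ref{lem:subexp_sg2}), after which parts (2) and (3) follow exactly as you describe with only universal constants lost at each step.
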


A random variable is sub-exponential if the following equivalent properties hold with parameters $K_i>0$ differing from each other by at most an absolute constant factor:
(1) $P(|X|>t)\le \exp(1-t/K_1)\ \text{ for all } t\ge 0$; (2) $(\bE|X|)^{1/p} \le K_2p\ \text{ for all }p\ge 1$; (3) $\bE \exp(X/K_3)\le e$. The square of sub-gaussian random variable is sub-exponential.

\begin{lemma}[\cite{vershynin2010introduction}]
A random variable $X$ is sub-gaussian if and only if $X^2$ is sub-exponential. Moreover, $ \|X\|^2_{\psi_2}\le \|X^2\|_{\psi_1} \le2\|X\|_{\psi_2}^2 $.
\label{lem:subexp_sg2}
\end{lemma}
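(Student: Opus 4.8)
The plan is to work entirely with the moment-based descriptions of the two norms, since the excerpt already records (as ``equivalent properties'') that these are interchangeable with the tail and MGF characterizations. Concretely I would use $\|X\|_{\psi_2}=\sup_{p\ge 1}p^{-1/2}(\bE|X|^p)^{1/p}$ and $\|X^2\|_{\psi_1}=\sup_{q\ge 1}q^{-1}(\bE|X^2|^q)^{1/q}=\sup_{q\ge 1}q^{-1}(\bE|X|^{2q})^{1/q}$. The whole statement then reduces to proving the two-sided inequality $\|X\|_{\psi_2}^2\le\|X^2\|_{\psi_1}\le 2\|X\|_{\psi_2}^2$: the ``if and only if'' is an immediate consequence, because one norm is finite exactly when the other is, and finiteness of $\|X\|_{\psi_2}$ (resp.\ $\|X^2\|_{\psi_1}$) is by the listed equivalences the same as $X$ being sub-gaussian (resp.\ $X^2$ being sub-exponential).

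For the upper bound I would simply match the $q$-th moment of $X^2$ with the $2q$-th moment of $X$. Fixing $q\ge 1$ and setting $p=2q\ge 2\ge 1$, the definition of $\|X\|_{\psi_2}$ gives $(\bE|X|^{2q})^{1/(2q)}\le\sqrt{2q}\,\|X\|_{\psi_2}$, hence $(\bE|X|^{2q})^{1/q}\le 2q\,\|X\|_{\psi_2}^2$, i.e.\ $q^{-1}(\bE|X|^{2q})^{1/q}\le 2\|X\|_{\psi_2}^2$. Taking the supremum over $q\ge 1$ yields $\|X^2\|_{\psi_1}\le 2\|X\|_{\psi_2}^2$.

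The lower bound $\|X\|_{\psi_2}^2=\sup_{p\ge 1}p^{-1}(\bE|X|^p)^{2/p}\le\|X^2\|_{\psi_1}$ requires a case split on $p$, and this is the one place that needs care. For $p\ge 2$ the natural substitution $q=p/2$ is admissible ($q\ge 1$): writing $(\bE|X|^p)^{2/p}=(\bE|X^2|^{q})^{1/q}\le q\|X^2\|_{\psi_1}=\tfrac{p}{2}\|X^2\|_{\psi_1}$ gives $p^{-1}(\bE|X|^p)^{2/p}\le\tfrac12\|X^2\|_{\psi_1}$. The remaining window $1\le p<2$ is exactly where $q=p/2<1$ falls outside the allowed range, so I would instead use monotonicity of $L^p$ norms, $(\bE|X|^p)^{1/p}\le(\bE|X|^2)^{1/2}$ for $p\le 2$, together with $p^{-1}\le 1$, to get $p^{-1}(\bE|X|^p)^{2/p}\le\bE|X^2|$; and the single term $q=1$ in the definition of $\|X^2\|_{\psi_1}$ already forces $\bE|X^2|\le\|X^2\|_{\psi_1}$. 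Combining the two ranges and taking the supremum over all $p\ge 1$ gives $\|X\|_{\psi_2}^2\le\|X^2\|_{\psi_1}$, completing the two-sided bound and hence the lemma.

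I expect the only genuine obstacle to be the low-moment regime $1\le p<2$ in the lower bound, since the clean substitution $q=p/2$ breaks down there; the fix above (monotonicity of $L^p$ norms plus the $q=1$ term of the $\psi_1$-norm) keeps the constant at $1$ on that range and at $1/2$ on $p\ge 2$, so no worse constant than the claimed bounds is incurred. Everything else is a direct manipulation of the defining suprema.
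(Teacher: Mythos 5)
Your proof is correct. Note, however, that the paper does not actually prove this lemma: it is quoted from \cite{vershynin2010introduction} (Lemma~5.14 there) as background material in the appendix, so there is no in-paper argument to compare against. Your derivation is essentially the standard one underlying Vershynin's statement: pass to the moment characterizations $\|X\|_{\psi_2}=\sup_{p\ge1}p^{-1/2}(\bE|X|^p)^{1/p}$ and $\|X^2\|_{\psi_1}=\sup_{q\ge1}q^{-1}(\bE|X|^{2q})^{1/q}$, match the $q$-th moment of $X^2$ with the $2q$-th moment of $X$, and deduce the ``if and only if'' from finiteness of the norms together with the listed equivalences between the moment, tail, and MGF characterizations. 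The place where your write-up adds genuine value is the low-moment window $1\le p<2$ in the lower bound, where the substitution $q=p/2$ exits the admissible range $q\ge1$; handling it via monotonicity of $L^p$ norms, $(\bE|X|^p)^{1/p}\le(\bE|X|^2)^{1/2}$, combined with the $q=1$ term $\bE|X^2|\le\|X^2\|_{\psi_1}$, is exactly the right fix and keeps the constant at $1$ on that range (and $1/2$ on $p\ge2$), so the two-sided bound $\|X\|_{\psi_2}^2\le\|X^2\|_{\psi_1}\le2\|X\|_{\psi_2}^2$ holds with the constants as stated. This case split is routinely glossed over in the literature, and your treatment of it is complete and correct.
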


\section{Proof of Lemma~\ref{lem:fro_to_innerprod}}
We start with the following lemma, whose proof is in the Supplementary.
\begin{lemma}
For any $X$ that satisfies $X\succeq 0, X\ge 0, X\bfone=\bfone$, we have $\|X\|_F^2\le \tr(X)$.
\label{lem:fro_less_trace}
\end{lemma}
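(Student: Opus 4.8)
The plan is to diagonalize $X$ and show that all of its eigenvalues lie in $[0,1]$, after which the Frobenius-versus-trace comparison becomes immediate. Since $X\succeq 0$ denotes a symmetric positive semidefinite matrix, I can write $\|X\|_F^2=\tr(X^TX)=\tr(X^2)=\sum_i \eig_i(X)^2$, while $\tr(X)=\sum_i \eig_i(X)$, where the $\eig_i(X)$ are the (real, nonnegative) eigenvalues of $X$. Thus it suffices to prove that every eigenvalue satisfies $\eig_i(X)\le 1$, because then $\eig_i(X)^2\le \eig_i(X)$ for each $i$, and summing over $i$ yields the claim.

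To bound the eigenvalues from above, I would use the two remaining hypotheses: entrywise nonnegativity $X\ge 0$ and the row-sum condition $X\bfone=\bfone$. Together these say that $X$ is a nonnegative, row-stochastic matrix, so its maximum absolute row sum equals exactly $1$. Hence the induced $\ell_\infty\to\ell_\infty$ operator norm is $\max_i\sum_j |X_{ij}|=\max_i\sum_j X_{ij}=1$. Since the spectral radius of any matrix is at most any induced operator norm, the spectral radius of $X$ is at most $1$; and because $X$ is symmetric, its spectral radius coincides with its largest eigenvalue in absolute value, so $\eig_i(X)\le 1$ for all $i$. (Equivalently, one may invoke Gershgorin's disc theorem: substituting $\sum_{j\ne i}|X_{ij}|=1-X_{ii}$ places every real eigenvalue in $[\,2X_{ii}-1,\,1\,]$, again giving $\eig_i(X)\le 1$.) Combining the two steps, $\|X\|_F^2=\sum_i \eig_i(X)^2\le \sum_i \eig_i(X)=\tr(X)$.

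The only subtle point — and the step I would be most careful about — is the passage from ``nonnegative and row-stochastic'' to ``all eigenvalues $\le 1$.'' Positive semidefiniteness alone is insufficient for the upper bound, since it only supplies $\eig_i(X)\ge 0$; the bound $\eig_i(X)\le 1$ genuinely requires combining the nonnegativity and row-sum constraints, and it relies on symmetry so that the spectral radius equals the largest eigenvalue. Everything else in the argument is routine.
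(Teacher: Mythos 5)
Your proof is correct and follows essentially the same route as the paper's: both reduce the claim to showing all eigenvalues of $X$ lie in $[0,1]$ and then compare $\sum_i \eig_i^2$ with $\sum_i \eig_i$. The paper's bound on the top eigenvalue via the maximal entry of the leading eigenvector is exactly the standard derivation of the fact you invoke directly, namely that a nonnegative row-stochastic matrix has spectral radius at most $1$.
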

\begin{proof}%[Proof of Lemma~\ref{lem:fro_less_trace}]
	We first show that for all such $X$, the eigenvalues of $X$ are in $[0,1]$. Let $v_i$ be the eigenvector of $X$ corresponding to the $i^{th}$ largest eigenvalue $\eig_i$. Since $X$ is positive semi-definite,  $\eig_i \ge 0,\forall i$. Without loss of generality, let $i^*=\argmax_i|v_1(i)|$, i.e. be the index of the entry with the largest absolute value of $v_1$. Since $Xv_1=\eig_1 v_1$, 
	and $\sum_j X_{ij}=1, X_{ij}\ge 0$, we have: 
	\bas{
		|\eig_1 v_1(i^*)|=|\sum_j X_{i^*j}v_1(j)|\le\sum_j X_{i^*j}|v_1(j)|\le |v_1(i^*)|.
	} Therefore $|\eig_1| \le 1$.
	\bas{
		\|X\|_F^2=\sum_i \eig_i^2\le \sum_i \eig_i =\tr(X)
	}
\end{proof}
Now we are in position to prove Lemma~\ref{lem:fro_to_innerprod}.
\begin{proof}[Proof of Lemma~\ref{lem:fro_to_innerprod}]
Note that both $X_0$ and \sdp{M} are in the feasible set $\cF$, by optimality, we have $\innerprod{M}{\sdp{M}}\ge \innerprod{M}{X_0} $.
We construct $Q$ as stated in the lemma to obtain: $\innerprod{Q}{\sdp{M}-X_0}$,
 $\innerprod{M-Q}{\sdp{M}-X_0}\ge \innerprod{Q}{X_0-\sdp{M}}$.
Note that $Q$ is constant on diagonal blocks and upper bounded by $q_k$ on off-diagonal blocks, with respect to the clustering of nodes. Using the fact that $|C_k|=m_k$, we have:
\bas{
& \innerprod{M}{X_0-\sdp{M}} = \sum_k \sum_{i\in C_k} \left( \bkin\sum_{j\in C_k}\left( \frac{1}{m_k}-(X_M)_{ij} \right)+\sum_{\ell \ne k}  \sum_{j\in C_\ell} Q_{ij}(0-(\sdp{M})_{ij}) \right)\\
\ge & \sum_k \sum_{i\in C_k} \left( \bkin\sum_{j\in C_k}\left( \frac{1}{m_k}-(\sdp{M})_{ij} \right)- \bkout \sum_{\ell \ne k}\sum_{j\in C_\ell}(\sdp{M})_{ij} \right)\\
= & \sum_k \sum_{i\in C_k} \left( \bkin \left(1-\sum_{j\in C_k}(\sdp{M})_{ij} \right)- \bkout \left( 1-\sum_{j\in C_k}(\sdp{M})_{ij} \right) \right)\\
= & \sum_k \sum_{i\in C_k}(\bkin - \bkout) \left(1-\sum_{j\in C_k}(\sdp{M})_{ij} \right) \ge  \min_k (\bkin - \bkout)  \sum_k \sum_{i\in C_k} \left(1-\sum_{j\in C_k}(\sdp{M})_{ij} \right)
}

The third line and last inequality uses the constraint that $ \sum_{j} \hat{X}_{ij}=1$, and $	1-\sum_{j\in C_k} \hat{X}_{ij} \ge 1-\sum_{j} \hat{X}_{ij}=0 $.
On the other hand,
\bas{
\|\sdp{M}-X_0\|_F^2 %=& \|\xh\|_F^2+\|X_0\|_F^2-2\innerprod{\xh}{X_0}\\
=& \|\sdp{M}\|_F^2-\|X_0\|_F^2+2\innerprod{X_0-\sdp{M}}{X_0}
}
By Lemma~\ref{lem:fro_less_trace}, and the fact that $\|X_0\|_F^2=r$, we have $\|\sdp{M}\|_F^2-\|X_0\|_F^2\le \tr(\sdp{M})-\cl=0$. Since $\min_k (\bkin - \bkout)\geq 0$,
\bas{
&\|\sdp{M} -X_0\|_F^2 \le  2\innerprod{X_0-\sdp{M}}{X_0} = 2\sum_k\sum_{i\in C_k}  \sum_{j\in C_k} \frac{1}{m_k}\left( \frac{1}{m_k}-(\sdp{M})_{ij} \right)\\
=& 2\sum_k\sum_{i\in C_k} \frac{1}{m_k} \left( 1-\sum_{j\in C_k} (\sdp{M})_{ij} \right)\le  \frac{2}{\mmin}\sum_k\sum_{i\in C_k} \left( 1-\sum_{j\in C_k} (\sdp{M})_{ij} \right)\\
\le & \frac{2}{\mmin \min_k (\bkin-\bkout)} \innerprod{Q}{X_0-\sdp{M}}\le \frac{2}{\mmin \min_k (\bkin-\bkout)}  \innerprod{M-Q}{\sdp{M}-X_0}
}
\end{proof}

\section{Proof of Proposition~\ref{prop:main_graph_sparse}}
We first introduce the following result on sparse graph with Grothendieck's inequality by \citet{guedon2014community}.
\begin{lemma}[\cite{guedon2014community}]
Let $\cM_G^+=\{X: X\succeq 0,diag(X)\preceq I_n\}$, $A=(a_{ij})\in \bR^{n\times n}$ be a symmetric matrix whose diagonal entries equal 0, and entries above the diagonal are independent random variables satisfying $0\le a_{ij}\le 1$. Let $P=E[ A|Z]$. Assume that $\bar{p}:=\frac{2}{n(n-1)}\sum_{i<j}\text{Var}(a_{ij})\ge \frac{9}{n}$. Then, with probability at least $1-e^35^{-n}$, we have
$ \max_{X\in \cM_G^+}|\innerprod{A-P}{X}|\le K_G \llnorm{A-P} \le 3K_G\bar{p}^{1/2}n^{3/2},$
where $K_G$ is the Grothendieck's constant, and its best know upper bound is 1.783.
\label{lem:gro_ineq}
\end{lemma}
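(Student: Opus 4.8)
The plan is to prove the two asserted inequalities separately: first the deterministic Grothendieck bound $\max_{X\in\cM_G^+}|\innerprod{A-P}{X}|\le K_G\llnorm{A-P}$, and then the high-probability concentration bound $\llnorm{A-P}\le 3\bar p^{1/2}n^{3/2}$, combining them at the end.

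For the first inequality I would exploit the geometric description of the feasible set. Any $X\in\cM_G^+$ is positive semidefinite with $X_{ii}\le 1$, so it is the Gram matrix of vectors $u_1,\dots,u_n$ in a Hilbert space with $\|u_i\|\le 1$, i.e. $X_{ij}=\innerprod{u_i}{u_j}$. Writing $M=A-P$, which is symmetric with zero diagonal, we get $\innerprod{M}{X}=\sum_{ij}M_{ij}\innerprod{u_i}{u_j}$, a Hilbert-space bilinear form in unit vectors. Grothendieck's inequality bounds exactly such a form by $K_G$ times its $\{\pm1\}$-valued counterpart, $\sum_{ij}M_{ij}\innerprod{u_i}{v_j}\le K_G\max_{s,t\in\{\pm1\}^n}\sum_{ij}M_{ij}s_it_j$ for all unit $u_i,v_j$; taking $v_i=u_i$ gives the claim. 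I would then identify the combinatorial quantity with the target norm by observing that $\llnorm{M}=\max_{\|s\|_\infty,\|t\|_\infty\le1}t^TMs$ and that a bilinear form over a box attains its maximum at a vertex, so $\llnorm{M}=\max_{s,t\in\{\pm1\}^n}\sum_{ij}M_{ij}s_it_j$. The absolute value poses no issue, since replacing $s$ by $-s$ flips the sign of the form, so the maximum of $|\innerprod{M}{X}|$ equals the maximum of the signed quantity.

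For the concentration bound I would control $\llnorm{A-P}=\max_{s,t\in\{\pm1\}^n}\sum_{ij}(A-P)_{ij}s_it_j$ by a union bound over the $4^n$ sign pairs. For a fixed $(s,t)$, using symmetry and the zero diagonal I would rewrite the sum as $\sum_{i<j}(A_{ij}-P_{ij})(s_it_j+s_jt_i)$, a sum of independent mean-zero terms each bounded by $2$ in absolute value and with variance at most $4\var(a_{ij})$, so the total variance is at most $2n(n-1)\bar p$. Bernstein's inequality then gives a tail, and I would set the deviation level to $u=3\bar p^{1/2}n^{3/2}$. Here the hypothesis $\bar p\ge 9/n$ is essential: it forces $\sqrt{n\bar p}\ge 3$, which makes Bernstein's linear (bounded-difference) term negligible relative to the variance term and drives the per-pair exponent to a large enough multiple of $n$ to dominate the $\ln(4^n)=n\ln4$ cost of the union bound.

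The main obstacle, I expect, is the constant bookkeeping in this Bernstein-plus-union-bound step: one must check that with $u=3\bar p^{1/2}n^{3/2}$ the per-pair failure exponent exceeds $n\ln4$ by enough to deliver the stated probability $1-e^35^{-n}$, and this is precisely where the seemingly arbitrary threshold $\bar p\ge9/n$ is calibrated. By contrast the Grothendieck step is a direct application of the inequality once the Gram-matrix representation of $\cM_G^+$ and the identity $\max_{s,t\in\{\pm1\}^n}t^TMs=\llnorm{M}$ are in hand.
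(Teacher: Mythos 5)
First, a point of context: the paper does not prove this lemma at all --- it is imported verbatim from \citet{guedon2014community} and used as a black box (in the proofs of Proposition~\ref{prop:main_graph_sparse} and Theorem~\ref{th:sparse_low}). So the comparison here is against the cited source's argument, not against anything in the paper. Your first half --- representing $X\in\cM_G^+$ as a Gram matrix of vectors in the unit ball, invoking the bilinear Grothendieck inequality with $v_i=u_i$, and identifying $\max_{s,t\in\{\pm 1\}^n} t^T M s$ with $\llnorm{M}$ via extreme points of the box --- is correct and is exactly the standard route to the deterministic inequality $\max_{X \in \cM_G^+}|\innerprod{A-P}{X}| \le K_G \llnorm{A-P}$.

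The genuine gap is in the second half, and it sits precisely at the step you flag as ``constant bookkeeping'': with $u = 3\bar p^{1/2} n^{3/2}$, the Bernstein-plus-union-bound calibration does \emph{not} deliver the stated probability $1 - e^3 5^{-n}$, and no amount of care in the arithmetic can make it do so. Concretely, for the worst sign pairs (e.g.\ $t=s$, where every pair $i<j$ is active) the variance proxy is $V = 4\sum_{i<j}\mathrm{Var}(a_{ij}) = 2n(n-1)\bar p$, so Bernstein gives a per-pair exponent of
\[
\frac{u^2/2}{V + 2u/3} \;\le\; \frac{u^2}{2V} \;=\; \frac{9\bar p n^3}{4n(n-1)\bar p}\;\approx\; 2.25\,n ,
\]
and at the boundary $\bar p = 9/n$ (where $u=9n$, $V=18(n-1)$, $2u/3 = 6n$) the exponent is only $\tfrac{27}{16}n \approx 1.69\,n$. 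To conclude failure probability $e^3 5^{-n}$ after a union bound over $4^n$ sign pairs you would need a per-pair exponent of at least $n\ln 20 - 3 \approx 3n$; even restricting attention to the $2^{n+1}$ pairs with $t = \pm s$ you would need $n \ln 10 \approx 2.3\,n$, which already exceeds the $2.25\,n$ ceiling above. So the hypothesis $\bar p \ge 9/n$ does \emph{not} ``drive the exponent to a large enough multiple of $n$'': it beats the $n\ln 4$ union cost by only about $0.3\,n$, far short of the additional $n \ln 5 \approx 1.61\,n$ required. What your argument actually proves is the same deviation bound $3K_G\bar p^{1/2}n^{3/2}$ but with probability only $1 - e^{-0.3 n}$ (or, alternatively, the stated probability $1-e^35^{-n}$ but with the constant $3$ inflated to roughly $4.2$). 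Either weakened version would in fact suffice for everything this paper uses the lemma for, since downstream results only claim ``probability tending to one''; but as a proof of the lemma as stated, the quantitative claim fails, and recovering the exact constants of \citet{guedon2014community} requires an argument beyond a plain Bernstein bound united over all $4^n$ sign vectors.
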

\begin{proof}[Proof of Proposition~\ref{prop:main_graph_sparse}]
Notice that $A$ and $P:=E[A|Z]$ has zero diagonals. Therefore,
\beq{
\bsplt{
\innerprod{P-Q}{\sdp{A}-X_0} =& \sum_k \sum_{i\in C_k} a_k/n \left( \frac{1}{m_k}-(\sdp{A})_{ii} \right)\\
\le & \sum_k p_k-\pmin \tr(\sdp{A})\le \cl(\pmax-\pmin)},
\label{eq:dense_p-q}
}
where $\pmax=\max_k a_k/n$ and $\pmin=\min_k a_k/n$.
Thus by Lemma \ref{lem:fro_to_innerprod} and Eq~\eqref{eq:dense_p-q},
\bas{
\|\sdp{A}-X_0\|_F^2 \le \frac{2}{\mmin \min_k (a_k/n-b_k/n)} \left( \innerprod{A-P}{\sdp{A}-X_0}+\cl(\pmax-\pmin) \right)
}
In sparse regime, both $m_{\min}X_0$ and $m_{\min}\sdp{A}$ belong to the set $\cM_G^+$. Let  $g=n\bar{p}\geq 9$, applying Lemma~\ref{lem:gro_ineq} we get with probability at least $1-e^35^{-n}$, 
\begin{align*}
\| \sdp{A} - X_0\|_F^2 %&\le \frac{4K_G\llnorm{A-P}}{\mmin^2\min_k(p_k-q_k)} + \frac{2\cl (p_{\max}-p_{\min})}{\mmin\min_k(p_k-q_k)}\\
\le & \frac{22 \sqrt{n^2g}}{\mmin^2 \min_k(a_k/n-b_k/n)}+ \frac{2\cl (p_{\max}-p_{\min})}{\mmin\min_k(a_k/n-b_k/n)}
\end{align*}
Substituting $p_k=a_k/n, q_k=b_k/n$, and using the fact that 
\bas{
	\frac{2\cl (p_{\max}-p_{\min})}{\mmin\min_k(p_k-q_k)}&=	\frac{2\cl\mmin (p_{\max}-p_{\min})}{\mmin^2\min_k(p_k-q_k)}
	\leq \frac{2 \max_k a_k }{\mmin^2\min_k(p_k-q_k)}=o(\sqrt{n^2g}),
} 
Recall that $\alpha:=m_{\max}/m_{\min}$, we get with probability tending to 1, 
\bas{
\frac{\|\hat{X}-X_0\|_F^2}{\|X_0\|_F^2} \le  \frac{ 23 n^2 \sqrt{g}}{r\mmin^2\min_k (a_k-b_k)}\leq \frac{ 23 \alpha^2\cl \sqrt{g}}{\min_k (a_k-b_k)}.
}
\end{proof}

\section{Proof of Proposition~\ref{prop:lowdim_kernel}}
\label{sec:proof_sparse}

\begin{proof}[Proof of Proposition~\ref{prop:lowdim_kernel}]
	Recall that by definition, for $i\in C_k$, $Y_i-\mu_k$ is sub-gaussian random vector with sub-gaussian norm $\psi_k$.
	Using the following concentration inequality from \cite{hsu2012tail} for sub-gaussian random vectors, we have:
	\begin{align*}
	\mbox{For $i\in C_k$, }P(\|Y_i-\mu_k\|_2^2>\psi_k^2(d+2\sqrt{t\dim}+2t))\le e^{-t}
	\end{align*}	
	We take  $t = c_k^2 d$ for $c_k\geq 1$. Since $1+2c_k+2c_k^2\leq 5c_k^2$ for $c_k\geq 1$, we get $P(\|X-\bE X\|^2 \le 5c_k^2\psi_k^2d)\ge 1-\exp(-c_k^2d)$. Let $\Delta_k = \sqrt{5}c_k\psi_k\sqrt{d}$, we can divide the nodes into ``good nodes'' (those close to their population mean) $\cS_k$ and the rest as follows:
	\begin{align}
	\label{eq:csk}
	\cS_k=\{i\in C_k:\|Y_i-\mu_k\|\leq \Delta_k\},\qquad \cS=\cup_{k=1}^\cl \cS_k
	\end{align} 
	
	Let $m_c^{(k)}=m_k-|\cS_k|$. We want to bound $m_c^{(k)}$ with high probability. Note that $m_c^{(k)}=\sum_{i\in C_k} \bm{1}(\|Y_i-\mu_k\|\geq \Delta_k)$ is a sum of i.i.d random variables. Therefore, using the Hoeffding bound we have:
	\bas{
		P\left(m_c^{(k)}-m_k P(i\not\in \cS_k)\geq m_k\delta\right)\leq \exp(-2m_k\delta^2)
	}
	Using $\delta=\sqrt{\log m_k/2m_k}$,
	we have:
	\bas{
		P\left(m_c^{(k)}-m_k P(i\not\in \cS_k)\geq \sqrt{m_k\log m_k/2}\right)\leq \frac{1}{m_k}
	}
	Since $P(i\not\in \cS_k)\leq \exp(-c_k^2d)$, we have:
	\bas{
		P\left(m_c^{(k)}\geq m_k \exp(-c_k^2d)+ \sqrt{m_k\log m_k/2})\right)\leq \frac{1}{m_k}
	}
	Finally, using union bound over all clusters we get:
	\begin{align}
	P\left(m_c \ge \sum_k m_k e^{-c_k^2d}+ \sum_k\sqrt{m_k\log m_k/2}\right)\le \sum_k \frac{1}{m_k}
	\label{eq:chernoff_noutlier}
	\end{align}
	Now define 
	\beq{
		(K_I)_{ij}=\left\{ \begin{matrix} f(2\Delta_k), & \mbox{ if $i,j\in C_k$}\\ \min\{f(d_{k\ell}-\Delta_k-\Delta_\ell),K_{ij}\}, & \mbox{ if $i\in C_k, j\in C_\ell, k\ne \ell$}\end{matrix}\right. 
		\label{eq:K_I}
	}
	By Lemma \ref{lem:fro_to_innerprod}, all diagonal blocks are blockwise constant and the off-diagonal blocks are upper bounded by $f(d_{k\ell}-\Delta_k-\Delta_\ell)$. Let $\nu_k = f(2\Delta_k)-\max_{\ell\ne k}f(d_{k\ell}-\Delta_k-\Delta_\ell)$, and $\gamma=\min_k \nu_k$. If $\nu_k\geq 0$, we have
	\bas{
		\| \sdp{K}-X_0\|_F^2 \le \frac{2}{\mmin \gamma}\innerprod{K-K_I}{ \sdp{K} -X_0}
	}
	Apply Grothendieck's inequality,
	\ba{
		\| \sdp{K} -X_0\|_F^2 \le \frac{2K_G}{\mmin^2 \gamma}\llnorm{K-K_I}
		\label{eq:lowdim_ki}
	}
	Now it remains to bound the $\ell_\infty\to\ell_1$ norm of $K-K_I$. Note that if $i\in S_k, j\in S_\ell, k\ne\ell$, then by a simple use of triangle inequality we have $K_{ij}\le f(d_{k\ell}-\Delta_k-\Delta_\ell)$, so $K_{ij}=(K_I)_{ij}$; and if $i,j\in S_k$, then $K_{ij}\ge f(2\Delta_k)$.
	\beq{
		\bsplt{
			\llnorm{K-K_I} =& \max_{x,y\in \{\pm\}^n} \sum_{i,j} x_iy_j\left( K_{ij}-(K_I)_{ij} \right)\\
			\le& \max_{x,y\in \{\pm\}^n} \sum_{i,j\in \cS} x_iy_j\left( K_{ij}-(K_I)_{ij} \right)+\max_{x,y\in \{\pm\}^n} \sum_{i \not \in \cS \cup j\not \in \cS} x_iy_j\left( K_{ij}-(K_I)_{ij} \right)\\
			\stackrel{(i)}{\le}& \max_{x,y\in \{\pm\}^n} \sum_{i,j\in \cS} x_iy_j\left( K_{ij}-(K_I)_{ij} \right)+2m_cn\\
			\stackrel{(ii)}{=}& \max_{x,y\in \{\pm\}^n} \sum_k \sum_{i,j\in \cS_k} x_iy_j\left( K_{ij}-f(2\Delta_k) \right)+2m_cn\\
			\le & \sum_k m_k^2(1-f(2\Delta_k))+2m_cn
		}
		\label{eq:llnorm_kki}
	}
	where $(i)$ is due to $|K_{ij}-(K_I)_{ij}| \le 1$, and $(ii)$ comes from the definition of $K_I$. Now Eq~\ref{eq:lowdim_ki} follows as
	\beq{
		\bsplt{
			\| \sdp{K} -X_0\|_F^2 \le& \frac{4K_G \left( \sum_k m_k^2(1-f(2\Delta_k))+2m_cn \right)}{\mmin^2 \gamma}\\
			=& \frac{4K_G}{\mmin^2}\sum_k \left(m_k^2\frac{1-f(2\Delta_k)}{\gamma}+2m_kne^{-c_k^2d}/\gamma\right)+\frac{\sqrt{2}K_G n}{\mmin^2\gamma}\sum_k \sqrt{m_k\log m_k} 
		}
		\label{eq:lowdim_sumk}
	}
	
	Recall that $f(x)=\exp(-\eta x^2)$, and $\gamma =\min_k \left\{ f(2\Delta_k)-\max_{\ell\ne k}f(d_{k\ell}-\Delta_k-\Delta_\ell)\right\}$. For simplicity, we assume $c_k=c_0$. We take $c_0 =\sqrt{ \log\left( \frac{\dmin^2}{\psi_{\max}^2d} \right)\left/ d \right.}$ and the scale parameter $\eta = \frac{\phi }{20c_0^2\psi_{\max}^2d}$, for some $\phi>0$, which will be chosen later. Furthermore, we also define 
	\ba{
		\label{eq:xi}
		\xi = \frac{\dmin}{2\sqrt{5}c_0\psi_{\max}\sqrt{d}}-1
		.
	} 
	
	If $\xi>1$, then $d_{min}>4\sqrt{5}c_0\psi_{max}\sqrt{d}$, and hence $\gamma>0$. Also, since $\eta(\dmin-2\sqrt{5}c_0\psi_{\max}\sqrt{d})^2=\phi\xi^2$, 
	$\forall k,\ell\in[\cl]$, if $\dmin:=\min_{k\ell}d_{k\ell}>4\sqrt{5}c_0\psi_{\max}\sqrt{d}$, then 
	\[ \gamma \ge f(2\sqrt{5}c_0\psi_{\max}\sqrt{d})-f(\dmin-2\sqrt{5}c_0\psi_{\max}\sqrt{d})=\exp(-\phi)-\exp(-\phi\xi^2). \]
	and 
	\[ 1-f(2\Delta_k)\le 1-f(2\sqrt{5}c_0\psi_{\max}\sqrt{d})=1-\exp(\phi) \]
	
	Recall $\alpha=\frac{m_{\max}}{\mmin}$, 
	\ba{
		&\| \sdp{K} -X_0\|_F^2 \\
		\le& 4K_Gr\alpha^2 \cdot \frac{1-f(2\sqrt{5}c_0\psi_{\max}\sqrt{d})+2r\exp(-c_0^2d)}{\gamma}+\frac{2\sqrt{2}K_G m_{\max}\cl^2\sqrt{m_{\max}\log m_{\max}} }{\gamma m_{\min}^2}\nonumber\\
		\le & \frac{4K_Gr\alpha^2}{\gamma} \left( 1-\exp(-\phi)+\frac{2r\psi^2_{\max}\sqrt{d}}{\dmin^2}+ \cl\sqrt{\log m_{\max}/2m_{\max}} \right)\nonumber\\
		\le & 4K_Gr\alpha^2\left(\underbrace{\frac{  (1-\exp(-\phi)+2r\psi^2_{\max}d/\dmin^2}{\exp(-\phi)-\exp(-\phi\xi^2)}}_{A}+ \underbrace{\frac{\cl\sqrt{\log m_{\max}/2m_{\max}} }{\exp(-\phi)-\exp(-\phi\xi^2)}}_{B}\right)
		\label{eq:AB}}
	
	We will first bound part (A).
	\ba{
		& (A) =  \frac{\exp(\phi)-1+\exp(\phi)\frac{2r\psi^2_{\max}d}{\dmin^2}}{1-\exp(\phi-\phi \xi^2 )} \stackrel{(i)}{\le}  \frac{\phi+\frac{\phi^2}{2}\exp(\phi)+\exp(\phi)\frac{2r\psi^2_{\max}d}{\dmin^2}}{1-\exp(\phi-\phi \xi^2 )}
		\label{eq:A}
	}
	where $(i)$ uses the Mean value theorem: for $e^x-1\leq x+e^{y}x^2/2$ for $y\in[0,x]$.
	If $\frac{\dmin}{\psi_{\max}\sqrt{d}}>\max\left\{1,\frac{180}{d} \right\}$, using the fact that $\log x\leq \sqrt{x}$, we have: 
	\bas{
	\frac{\dmin^2}{\psi_{\max}^2d}>\frac{180}{d^2}\frac{\dmin}{\psi_{\max}}>\frac{180}{d}\log\left( \frac{\dmin^2}{\psi_{\max}^2d} \right)=180c_0^2.
	}
	Using Eq~\ref{eq:xi}, we see that $\xi>\frac{\sqrt{180}}{2\sqrt{5}}-1=2$, and hence $\gamma>0$. Now we pick $\phi=\frac{\log \xi}{\xi^2}$.
	
	Now we will use this to obtain a lower bound on $1-\exp(\phi-\phi \xi^2 )$. Since $\xi\geq 2$, we have $\xi^2/4\geq 1$. Hence 
	\bas{
		1-\exp(\phi-\phi \xi^2 )&\geq 1-\exp(\phi\xi^2/4-\phi \xi^2 )\\
		&= 1-\exp(-\phi 3\xi^2/4)=1-\exp(-3\log\xi/4)=1-\xi^{-3/4}\\
		&\geq 1-2^{-3/4}=.4
	}
	Using the fact that the function $\frac{\log x}{x^2}$ is monotonically decreasing when $x>2$, we see that $\phi<\log 2/2^2$ and $\exp(\phi)\leq 1.2$. Furthermore,
	\ba{
		\label{eq:gamma-lb}
		\gamma\geq \exp(-\phi)(1-\exp(\phi(1-\xi^2)))\geq .3 
	}
	
	Now Eq.~\eqref{eq:A} yields:
	\bas{
		& (A) \le  \frac{\phi+1.2\left( \frac{\phi^2}{2}+\frac{2r\psi^2_{\max}d}{\dmin^2}\right)}{.4}
		\stackrel{}{\le}  \frac{c\log \xi}{\xi^2}+\frac{3r\psi_{\max}^2d}{\dmin^2} \\
		\stackrel{(ii)}{\le}  & \frac{c'\log (\xi+1)}{(\xi+1)^2}+\frac{3r\psi_{\max}^2d}{\dmin^2}
		\le  c''\frac{\psi_{\max}^2d}{\dmin^2}\log\left(\frac{\dmin}{\psi_{\max}\sqrt{d}}\right)+\frac{3r\psi_{\max}^2d}{\dmin^2},
	}
	for some constant $c$.
	To get $(ii)$, note that 
	\[ \frac{\log \xi}{\xi^2}\le \frac{\log(\xi+1)}{\xi^2}\le \frac{2.25\log(\xi+1)}{(\xi+1)^2}, \forall \xi>2 \]
	
	Finally, we bound (B) in Eq~\ref{eq:AB} using Eq~\ref{eq:gamma-lb}.
	\bas{
		(B)=\frac{\cl\sqrt{\log m_{\max}/2m_{\max}} }{\exp(-\phi)-\exp(-\phi\xi^2)}\leq c_1\cl\sqrt{\frac{\log m_{\max}}{m_{\max}}}
	}
for some constant $c_1>0$.
	Putting pieces together, we have
	\bas{
		\frac{\| \sdp{K} -X_0\|_F^2}{\|X_0\|_F^2} \le& C\alpha^2\max\left( \frac{\psi_{\max}^2d}{\dmin^2} \max\left\{\log\left(\frac{\dmin}{\psi_{\max}\sqrt{d}}\right),r\right\},\cl\sqrt{\frac{\log \mmax}{m_{\max}}}\right)
	}
\end{proof}

%%% SDP-comb - sparse+low
\section{Analysis for \sdp{A+\lamn K}}

\begin{proof}[Proof of Theorem~\ref{th:sparse_low}]
Let $K_I$ be defined as in Eq~\eqref{eq:K_I}. Let $\gamma=\min_k(a_k/n-b_k/n+\lamn(f(2\Delta_k)-\max_{\ell\ne k}f(d_{k\ell}-\Delta_k-\Delta_\ell)))$.
When $\gamma\geq 0$, Lemma \ref{lem:fro_to_innerprod} with $Q=ZBZ^T+\lamn K_I$, we have
\bas{
&\| \sdp{A+\lamn K}-X_0\|_F^2 \\
\le& \frac{2}{\mmin \gamma} \left( \innerprod{A-P}{ \sdp{A+\lamn K} -X_0}+\cl(\max_k a_k/n-\min_k a_k/n)+\lamn \innerprod{K-K_I}{ \sdp{A+\lamn K} -X_0} \right)
}
Now by Grothendieck's inequality on both $\innerprod{A-P}{ \sdp{A+\lamn K} -X_0}$ and $\innerprod{K-K_I}{\sdp{A+\lamn K} -X_0}$, one gets,
\bas{
\| \sdp{A+\lamn K} -X_0\|_F^2 \le \frac{2K_G}{\mmin^2 \gamma} \left( 2\llnorm{A-P}+\cl(\max_k a_k/n-\min_k a_k/n) + 2\lamn  \llnorm{K-K_I} \right)
}
By Lemma \ref{lem:gro_ineq} and Eq~\eqref{eq:llnorm_kki},
\bas{
\| \sdp{A+\lamn K} -X_0\|_F^2 \le \frac{4K_G}{\mmin^2 \gamma} \left(6\sqrt{n^3\bar{p}}+\lamn \left( 2m_c n+\sum_{k}m_k^2(1-f(2\Delta_k)) \right) \right)
}

Using $\lamn=\lamz/n$, $m_k=n\pi_k$, $\mmin=n\pi_{\min}$, and $\pi_0:=\sum_k (m_k\exp(-\Delta_k^2/(5\psi_k^2))+ \sqrt{m_k\log m_k/2})/n$ in conjunction with Eq~\eqref{eq:chernoff_noutlier},
we get with probability tending to 1,
\bas{
	\| \sdp{A+\lamn K} -X_0\|_F^2 \le 4K_G\frac{ 6\sqrt{g}+\lamz \left( 2 \pi_0+\sum_{k}\pi_k^2(1-f(2\Delta_k)) \right) }{\pi_{\min}^2 \min_k(a_k-b_k+\lamz \nu_k)}
}
\end{proof}

\section{Analysis of covariate clustering when $\dim \gg \cl$}

Before proving Lemma~\ref{lem:dimreduce}, we clearly state our assumptions and other useful lemmas.
\begin{assumption}
We assume that $M$ is of rank $\cl-1$, i.e. the means are not collinear, or linearly dependent, other than the fact that they are centered.
\label{as:colinear}
\end{assumption}

\begin{lemma}
	\label{lem:cov-conc} 
	Let $M=\sum_k \pi_k \mu_k\mu_k^T$ and $S$ be the covariance matrix of $n$ data points from a sub-gaussian mixture, then $S = M+\sum_i \pi_i\sigma_i^2 I_d$.
	Let $\hat{S}$ be the sample covariance matrix $\hat{S}=\frac{\sum_{i=1}^n (Y_i-\bar{Y}) (Y_i-\bar{Y})^T}{n}$.
We have $\|\hat{S}-S\|\leq C\sqrt{\frac{d\log n}{n}}$ for some constant $C$ with probability bigger than $1-O(n^{-\dim})$. %\bowei{http://www-personal.umich.edu/~romanv/papers/non-asymptotic-rmt-plain.pdf Cor 5.50}
\end{lemma}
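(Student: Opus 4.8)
The plan is to split the statement into its two assertions: the population identity $S = M + \sum_k \pi_k \sigma_k^2 I_d$, and the operator-norm concentration $\|\hat S - S\| \le C\sqrt{d\log n/n}$. First I would dispatch the population identity by conditioning on the latent cluster. Let $c$ be the cluster label of a generic draw $Y$, so $\bP(c = k) = \pi_k$, and conditionally $\bE[Y \mid c = k] = \mu_k$ and $\cov(Y \mid c = k) = \sigma_k^2 I_d$. Since the mixture is centered, $\bE[Y] = \sum_k \pi_k \mu_k = 0$, hence $S = \bE[YY^T]$. The law of total expectation then gives $\bE[YY^T] = \sum_k \pi_k \bE[YY^T \mid c = k] = \sum_k \pi_k(\mu_k \mu_k^T + \sigma_k^2 I_d) = M + \sum_k \pi_k \sigma_k^2 I_d$, which is the first claim.

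For the concentration, I would reduce $\hat S - S$ to a second-moment term plus a lower-order centering correction. Because $\bE[Y_i] = 0$, expanding the sample covariance gives $\hat S = \tfrac1n \sum_i Y_i Y_i^T - \bar Y \bar Y^T$, so that $\|\hat S - S\| \le \|\tfrac1n \sum_i Y_i Y_i^T - S\| + \|\bar Y\|^2$. The correction is handled by noting that $\bar Y$ is itself a centered sub-gaussian vector (by the rotation-invariance property in Lemma~\ref{lem:subgaussian}) with sub-gaussian norm of order $1/\sqrt n$; a tail bound of the type in~\citet{hsu2012tail} then yields $\|\bar Y\| = O(\sqrt{d\log n/n})$ with probability $1 - O(n^{-d})$, so $\|\bar Y\|^2 = O(d\log n/n)$, which is dominated by $\sqrt{d\log n/n}$ in the regime $d\log n \le n$.

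The main term is where the real work lies. I would first observe that each $Y_i$ is a zero-mean sub-gaussian vector: for any unit $u$, $\innerprod{Y_i}{u}$ is a mixture of sub-gaussians with means $\innerprod{\mu_k}{u}$, so $\|Y_i\|_{\psi_2} \le \psi_{\max} + \max_k \|\mu_k\| =: b$, a constant. For a fixed $u$, the variable $\innerprod{Y_i}{u}^2$ is sub-exponential (Lemma~\ref{lem:subexp_sg2}), so a Bernstein bound controls $\tfrac1n\sum_i \innerprod{Y_i}{u}^2 - \bE\innerprod{Y}{u}^2$; an $\epsilon$-net argument over $S^{d-1}$, whose cardinality is $e^{O(d)}$, upgrades this to the operator norm. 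This is precisely the sub-gaussian covariance estimation bound in~\citet{vershynin2010introduction}: with probability at least $1 - 2e^{-t^2}$ one has $\|\tfrac1n\sum_i Y_i Y_i^T - S\| \le C b^2 \big(\sqrt{d/n} + t/\sqrt n\big)\|S\|$ up to the usual $\max(\delta,\delta^2)$ refinement. Choosing $t \asymp \sqrt{d\log n}$ drives the failure probability to $O(n^{-d})$ while keeping the deviation at $O(\sqrt{d\log n/n})$, with the constant $\|S\|$ absorbed into $C$.

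The \emph{main obstacle} is this operator-norm concentration of the empirical second moment: the union bound over the net of size $e^{O(d)}$ is what forces the extra $\sqrt{\log n}$ factor, and the deviation parameter $t \asymp \sqrt{d\log n}$ must be tuned to simultaneously beat the net cardinality and deliver the prescribed $n^{-d}$ failure probability. Everything else — the population identity and the $\bar Y \bar Y^T$ correction — is routine and strictly lower order.
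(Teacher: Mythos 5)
Your proposal is correct and follows essentially the same route as the paper, which simply cites Corollary 5.50 of \citet{vershynin2010introduction} as giving the result directly; you unpack the proof of that corollary (sub-exponential Bernstein plus an $\epsilon$-net over $S^{d-1}$ with $t \asymp \sqrt{d\log n}$) rather than quoting it. Your explicit treatment of the $\bar Y \bar Y^T$ centering correction is a detail the paper glosses over, and it is handled correctly.
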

This is a direct consequence of Corollary 5.50 from~\citet{vershynin2010introduction}.
The main ingredient of the proof is provided below. 

\begin{lemma}
	\label{lem:lt_after_proj}
	Let $U_{\cl -1}$ be the top $\cl -1$ eigenvectors of $\hat{S}$ estimated using $P_1$, and $\lambda$ be the smallest positive eigenvalue of $M$.
	For any vector $v$ in the span of $\{\mu_i\}_{i=1}^\cl$, as long as  $\lambda>5\left(\psi_{\max}^2+C\sqrt{\frac{d\log^2 n}{n}} \right)$ we have $\|U_{\cl-1}^Tv\|\geq \|v\|/2$ with probability at least $1-\tilde{O}(n^{-d})$.
\end{lemma}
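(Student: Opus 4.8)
The plan is to show that the top $\cl-1$ eigenspace $\hat V=\mathrm{range}(U_{\cl-1})$ of $\hat S$ is close, in principal angle, to the span $V^*=\mathrm{span}\{\mu_i\}$ of the cluster means, and then to convert this subspace proximity into the pointwise bound $\|U_{\cl-1}^Tv\|\ge\|v\|/2$ for every $v\in V^*$. First I would pin down the population geometry. By Lemma~\ref{lem:cov-conc}, $S=M+cI_\dim$ with $c:=\sum_k\pi_k\sigma_k^2$, so $S$ and $M$ have identical eigenvectors and $S$ only shifts the spectrum of $M$ by the scalar noise floor $c$. Under Assumption~\ref{as:colinear}, $M$ has rank exactly $\cl-1$, hence $\eig_{\cl-1}(M)=\lambda>0=\eig_{\cl}(M)$, and the top $\cl-1$ eigenspace of $M$ (equivalently of $S$) is precisely $V^*=\mathrm{range}(M)$. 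In particular the spectral gap at the $(\cl-1\mid\cl)$ boundary is exactly $\lambda$.

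Next I would control the perturbation $\hat S-M$. Since $\hat S$ is formed from the $n_1=n/\log n$ points of $P_1$, applying Lemma~\ref{lem:cov-conc} with $n$ replaced by $n_1$ gives $\|\hat S-S\|\le C\sqrt{d\log^2 n/n}$ with probability $1-\tilde O(n^{-d})$, the extra $\log n$ arising from $n_1=n/\log n$. Combining this with $\|S-M\|=c\le\psi_{\max}^2$, where I use that the variance of a sub-gaussian is controlled by its sub-gaussian norm up to a constant absorbed into the factor below, yields $\|\hat S-M\|\le\psi_{\max}^2+C\sqrt{d\log^2 n/n}$.

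I would then invoke the Davis--Kahan $\sin\Theta$ theorem for the pair $M$ and $\hat S=M+(\hat S-M)$ across the gap $\lambda$: writing $\hat P=U_{\cl-1}U_{\cl-1}^T$ and $P^*$ for the projections onto $\hat V$ and $V^*$, it gives $\|(I-\hat P)P^*\|=\|\sin\Theta(\hat V,V^*)\|\le\|\hat S-M\|/\lambda$. The hypothesis $\lambda\ge 5(\psi_{\max}^2+C\sqrt{d\log^2 n/n})$, with the constant $5$ absorbing the Davis--Kahan constant and the bound $c\le\psi_{\max}^2$, then forces $\|\sin\Theta(\hat V,V^*)\|\le 1/5$. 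To finish, for any $v\in V^*$ I use $P^*v=v$ and $\|U_{\cl-1}^Tv\|=\|\hat Pv\|$ (since $U_{\cl-1}$ has orthonormal columns); then $\|(I-\hat P)v\|=\|(I-\hat P)P^*v\|\le\|\sin\Theta(\hat V,V^*)\|\,\|v\|$ together with Pythagoras give $\|\hat Pv\|^2=\|v\|^2-\|(I-\hat P)v\|^2\ge(1-1/25)\|v\|^2$, so that $\|U_{\cl-1}^Tv\|\ge\|v\|/2$.

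The main obstacle is the perturbation step rather than the linear algebra. Because the cluster means are well separated, the mixture generating $Y$ is not globally sub-gaussian with small norm, so one cannot apply covariance concentration to $Y$ directly; the decomposition $\hat S-M=(\hat S-S)+(S-M)$ is what makes the argument work, isolating the genuinely sub-gaussian fluctuation $\hat S-S$, handled at the stated rate by Lemma~\ref{lem:cov-conc}, from the deterministic noise-floor shift $S-M=cI_\dim$. Propagating $c\le\psi_{\max}^2$ through the gap $\lambda$ is precisely what produces the additive $\psi_{\max}^2$ term in the separation condition, and verifying that Assumption~\ref{as:colinear} really does place $\eig_{\cl}(M)=0$, so that the relevant gap is $\lambda$ and not something smaller, is the point that must be checked with care.
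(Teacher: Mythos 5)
Your argument is correct, and it reaches the same conclusion by a different route. The paper does not invoke Davis--Kahan; it proves the eigenspace alignment from scratch by contradiction: assuming $\|U_{\cl-1}^Tv\|<\|v\|/2$, it extracts a unit vector $u\in\mathrm{span}(U_{\cl:\dim})$ with $|u^Tv|>\|v\|/2$, lower-bounds $\|\hat Su\|$ by $\tfrac{1}{2}(\lambda-3(\sigma_{\max}^2+\epsilon))$ using the decomposition $\hat S=M+\bar\sigma^2 I_\dim+R$, and contradicts the Weyl upper bound $\|\hat Su\|\le\lambda_{\cl}(\hat S)\le\sigma_{\max}^2+\epsilon$. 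The key decomposition $\hat S-M=(\hat S-S)+(S-M)$ with $S-M=\bar\sigma^2 I_\dim$, the use of Lemma~\ref{lem:cov-conc} at sample size $n_1=n/\log n$, the bound $\bar\sigma^2\le\psi_{\max}^2$, and the role of the threshold $5(\psi_{\max}^2+\epsilon)$ are identical in both proofs; what differs is only the packaging of the perturbation step. Your $\sin\Theta$ formulation is more modular and arguably cleaner: it avoids the paper's need to bound $\|\hat Sv^\perp\|$ for a vector $v^\perp$ that is constructed only to be orthogonal to $v$ (the paper asserts it is orthogonal to the span of $M$, which requires a word of justification), and it delivers the bound uniformly over all $v\in V^*$ in one stroke rather than vector by vector. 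The paper's version buys self-containedness and avoids worrying about which Davis--Kahan constant ($\|E\|/\delta$ versus $2\|E\|/\lambda$) is in force, though as you note any standard version yields $\|\sin\Theta\|\le 2/5$ and hence $\|U_{\cl-1}^Tv\|\ge\sqrt{1-4/25}\,\|v\|\ge\|v\|/2$. One further observation: since $S-M$ is a multiple of the identity it does not rotate eigenvectors at all, so comparing $\hat S$ to $S$ rather than to $M$ would let the gap condition depend only on $\epsilon$ and not on $\psi_{\max}^2$; neither you nor the paper exploits this, and it is not needed for the statement as given.
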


\begin{proof}
	Take $n_1=\frac{n}{\log n}$ and $v$ to be a vector in the span of $\{\mu_i\}_{i=1}^\cl$. % and let $v'=U_{\cl-1}^Tv$ and $v^{\perp}=U_{\cl:\dim}^Tv$. 
	By definition, we have $\|Mv\|\geq \lambda\|v\|$. Let $R=\hat{S}-S$. Denote $\bar{\sigma}^2 = \sum_i \pi_i \sigma_i^2$, by Lemma~\ref{lem:cov-conc}, $S=M+\bar{\sigma}^2 I_d$. We also know that $\bar{\sigma}^2 \le \sigma_{\max}^2 \le \psi_{\max}^2$ by the property of sub-gaussian distributions.
	Since $S$ is estimated from $P_1$ with $n_1$ points, applying Lemma~\ref{lem:cov-conc} with $n=n_1$ we get $\|R\|\le \epsilon=C\sqrt{\frac{d\log n_1}{n_1}}$.
By Weyl's inequality,  $\|\hat{S}v\|=\|(M+R+\sum_i \sigma_i^2 I_d)v\|\geq (\lambda-\sigma_{\max}^2-\epsilon)\|v\|$.
	Let $U_{\cl:\dim}$ be the eigenspace orthogonal to $U_{\cl-1}$.% and let $\Lambda_{\cl:\dim}$ be the diagonal matrix of eigenvalues corresponding to these eigenvectors. 
	
	Assume the contradiction that $\|U_{\cl-1}^Tv\| < \|v\|/2$. Then there has to be a unit $d$ dimensional vector $u\in \text{span}(U_{\cl:\dim})$, such that $|u^Tv|>\|v\|/2$. 
	On one hand, if we write $u=c\frac{v}{\|v\|}+\sqrt{1-c^2}v^{\perp}$, for $|c|>1/2$ and some unit vector $v^{\perp}$ orthogonal to $v$, we have $\|\hat{S}u\|\geq \frac{\lambda-\sigma_{\max}^2-\epsilon}{2}-\sqrt{1-c^2}\|\hat{S}v^\perp\|$. Note $\|\hat{S}v^\perp\|=\|(M+R+\bar{\sigma}^2 I_d)v^\perp\|$. Since $v^\perp$ is orthogonal to the span of $M$, $\|\hat{S}v^\perp\|\leq (\sigma^2_{\max}+\epsilon) $. Hence 
	\ba{
	\|\hat{S}u\|\geq \frac{\lambda-3(\sigma_{\max}^2+\epsilon)}{2}.
	\label{eq:Sulowerbound}}
	
	 On the other hand, since $u\in \text{span}(U_{\cl:\dim})$, by Weyl's inequality, $\|\hat{S}u\|\leq |\lambda_k(\hat{S})|\leq \sigma_{\max}^2+\epsilon$. This contradicts with Eq.~\eqref{eq:Sulowerbound} since we assume $\lambda>5(\psi_{\max}^2+\epsilon)\ge 5(\sigma_{\max}^2+\epsilon)$. The result is proven by contradiction.
\end{proof}
\begin{remark}
Note that the result can be generalized to non-spherical case as long as the largest eigenvalue of covariance matrix for each cluster is bounded. 
\end{remark}

We are now ready to prove Lemma~\ref{lem:dimreduce}.
\begin{proof}[Proof of Lemma~\ref{lem:dimreduce}]
	Recall that $Y'_i=U_{\cl-1}^T Y_i$ where $U_{\cl-1}$ and $Y_i$ are from two different partitions and hence independent. Let $Z_i\in [\cl]$ denote that latent variable associated with $i$. Thus, $E[Y'_i|Z_i=a,P_2]=U_{\cl-1}^TE[Y_i|Z_i=a]=U_{\cl-1}^T\mu_a$. Thus the means of the new mixture are $\mu'_a:=U_{\cl-1}^T\mu_a$ and the covariance matrix is isotropic, i.e. $E[(Y'_i-\mu'_a)(Y'_i-\mu'_a)^T|P_2,Z_i=a]=\sigma_a^2 I_{r-1}$.  Furthermore, using Lemma~\ref{lem:lt_after_proj} we have
	$\min_{k\neq \ell} \|\mu'_k-\mu'_\ell\|=\min_{k\neq \ell} \|U_{\cl-1}^T (\mu_k-\mu_\ell)\|\geq \|d_{\min}\|/2$. Since this requires an application of Lemma~\ref{lem:lt_after_proj} to each of the vectors $\mu_k-\mu_\ell$, $k,\ell\in[\cl]$, the success probability is at least $1-\tilde{O}(\cl{^2}n^{-d})$ by union bound.
	 %\rd Bowei, for any other subgaussian distribution, $Y_i$ may actually be dependent. Because to my knowledge, isotropic implies independence only for gaussians. Also, in our original analysis, I dont think we are using independence anywhere. Which seems strange, are we? If we are not, then the low dimensional projection will hold even for subgaussian variables. Even if we are, then by sample splitting, we should be able to apply this to subgaussian r.v.'s. Look at paragraph after Proof of theorem 1 in Chaudhuri. She is doing exactly that. Note that in her case $\Sigma$ can be full rank so this splitting is essential.\bk
\end{proof}

\section{From $X$ to cluster labels}

From some solution matrix $\xh$, we can apply spectral clustering on it to get the cluster labels. Below we present a theorem that bounds the misclassification error by the Frobenius norm of matrix difference. The proof technique is inspired by those in~\cite{rohe2011spectral,yan2016robustness}.

\begin{theorem}
The number of misclassification nodes is bounded by $ 64\mmax \|\xh-X_0\|_F^2$.
\end{theorem}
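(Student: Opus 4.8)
The plan is to pass from the Frobenius bound $\|\xh-X_0\|_F$ to a misclassification count via the eigenvectors of the two matrices, in three steps. First I would record the spectral structure of the ideal matrix. Since $X_0=Z\,\diag(1/m_1,\dots,1/m_\cl)Z^T$ satisfies $X_0^2=X_0$, it is a rank-$\cl$ orthogonal projection whose only nonzero eigenvalue is $1$, with multiplicity $\cl$. Its top-$\cl$ eigenvector matrix $U_0$ can be taken to have columns $m_k^{-1/2}\mathbf{1}_{C_k}$, so that the $i$-th row of $U_0$ equals $m_k^{-1/2}e_k$ whenever $i\in C_k$. Hence $U_0$ has exactly $\cl$ distinct rows, and for $k\ne\ell$ the corresponding rows are separated by $\|m_k^{-1/2}e_k-m_\ell^{-1/2}e_\ell\|^2=1/m_k+1/m_\ell\ge 2/\mmax$. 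This separation, $\delta_{\min}^2:=2/\mmax$, is what converts a Frobenius error into a node count.

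Second, I would control the perturbation of the eigenvectors. Let $\hat U$ denote the top-$\cl$ eigenvector matrix of $\xh$. Because the eigengap of $X_0$ at index $\cl$ is $\eig_\cl(X_0)-\eig_{\cl+1}(X_0)=1-0=1$, the Davis--Kahan theorem (in the Frobenius form of Yu--Wang--Samworth) produces an orthogonal $O$ with $\|\hat U-U_0O\|_F\le 2\sqrt{2}\,\|\xh-X_0\|\le 2\sqrt{2}\,\|\xh-X_0\|_F$, so that $\|\hat U-U_0O\|_F^2\le 8\|\xh-X_0\|_F^2$. Spectral clustering then runs $k$-means on the rows of $\hat U$; writing the output as a centroid matrix $\hat C=\hat\Theta\hat M$ with $\hat\Theta$ a membership matrix, optimality of $k$-means against the feasible competitor $U_0O=\Theta_0(M_0O)$ gives $\|\hat C-\hat U\|_F\le\|U_0O-\hat U\|_F$. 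The triangle inequality then upgrades this to $\|\hat C-U_0O\|_F\le 2\|\hat U-U_0O\|_F$, i.e. $\|\hat C-U_0O\|_F^2\le 32\|\xh-X_0\|_F^2$.

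Third, I would count. Call node $i$ \emph{bad} if $\|\hat C(i,:)-(U_0O)(i,:)\|\ge\delta_{\min}/2$. Since $(U_0O)(i,:)$ is the true (rotated) centroid of node $i$ and distinct true centroids lie at distance at least $\delta_{\min}$, any good node is assigned the centroid matching its own cluster, so the bad set contains all misclassified nodes. Each bad node contributes at least $\delta_{\min}^2/4=1/(2\mmax)$ to $\|\hat C-U_0O\|_F^2$, whence the number of bad nodes is at most $2\mmax\|\hat C-U_0O\|_F^2\le 64\,\mmax\|\xh-X_0\|_F^2$, which is exactly the claimed bound (the constant $64=8\cdot 4\cdot 2$ tracks the Davis--Kahan factor, the $k$-means factor, and the separation factor respectively).

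The step I expect to be the main obstacle is the implicit label-matching argument in the third step: to conclude that good nodes are \emph{genuinely} correctly classified one needs the estimated centroids to be in one-to-one correspondence with the true clusters, i.e. no two true clusters collapse to a single estimated centroid and no cluster is split. This is the standard confusion-matrix argument, and it requires the total error to be smaller than $\mmin$ so that each true cluster retains a strict majority of good nodes and the global label permutation $O$ is well defined; I would verify it holds throughout the regime where the bound is non-vacuous. The only other point requiring care is fixing the precise Davis--Kahan constant, since different sin-$\Theta$ conventions change the leading factor, and it is this constant (together with the factor-$2$ from $k$-means optimality and $\delta_{\min}^2=2/\mmax$) that pins down the value $64$.
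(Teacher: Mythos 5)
Your proposal is correct and follows essentially the same route as the paper's proof: the Yu--Wang--Samworth form of Davis--Kahan with unit eigengap giving the factor $8$, the $k$-means optimality plus triangle inequality giving the factor $4$, and the row separation $\sqrt{2/\mmax}$ with a Markov-type count giving the factor $2\mmax$, for the same constant $64$. The label-matching caveat you flag is handled in the paper exactly as you suggest (a good node's assigned centroid is strictly closer to its own population centroid than to any other), so no genuinely different ideas are involved.
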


\begin{proof}
Let $\hat{U}$ be the top $\cl$ eigenvectors of $\xhat$, $U\in \bR^{n\times \cl}$ be the top $\cl$ eigenvector of $X_0$. Let $\nu\in \bR^{\cl\times \cl}$ be the population value of the eigenvector corresponding to each cluster, $U=Z \nu$. 
By Davis-Kahan theorem~\cite{yu2014useful}, we have
\ba{
\|\hat{U}-U O\|_F^2 \le \frac{8\|\xh-X_0\|_F^2}{(\theta_{\cl}(X_0)-\theta_{\cl+1}(X_0))^2} = 8\|\xh-X_0\|_F^2
\label{eq:davis}
}
Define $\cM = \{i: \|c_i-Z_i\nu O\| \ge \frac{1}{\sqrt{2\mmax}}\}$. We now prove that $\cM$ is a superset of all misclassified nodes by the above procedure, and its cardinality is bounded as in the theorem statement.
$U$ is a unit basis so we know $I=U^TU=\nu^T Z^T Z \nu=\nu^T\diag(m_1,\cdots, m_r)\nu$. So $\theta_{\min}(\nu^T\nu) \ge \frac{1}{\mmax}$.

Define $\mathcal{C}=\{ M\in \bR^{n\times \cl}: M \text{ has no more than }\cl \text{ unique rows} \}$. Then minimizing the \km objective for $\hat{U}$ is equivalent to 
\[ \min_{\{s_1,\cdots,s_\cl\}\subset \bR^\cl} \sum_i \min_g \|\hat{u}_i - s_g\|_2^2 =\min_{M\in \mathcal{C}} \|\hat{U}-M\|_F^2 \]
So $C=[c_1,\cdots, c_{n}]=\arg\min_{M\in \mathcal{C}} \|\hat{U}-M\|_F^2$ and $\|C-\hat{U}\|\le\|Z\nu O-\hat{U}\|$. $c_i$ is the center assigned to point $i$ by running \km on $\hat{U}$.

Now we prove all points lying outside of $M$ is correctly labeled, or equivalently, $\|c_i-Z_i\nu O\|<\|c_i-Z_j\nu O\|_2$ for all $Z_j\ne Z_i$. To see this, note for $\forall i,j\in [n]$, when $Z_i\ne Z_j$, 
\begin{align*}
\|Z_i\nu-Z_j\nu\| =& \|(Z_i-Z_j)\nu\| \ge \sqrt{2}\min_{x:\|x\|^2=1} \sqrt{x^T\nu^T\nu x} \ge \sqrt{\frac{2}{\mmax}}
\end{align*}
So 
\begin{align}
\|c_i-Z_j\nu O\|_2 \ge \|Z_i\nu-Z_j\nu\| -\|c_i-Z_i\nu O\|\ge \sqrt{\frac{2}{\mmax}}-\sqrt{\frac{1}{2\mmax}}=\sqrt{\frac{1}{2\mmax}}
\end{align}
Therefore when $Z_i\ne Z_j$, $\|c_i-Z_i\nu O\|<\sqrt{\frac{\cl}{2n}}\Rightarrow \|c_i-Z_i\nu O\|_2<\|c_i-Z_j\nu O\|_2$, which means node $i$ is correctly clustered.

Below we bound the cardinality of $\cM$. By Markov's inequality,
\bas{
|\cM| \le& 2\mmax \sum_{i\in [n]} \| c_i-Z_i\nu O \|_F^2 \\
=& 2\mmax \|C -U O\|_F^2\\
 \le& 2\mmax (\|C -\hat{U} \|_F+\|\hat{U}-U O\|_F)^2
 }
 Note 
 \bas{
\|C - \hat{U} \|_F^2 \le \|\hat{U}-UO\|_F^2
}
Therefore, we have
\ba{
 |\cM| \le 8\mmax \|\hat{U}-U O\|_F^2\label{eq:u_bound}
 }

Combining with Eq.~\eqref{eq:davis}, we have
\bas{
|\cM| \le 64\mmax \|\xh-X_0\|_F^2
}
\end{proof}

\end{document}